\pgfplotsset{compat=1.10}
\begin{document}

\newtheorem{thm}{Theorem}[section]
\newtheorem{cor}[thm]{Corollary}
\newtheorem{prop}[thm]{Proposition}
\newtheorem{lem}[thm]{Lemma}
%
\theoremstyle{definition}
\newtheorem{rem}[thm]{Remark}
\newtheorem{defn}[thm]{Definition}
\newtheorem{note}[thm]{Note}
\newtheorem{eg}[thm]{Example}
\newcommand{\Prf}{\noindent\textbf{Proof.\ }}
\newcommand{\bx}{\hfill$\blacksquare$\medbreak}
\newcommand{\upbx}{\vspace{-2.5\baselineskip}\newline\hbox{}%
\hfill$\blacksquare$\newline\medbreak}
\newcommand{\eqbx}[1]{\medbreak\hfill\(\displaystyle #1\)\bx}

\newcommand{\FFock}{\mathcal{F}}
\newcommand{\kil}{\mathsf{k}}
\newcommand{\Hil}{\mathsf{H}}
\newcommand{\hil}{\mathsf{h}}
\newcommand{\Kil}{\mathsf{K}}
\newcommand{\Real}{\mathbb{R}}
\newcommand{\Rplus}{\Real_+}

\newcommand{\bC}{{\mathbb{C}}}
\newcommand{\bD}{{\mathbb{D}}}
\newcommand{\bK}{{\mathbb{K}}}
\newcommand{\bN}{{\mathbb{N}}}
\newcommand{\bQ}{{\mathbb{Q}}}
\newcommand{\bR}{{\mathbb{R}}}
\newcommand{\bT}{{\mathbb{T}}}
\newcommand{\bX}{{\mathbb{X}}}
\newcommand{\bZ}{{\mathbb{Z}}}
\newcommand{\bH}{{\mathbb{H}}}
\newcommand{\BH}{{\B(\H)}}
\newcommand{\bsl}{\setminus}
\newcommand{\ca}{\mathrm{C}^*}
\newcommand{\cstar}{\mathrm{C}^*}
\newcommand{\cenv}{\mathrm{C}^*_{\text{env}}}
\newcommand{\rip}{\rangle}
\newcommand{\ol}{\overline}
\newcommand{\td}{Widetilde}
\newcommand{\Wh}{Widehat}
\newcommand{\sot}{\textsc{sot}}
\newcommand{\Wot}{\textsc{wot}}
\newcommand{\Wotclos}[1]{\ol{#1}^{\textsc{wot}}}
 \newcommand{\A}{{\mathcal{A}}}
 \newcommand{\B}{{\mathcal{B}}}
 \newcommand{\C}{{\mathcal{C}}}
 \newcommand{\D}{{\mathcal{D}}}
 \newcommand{\E}{{\mathcal{E}}}
 \newcommand{\F}{{\mathcal{F}}}
 \newcommand{\G}{{\mathcal{G}}}
\renewcommand{\H}{{\mathcal{H}}}
 \newcommand{\I}{{\mathcal{I}}}
 \newcommand{\J}{{\mathcal{J}}}
 \newcommand{\K}{{\mathcal{K}}}
\renewcommand{\L}{{\mathcal{L}}}
 \newcommand{\M}{{\mathcal{M}}}
 \newcommand{\N}{{\mathcal{N}}}
\renewcommand{\O}{{\mathcal{O}}}
\renewcommand{\P}{{\mathcal{P}}}
 \newcommand{\Q}{{\mathcal{Q}}}
 \newcommand{\R}{{\mathcal{R}}}
\renewcommand{\S}{{\mathcal{S}}}
 \newcommand{\T}{{\mathcal{T}}}
 \newcommand{\U}{{\mathcal{U}}}
 \newcommand{\V}{{\mathcal{V}}}
 \newcommand{\W}{{\mathcal{W}}}
 \newcommand{\X}{{\mathcal{X}}}
 \newcommand{\Y}{{\mathcal{Y}}}
 \newcommand{\Z}{{\mathcal{Z}}}

\newcommand{\fA}{{\mathfrak{A}}}
\newcommand{\fB}{{\mathfrak{B}}}
\newcommand{\fC}{{\mathfrak{C}}}
\newcommand{\fD}{{\mathfrak{D}}}
\newcommand{\fE}{{\mathfrak{E}}}
\newcommand{\fF}{{\mathfrak{F}}}
\newcommand{\fG}{{\mathfrak{G}}}
\newcommand{\fH}{{\mathfrak{H}}}
\newcommand{\fI}{{\mathfrak{I}}}
\newcommand{\fJ}{{\mathfrak{J}}}
\newcommand{\fK}{{\mathfrak{K}}}
\newcommand{\fL}{{\mathfrak{L}}}
\newcommand{\fM}{{\mathfrak{M}}}
\newcommand{\fN}{{\mathfrak{N}}}
\newcommand{\fO}{{\mathfrak{O}}}
\newcommand{\fP}{{\mathfrak{P}}}
\newcommand{\fQ}{{\mathfrak{Q}}}
\newcommand{\fR}{{\mathfrak{R}}}
\newcommand{\fS}{{\mathfrak{S}}}
\newcommand{\fT}{{\mathfrak{T}}}
\newcommand{\fU}{{\mathfrak{U}}}
\newcommand{\fV}{{\mathfrak{V}}}
\newcommand{\fW}{{\mathfrak{W}}}
\newcommand{\fX}{{\mathfrak{X}}}
\newcommand{\fY}{{\mathfrak{Y}}}
\newcommand{\fZ}{{\mathfrak{Z}}}

\newcommand{\sgn}{\operatorname{sgn}}
\newcommand{\rank}{\operatorname{rank}}
\newcommand{\supp}{\operatorname{supp}}
\newcommand{\dist}{\operatorname{dist}}
\newcommand{\Aut}{\operatorname{Aut}}
\newcommand{\Aff}{\operatorname{Aff}}
\newcommand{\Cknet}{{\mathcal{C}_{{\rm Knet}}}}
\newcommand{\Ckag}{{\mathcal{C}_{{\rm kag}}}}
\newcommand{\GL}{\operatorname{GL}}
\newcommand{\spn}{\operatorname{span}}

\title[Crystal flex bases and the RUM spectrum]{Crystal flex bases and the RUM spectrum}

\author[G. Badri, D. Kitson and S. C. Power]{G. Badri, D. Kitson and S. C. Power}
\thanks{Supported by EPSRC grant EP/P01108X/1.}

\address{Dept. of Mathematical Sciences.\\Umm Al-Qura University \\
Saudi Arabia }

\email{gmbadri@uqu.edu.sa}

\address{Dept.\ Math.\ Stats.\\ Lancaster University\\
Lancaster LA1 4YF \\U.K. }
\email{d.kitson@lancaster.ac.uk}
\email{s.power@lancaster.ac.uk}
\thanks{2000 {\it  Mathematics Subject Classification.}
52C25, 74N05}

\date{}

\begin{abstract}
A theory of free spanning sets, free bases and their space group symmetric variants is developed for the first order flex spaces of infinite bar-joint frameworks.  
Such spanning sets and bases are computed for a range of fundamental crystallographic bar-joint frameworks, including the honeycomb (graphene) framework, the octahedron (perovskite) framework and the 2D and 3D kagome frameworks. It is also shown that the existence of crystal flex bases is closely related to linear structure in the rigid unit mode (RUM) spectrum and a more general \emph{geometric flex spectrum}. 
\end{abstract}

\maketitle

\section{Introduction}\label{s:intro}
 The analysis of the rigidity and flexibility of periodic infinite bond-node structures is an ongoing endeavour in materials science and pure mathematics
(Connelly, Ivi\'c  and Whiteley \cite{con-whi}, Guest, Fowler and Power \cite{gue-fow-pow}). In this connection
it is well known in crystallography and engineering that many crystals and periodic structures which are critically coordinated exhibit a rich set of localised zero energy crystal vibrations (phonons) (Giddy et-al \cite{gid-et-al}, Dove et-al \cite{dov-hei-ham}, \cite{dov-exotic}, Wegner \cite{weg}) and localised infinitesimal mechanisms (Hutchinson and Fleck \cite{hut-fle}).
Inspired by this we wish to understand for which crystal frameworks there exists a finite or countable set of localised modes $u_1, u_2,\dots $ which tells the whole story, so to speak, by providing a \emph{complete set} in the sense that every first order (ie. infinitesimal) flex may be expressed as an infinite linear combination
\[
u = \sum_{n=1}^\infty\alpha_nu_n,\quad \alpha_n \in \bR.
\]
In the dynamical theory of a crystal framework $\C$ it is common to limit attention to first order motions with some form of periodicity, or periodicity up to a phase factor for a Bloch wave vector. 
However, one of our motivations is to determine the  structure of the space $\F(\C;\bR)$ of \emph{all} first order flexes, including unbounded ones, as well as the structure of the space $\F_\infty(\C;\bR)$ of all bounded first order flexes.

To approach these problems we formalise, in Section \ref{s:freebases}, the notions of a \emph{free spanning set} and a \emph{free basis} for a space of velocity fields for an arbitrary countable bar-joint framework, with no periodicity assumptions, while in Section \ref{s:crystalbases} we define their space group symmetric counterparts in the case of crystal frameworks. Also, in Section \ref{s:examples} we determine such spanning sets and bases for a range of fundamental examples, including the honeycomb (graphene) framework, the octahedron (perovskite) framework and the 2D and 3D kagome frameworks. Our main result, Theorem \ref{thm:crystal}, shows that if the infinitesimal flex space of a crystallographic bar-joint framework is infinite dimensional then in any space group symmetric free spanning set there necessarily exists a \emph{band-limited} flex, that is, one whose support lies uniformly close to  a proper linear subspace. We show that this requirement implies that there can be obstacles to the existence of crystal flex bases which arise from nonlinearity in the RUM spectrum $\Omega(\C)$, or nonlinearity in a more general \emph{geometric flex spectrum} $\Gamma(\C)$. 
See Theorems \ref{t:omegalinear} and  \ref{t:gammalinear}. 

Recall that the rigid unit modes, or RUMs, of a material crystal are the zero energy oscillation modes observed in the long wavelength limit \cite{gid-et-al}, \cite{dov-exotic}, \cite{weg}.
These vibration modes are bounded and periodic modulo a multiphase factor and they correspond precisely to nonzero infinitesimal flexes with the corresponding boundedness and periodicity properties. See  \cite{bad-kit-pow}, \cite{owe-pow-crystal}, \cite{pow-poly}. The spectrum of multiphase factors for such modes, which is a subset of the $d$-torus known as the RUM spectrum, corresponds to the points of rank degeneracy of a matrix function $\Phi_\C(z)$ which is computable from a choice of periodicity and building block unit for $\C$.

The determination of the space of all real or complex infinitesimal flexes requires an understanding  of unbounded flexes and  in this connection we introduce the \emph{transfer function}
$\Psi_\C(z)$ associated with $\C$ and its periodic structure. This  is an analytic matrix-valued function on the $d$-fold product $\bC_*^d$ of the punctured complex plane $\bC_*=\bC\backslash \{0\}$ defined as the unique extension of  $\Phi_\C(z)$.
The points $z= \omega$ of rank degeneracy of the transfer function indicate the presence of nonzero \emph{geometrically periodic flexes}. 
Such periodicity is characterised by a set of equations of the form
\[
u(p_k) = \omega^k u(p_0) = \omega_1^{k_1}\cdots \omega_d^{k_d}u(p_0)
\]
which relate the velocity $u(p_0)$ of a node $p_0$ in the base unit cell to the velocity $u(p_k)$ of the corresponding node $p_k$ in the cell with label $k\in \bZ^d$. In particular if the multifactor $\omega=(\omega_1,\dots ,\omega_d)$ has $|\omega_i|>1$ for some $i$ then the local velocities $u(p_k)$  are unbounded (or zero) as $k_i\to +\infty$, with $k_j$ fixed for $j \neq i$, and are geometrically decaying as $k_i\to -\infty$. The geometrically periodic flexes are also referred to as \emph{factor periodic} flexes with multifactor $\omega$. The geometric spectrum $\Gamma(\C)$ referred to above is the set of such multifactors, that is, the set of points of rank degeneracy of the transfer function.

While unbounded flexes do not correspond to physical modes for the bulk crystal 
their consideration is nevertheless important for the identification of \emph{surface modes} associated with a boundary wall or free surface. Indeed, such modes can be identified with bounded restrictions of unbounded flexes of the bulk crystal. 
In particular, in 3 dimensions a point $\omega=(\omega_1, \omega_2, \omega_3)$ in $\Gamma(\C)$ with $|\omega_i|\neq 1$ for a single value of $i$ indicates the existence of a first order surface mode associated with a hyperplane boundary wall which is normal to the period vector of the bulk crystal corresponding to $i$. 
Thus, a further motivation for the introduction of the geometric flex spectrum and the identification of crystal flex spanning sets lies in their connections with surface modes and isolated Weyl modes 
(Rocklin et al \cite{roc-et-al}), and their analogies with boundary modes for topological insulators (Graf and Porter \cite{gra-por}, Lubensky et al \cite{lub-et-al}).

The development is organised as follows. In Section \ref{s:freebases} we show, using an abstract nonconstructive argument, that for any countable bar-joint framework, periodic or not, every infinite dimensional linear subspace  of the space of velocity fields has a free basis. In Section \ref{s:groupactions} we consider group actions on free spanning sets; proving several key results which are then applied in Section \ref{s:crystalbases} to crystal flex spanning sets for periodic frameworks. Our first main result, Theorem \ref{thm:crystal}, shows that a crystal flex spanning set for an infinite dimensional space of flexes necessarily contains localised velocity fields which moreover have geometric periodicity relative to their support. 
We also show that the existence of a natural crystal flex basis can lead to a simple description of the bounded flexes as those whose infinite expansion in the basis have bounded coefficients. Following this, the RUM spectrum and the {geometric flex spectrum}, associated with a periodic structure for $\C$, are introduced and we obtain necessary spectral conditions for the existence of various crystal spanning sets. We also pose here the intriguing problem of determining sufficient conditions, including spectral conditions, which ensure the existence of a crystal flex basis. In the final section, which is largely independent of earlier results, we compute crystal flex bases and spanning sets for several fundamental examples.

\section{Free spanning sets and free bases}
\label{s:freebases}
Let $A$ be a non-empty set and let $X$ be a finite dimensional vector space over a field $\bK$, where $\bK=\bR$ or $\bC$. Endow $X$ with a norm and the norm topology.
Let $X^A$ denote the topological vector space of maps $f:A\to X$ with the usual pointwise vector space operations and the topology of pointwise convergence. The support of a map $f$ is denoted $\supp(f)$.

\begin{defn}
A sequence $(f_n)$ in $X^A\backslash \{0\}$
\emph{tends to zero strictly} if,
for each $a\in A$, the sequence $(f_n(a))$ in $X$ has at most finitely many nonzero terms.
\end{defn}

\begin{lem}\label{l:elem}
Let $(f_n)$ be a sequence in $X^A\backslash \{0\}$. The following conditions are equivalent.

\begin{enumerate}[(i)]
\item $(f_n)$ tends to zero strictly.

\item $\sum_{n=1}^\infty \alpha_n f_n(a)$ converges in $X$ for every sequence of scalars $(\alpha_n)$ in $\bK$ and every $a\in A$.
\end{enumerate}
\end{lem}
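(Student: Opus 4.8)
The plan is to prove the two implications separately. The implication (i)$\Rightarrow$(ii) should be essentially a one-line observation, while for (ii)$\Rightarrow$(i) I would argue by contraposition, manufacturing a divergent series directly from the failure of strict convergence to zero.

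For (i)$\Rightarrow$(ii): fix $a\in A$ and a scalar sequence $(\alpha_n)$ in $\bK$. Since $(f_n)$ tends to zero strictly, only finitely many of the vectors $f_n(a)$ are nonzero, so $\sum_{n=1}^\infty\alpha_n f_n(a)$ is in fact a finite sum of vectors in $X$ and therefore converges. Nothing beyond the vector space structure of $X$ is used here.

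For (ii)$\Rightarrow$(i): suppose $(f_n)$ does not tend to zero strictly, so there is a point $a\in A$ for which the set $S:=\{n:f_n(a)\neq 0\}$ is infinite; enumerate it as $n_1<n_2<\cdots$. The idea is to scale each relevant term to norm $1$: put $\alpha_n:=\|f_n(a)\|^{-1}$ for $n\in S$ and $\alpha_n:=0$ for $n\notin S$. These are positive real numbers, hence legitimate scalars whether $\bK=\bR$ or $\bK=\bC$. Then $\|\alpha_{n_k}f_{n_k}(a)\|=1$ for every $k$, so the terms of the series $\sum_n\alpha_n f_n(a)$ do not tend to $0$ in $X$; consequently the series cannot converge (a convergent series in a normed space has terms tending to zero), and since $\bK$-scalars and $a$ were produced violating (ii), this contradiction completes the contrapositive.

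I do not anticipate any real obstacle: the argument rests only on the elementary fact that the general term of a convergent series in a normed space tends to zero, together with the observation that normalising the nonzero values along the relevant index set defeats this. The only point needing a moment's care is to confirm that the reciprocal norms used as coefficients lie in the scalar field, which is immediate.
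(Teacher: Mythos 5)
Your proposal is correct and is essentially identical to the paper's proof: the forward direction is the same finitely-many-nonzero-terms observation, and for the converse the paper also defines $\alpha_n=\|f_{n_j}(a)\|^{-1}$ along the offending subsequence and $0$ elsewhere to produce a divergent series. You merely make explicit the (correct) reason for divergence, namely that the terms of the series do not tend to zero.
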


\proof
If $(i)$ holds and $(\alpha_n)$ is an arbitrary sequence of scalars then, for each $a\in A$, the  series $\sum_{n=1}^\infty \alpha_n f_n(a)$ has finitely many non-zero terms and hence converges in $X$. This establishes $(ii)$. If $(i)$ does not hold, then there exists $a\in A$ and a subsequence $(f_{n_j})$ such that $f_{n_j}(a)$ is non-zero for each $j$. Choose a norm $\|\cdot\|$ on $X$ and for each $n\in\bN$ define, 
\[\alpha_{n} = \left\{\begin{array}{cl}
\frac{1}{\|f_{n_j}(a)\|} & \mbox { if } n=n_j \mbox{ for some } j\in\bN,\\
0 & \mbox{ otherwise. }
\end{array}\right.\]
Then the series $\sum_{n=1}^\infty \alpha_n f_n(a)$ fails to converge in $X$
and so $(ii)$ is not satisfied. 
\endproof

If a sequence $(f_n)$ in $X^A\backslash \{0\}$ tends to zero strictly then, by Lemma \ref{l:elem}, for every sequence of scalars $(\alpha_n)$ in $\bK$ the formal sum $\sum_{n=1}^\infty \alpha_n f_n$ represents an element of $X^A$.
Note that the partial sums, $s_N=\sum_{n=1}^N \alpha_n f_n$, converge to $s=\sum_{n=1}^\infty \alpha_n f_n$ in the strict sense that the sequence $(s-s_N)$ tends to zero strictly in $X^A$.
Also note that, setting $S=\{f_n:n\in\bN\}$, the set $\M(S)= \{\sum_{n=1}^\infty \alpha_n f_n: \alpha_n\in \bK\}$ of all formal sums 
is a vector subspace of $X^A$.

\begin{defn}\label{d:freespanning}
Let $W$ be an infinite dimensional vector subspace of $X^A$.

\begin{enumerate}[(a)]
\item A \textit{free spanning set} for $W$
is a countable subset $S=\{f_1, f_2, \dots \}$ of $W\backslash \{0\}$
for which the sequence $(f_n)$ tends to zero strictly and {satisfies $W\subseteq \M(S)$}.

\item A \textit{free basis} for $W$ is a free spanning set $S=\{f_1, f_2, \ldots \}$ for $W$ such that each $w\in W$ has a unique representation,
\[
w = \sum_{n=1}^\infty \alpha_n f_n,\quad \alpha_n\in \bK.
\]
\end{enumerate}
\end{defn}

\begin{eg}
Let $A=\{a_k:k\in\bN\}$ be a countable set and let $x_1,\ldots,x_d$ be a basis for $X$.
For each $n\in\bN$ and each $\sigma\in \{1,\ldots,d\}$ define,
\[e_{n,\sigma}(a_k) = \left\{
\begin{array}{ll}
x_\sigma & \mbox{ if } n=k, \\
0 & \mbox{ otherwise.}
\end{array}\right.\]
Then $\{e_{n,\sigma}:n\in\bN,\,\sigma\in \{1,\ldots,d\}\}$ is a free basis for $X^A$.
\end{eg}

\subsection{Existence of free bases}
Let $A$ be a countably infinite set and let $\alpha=(A_i)$ be an increasing sequence of finite subsets of $A$ which cover $A$ (i.e.~$A_i\subset A_{i+1}$ for all $i\in\bN$ and $A=\cup_{i\in\bN}\, A_i)$. For each $i\in\bN$, let $\pi_i:X^A\to X^{A_i}$ denote the natural restriction map. 
Let $W$ be an infinite dimensional vector subspace of $X^A$ and let $W_i=\pi_i(W)$ for each $i\in\bN$. 
For all $i,j\in\bN$ with $j\geq i$, let $\pi_{j,i}:W_j\to W_i$ be the natural restriction maps.
Note that the pair $((W_i),(\pi_{j,i}))$ is an {\em inverse sequence} of finite dimensional vector spaces.

\begin{lem}
\label{l:inverse} 
There exists a sequence $(B_j)$ of disjoint finite sets in $W$ with the following properties. 
\begin{enumerate}[(i)]
\item $\pi_i(B_j)=\{0\}$ for all $i<j$.
\item $\pi_j(B_1\cup\cdots\cup B_j)$ is a basis for $W_j$.
\item $|B_1\cup\cdots\cup B_j|=\dim W_j$ for each $j\in\bN$.
\end{enumerate}
\end{lem}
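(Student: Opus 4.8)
The plan is to construct the sequence $(B_j)$ inductively, ensuring at each stage that property (ii) is met and that the new elements vanish under all earlier restriction maps. The key device is that for each $j$ the restriction map $\pi_{j+1,j}\colon W_{j+1}\to W_j$ is a surjective linear map of finite dimensional spaces, so one can lift bases along it.

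\textbf{Base step.} Since $W_1=\pi_1(W)$ is a finite dimensional space, choose a basis $\{w_1,\dots,w_{d_1}\}$ of $W_1$ where $d_1=\dim W_1$, and for each lift pick $f\in W$ with $\pi_1(f)=w_\ell$. Let $B_1$ be this set of lifts. Then $\pi_1(B_1)$ is a basis for $W_1$ and $|B_1|=\dim W_1$; conditions (i) and (ii) hold vacuously/directly for $j=1$.

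\textbf{Inductive step.} Suppose $B_1,\dots,B_j$ have been constructed, disjoint, with $\pi_i(B_k)=\{0\}$ for $i<k\le j$, with $\pi_j(B_1\cup\cdots\cup B_j)$ a basis of $W_j$, and $|B_1\cup\cdots\cup B_j|=\dim W_j$. First I would observe that $\pi_{j+1,j}$ maps $\pi_{j+1}(B_1\cup\cdots\cup B_j)$ to the basis $\pi_j(B_1\cup\cdots\cup B_j)$ of $W_j$, hence $\pi_{j+1}(B_1\cup\cdots\cup B_j)$ is linearly independent in $W_{j+1}$ and spans a complement of $\ker\pi_{j+1,j}$. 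Extend it to a basis of $W_{j+1}$ by adjoining a basis $\{v_1,\dots,v_r\}$ of $\ker\pi_{j+1,j}\subseteq W_{j+1}$, where $r=\dim W_{j+1}-\dim W_j$. For each $v_\ell$ choose $g_\ell\in W$ with $\pi_{j+1}(g_\ell)=v_\ell$; since $v_\ell\in\ker\pi_{j+1,j}$ we have $\pi_j(g_\ell)=\pi_{j,j+1}(\pi_{j+1}(g_\ell))=0$, and a fortiori $\pi_i(g_\ell)=0$ for all $i\le j$ because the $A_i$ are increasing. Set $B_{j+1}=\{g_1,\dots,g_r\}$. By construction $\pi_i(B_{j+1})=\{0\}$ for $i\le j$, which is (i); and $\pi_{j+1}(B_1\cup\cdots\cup B_{j+1})=\pi_{j+1}(B_1\cup\cdots\cup B_j)\cup\{v_1,\dots,v_r\}$ is the chosen basis of $W_{j+1}$, which is (ii). Disjointness of $B_{j+1}$ from the earlier sets can be arranged since the $g_\ell$ may be perturbed within their fibres, or simply noted: any element of $B_{j+1}$ restricts to $0$ on $A_j$ whereas the nonzero images under $\pi_j$ of earlier basis elements prevent coincidence after possibly discarding trivial overlaps; if necessary one replaces $g_\ell$ by $g_\ell$ plus an element of $W$ supported off $A_{j+1}$ to break ties. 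Finally (iii) follows by counting: $|B_1\cup\cdots\cup B_{j+1}|=|B_1\cup\cdots\cup B_j|+r=\dim W_j+(\dim W_{j+1}-\dim W_j)=\dim W_{j+1}$.

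\textbf{Main obstacle.} The one point needing care is guaranteeing genuine disjointness of the $B_j$ as \emph{sets} of elements of $W$, rather than merely that their restrictions behave well; this is a minor technicality resolved by choosing the lifts $g_\ell$ to be new vectors (e.g.\ by adding suitable elements of $W$ with support disjoint from $A_{j+1}$, which exist because $W$ is infinite dimensional and so is not exhausted by any $\pi_{j+1}^{-1}$-fibre structure), and it does not affect properties (i)--(iii). Everything else is standard finite dimensional linear algebra applied along the inverse sequence $((W_i),(\pi_{j,i}))$.
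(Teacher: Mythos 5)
Your proof is correct and follows essentially the same inductive lifting argument as the paper: split $W_{j+1}$ as the span of $\pi_{j+1}(B_1\cup\cdots\cup B_j)$ plus $\ker\pi_{j+1,j}$, then lift a basis of the kernel. The disjointness worry you raise is actually a non-issue, since every element of $B_1\cup\cdots\cup B_j$ has nonzero image under $\pi_j$ (its image lies in a basis of $W_j$ and the restriction of $\pi_j$ to $B_1\cup\cdots\cup B_j$ is injective by the cardinality count), whereas every element of $B_{j+1}$ is annihilated by $\pi_j$, so no perturbation of the lifts is needed.
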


\proof 
Let $d_i=\dim W_i$ for each $i\in\bN$.
Choose a minimal finite subset $B_1 \subset W$ with the property that $\pi_1(B_1)$ is a  basis for $W_1$. 
Since the restriction map $\pi_{2,1}:W_2\to W_1$ is linear and surjective there exists a subspace $Y_2$ of $W_2$ with $W_2=Y_2\oplus\ker\pi_{2,1}$ and $\dim Y_2 = \dim W_1 = d_1$.
Note that $\pi_1(w) = \pi_{2,1}\circ\pi_2(w)$ for all $w\in W$ and so it follows that $\pi_2(B_1)$ is a linearly independent set and hence a basis for $Y_2$.
Now choose a minimal finite subset $B_2 \subset W$ with the property that $\pi_2(B_2)$ is a basis for $\ker\pi_{2,1}$. Note that $B_1$ and $B_2$ are disjoint.
Then $\pi_1(B_2) = \pi_{2,1}\circ\pi_2(B_2)=0$, $\pi_2(B_1\cup B_2)$ is a basis for $W_2$ and $|B_1\cup B_2|=d_2$. Repeating these arguments the sequence $(B_j)$ can now be constructed inductively.
\endproof

\begin{thm}
\label{t:existence} 
If $A$ is a countably infinite set then every infinite dimensional vector subspace of $X^A$ has a free basis.
\end{thm}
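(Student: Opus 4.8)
The plan is to build the free basis by exhausting $W$ through its finite-dimensional restrictions $W_i = \pi_i(W)$ and using the sets $B_j$ produced by Lemma~\ref{l:inverse}. First I would fix an increasing sequence $(A_i)$ of finite subsets covering the countable set $A$, form the inverse sequence $((W_i),(\pi_{j,i}))$, and invoke Lemma~\ref{l:inverse} to obtain the disjoint finite sets $B_j \subset W$ with $\pi_i(B_j) = \{0\}$ for $i<j$, with $\pi_j(B_1\cup\cdots\cup B_j)$ a basis for $W_j$, and with $|B_1\cup\cdots\cup B_j| = \dim W_j$. Since $W$ is infinite dimensional and each $W_i$ is finite dimensional, the sets $B_j$ cannot all eventually be empty; discarding any empty ones and enumerating $\bigcup_j B_j$ (say by listing $B_1$, then $B_2$, then $B_3$, etc., in some order) gives a countable sequence $(f_n)$ in $W\backslash\{0\}$. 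I would take $S = \{f_n : n \in \bN\}$ as the candidate free basis.

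Next I would verify the three defining properties. \textbf{Strict convergence to zero:} given $a \in A$, pick $i$ with $a \in A_i$; by property~(i) of Lemma~\ref{l:inverse}, every $f \in B_j$ with $j>i$ satisfies $\pi_i(f) = 0$, hence $f(a) = 0$. Since $B_1\cup\cdots\cup B_i$ is finite, only finitely many $f_n$ are nonzero at $a$, so $(f_n)$ tends to zero strictly and $\M(S)$ is well-defined. \textbf{Spanning:} given $w \in W$ and any $i$, the set $\pi_i(B_1\cup\cdots\cup B_i)$ is a basis for $W_i$, so $\pi_i(w)$ is a (finite) linear combination of $\pi_i(f)$ over $f \in B_1\cup\cdots\cup B_i$. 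The key point is a consistency/compatibility argument: because $\pi_j(f) = 0$ on the ``coordinates'' outside the span being used at stage $j$, the coefficients chosen at stage $i$ agree with those at stage $i+1$ when restricted, so they stitch together into a single well-defined scalar sequence $(\alpha_n)$. One then checks $\sum \alpha_n f_n$ and $w$ have the same restriction to every $A_i$, hence are equal in $X^A$ since the $A_i$ cover $A$; thus $W \subseteq \M(S)$. \textbf{Uniqueness:} if $\sum \alpha_n f_n = 0$ in $X^A$, restrict to $A_i$: the terms from $B_j$, $j>i$, vanish, leaving a finite combination of the linearly independent set $\pi_i(B_1\cup\cdots\cup B_i)$ equal to zero, forcing all coefficients indexed within $B_1\cup\cdots\cup B_i$ to be zero; letting $i \to \infty$ gives $\alpha_n = 0$ for all $n$. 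Hence each $w \in W$ has a unique expansion and $S$ is a free basis.

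The main obstacle I anticipate is the stitching step in the spanning argument: one must verify that the coefficient of a fixed $f_n \in B_j$ is genuinely independent of the stage $i \geq j$ at which it is computed. This hinges on observing that, for $i \geq j$, the restriction $\pi_i$ is injective on the span of $B_1\cup\cdots\cup B_i$ in a way compatible across $i$ --- equivalently, that $\pi_i(B_1\cup\cdots\cup B_j)$ remains linearly independent inside $W_i$ (which follows since it extends to the basis $\pi_i(B_1\cup\cdots\cup B_i)$ of $W_i$), so the expansion coefficients at stage $i+1$, when pushed down by $\pi_{i+1,i}$, must match those at stage $i$ by uniqueness of coordinates. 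Once this compatibility is in hand the rest is routine bookkeeping with the pointwise topology on $X^A$.
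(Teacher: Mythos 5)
Your proposal is correct and follows essentially the same route as the paper: it invokes Lemma~\ref{l:inverse} to produce the disjoint sets $B_j$, takes $S=\bigcup_j B_j$, and verifies strict convergence, spanning and uniqueness via the restriction maps $\pi_i$. The compatibility/stitching step you flag is exactly what the paper compresses into the assertion that there is a unique sequence $(h_j)$ with $h_j\in\spn(B_j)$ and $\pi_j(w)=\pi_j(h_1+\cdots+h_j)$, so your more explicit treatment is just a fuller write-up of the same argument.
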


\begin{proof} 
Let $W$ be a subspace of $X^A$ and let  $\alpha=(A_i)$ be an increasing sequence of finite subsets which cover $A$. There exists a sequence $(B_j)$ of disjoint finite sets in $W$ with properties $(i)$-$(iii)$ as in the statement of Lemma \ref{l:inverse}. It remains only to show that the countable set $\cup_{j\in\bN}\, B_j = \{f_n:n\in\bN\}$ is a free basis for $W$. By property $(i)$, the sequence $(f_{n})$ tends to zero strictly.
Suppose $w\in W$. 
By properties $(i)$-$(iii)$, there exists a unique sequence $(h_j)$ in $W$ such that 
$h_j\in \spn(B_j)$ and $\pi_j(w)=\pi_j(h_1+\cdots+h_j)$ for each $j\in\bN$. 
Define $h = \sum_{j=1}^\infty h_j$. Then $w=h\in \M(f_n)$ and so $W\subseteq\M(f_n)$. 
\end{proof}

\begin{rem}
Note that the proof of Theorem \ref{t:existence} is nonconstructive in the sense that the resulting free basis depends both on a choice of covering sequence $\alpha=(A_i)$ and  a process of selection for the sequence $(B_j)$. 
\end{rem} 

Given $W$ and $\alpha$ as above, denote by $\varprojlim W_i$ the {\em inverse limit} of the inverse sequence $((W_i),(\pi_{j,i}))$. Thus $\varprojlim W_i$ is the vector space of all $(f_i)$ in the Cartesian product $\Pi_{i\in\bN} W_i$ with the property that 
$\pi_{j,i}(f_j)=f_i$ whenever $j\geq i$. 
In the following, we consider the linear map $\Phi_\alpha: X^A\to \Pi_{i\in\bN} W_i$, $f\mapsto (\pi_i(f))$.

\begin{lem}
\label{lem:closed}
The following statements are equivalent.
\begin{enumerate}[(i)]
\item $W$ is closed in $X^A$.
\item $\Phi_\alpha(W) = \varprojlim W_i$.
\end{enumerate}
\end{lem}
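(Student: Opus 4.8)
The plan is to realise the inverse limit $\varprojlim W_i$ as a concrete closed subspace of $X^A$ and then to compare that subspace with the closure $\overline{W}$.

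First I would record two elementary facts about $\Phi_\alpha$. Since $\bigcup_i A_i=A$, the map $\Phi_\alpha$ is injective: $\Phi_\alpha(f)=0$ forces $\pi_i(f)=0$ for all $i$, hence $f=0$; and its image is precisely the inverse limit of the full spaces $X^{A_i}$ under the restriction maps, because any compatible family $(g_i)$ (with $g_j|_{A_i}=g_i$ for $j\geq i$) glues to a unique $f\in X^A$ with $\pi_i(f)=g_i$. Next, each $W_i$ is finite dimensional, hence closed in $X^{A_i}$, and as $\pi_i$ is continuous the subspace
\[
\widehat{W} \;:=\; \bigcap_{i\in\bN}\pi_i^{-1}(W_i)
\]
is closed in $X^A$ and contains $W$. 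Using transitivity of the restriction maps one checks that $\Phi_\alpha(W)\subseteq \varprojlim W_i$, and more precisely that $\Phi_\alpha$ restricts to a \emph{bijection} $\widehat{W}\to\varprojlim W_i$: for $f\in\widehat{W}$ one has $\Phi_\alpha(f)=(\pi_i(f))\in\varprojlim W_i$, while conversely every $(g_i)\in\varprojlim W_i$ equals $\Phi_\alpha(f)$ for the glued map $f$, which lies in $\widehat{W}$ since $\pi_i(f)=g_i\in W_i$.

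The crux is the claim that $\widehat{W}\subseteq\overline{W}$. Given $f\in\widehat{W}$ and $i\in\bN$, since $\pi_i(f)\in W_i=\pi_i(W)$ we may choose $w^{(i)}\in W$ with $w^{(i)}|_{A_i}=f|_{A_i}$. For a fixed $a\in A$ there is $i_0$ with $a\in A_{i_0}$, hence $a\in A_i$ and $w^{(i)}(a)=f(a)$ for all $i\geq i_0$; thus $w^{(i)}\to f$ in the topology of pointwise convergence, so $f\in\overline{W}$.

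Granting this, the equivalence follows formally. If $W$ is closed then $W\subseteq\widehat{W}\subseteq\overline{W}=W$, so $W=\widehat{W}$ and therefore $\Phi_\alpha(W)=\Phi_\alpha(\widehat{W})=\varprojlim W_i$. Conversely, if $\Phi_\alpha(W)=\varprojlim W_i=\Phi_\alpha(\widehat{W})$, then injectivity of $\Phi_\alpha$ yields $W=\widehat{W}$, which is closed. I do not expect a genuine obstacle here; the only points needing a little care are the injectivity of $\Phi_\alpha$ together with the identification of $\Phi_\alpha(\widehat W)$ with $\varprojlim W_i$, and the convergence $w^{(i)}\to f$, which relies exactly on the hypothesis that the finite sets $A_i$ exhaust $A$. (Since $A$ is countable, $X^A$ is metrizable, so the sequential reasoning about $\overline{W}$ is unambiguous, though in fact only the triviality that a convergent sequence has its limit in the closure is used.)
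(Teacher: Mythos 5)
Your proof is correct, and the ingredients coincide with those in the paper's proof --- gluing a compatible family to a map in $X^A$, closedness of each finite-dimensional $W_i$, and approximating a glued map pointwise by elements of $W$ that agree with it on $A_i$ --- but the packaging is genuinely different. The paper proves the two implications directly: for (i)$\Rightarrow$(ii) it glues a compatible family $(\pi_i(w_i))$ to a map $w$ and uses closedness of $W$ together with the pointwise convergence $w_j\to w$; for (ii)$\Rightarrow$(i) it takes a pointwise-convergent sequence in $W$ and uses closedness of the $W_i$ to place the limit's restriction sequence in $\varprojlim W_i$. You instead realise $\varprojlim W_i$ as the concrete closed subspace $\widehat{W}=\bigcap_{i}\pi_i^{-1}(W_i)$ of $X^A$ and reduce both implications to the single identity $W=\widehat{W}$, invoking injectivity of $\Phi_\alpha$. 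This buys two things: the byproduct $\overline{W}=\widehat{W}$ (so the closure of any subspace of $X^A$ is $\bigcap_i \pi_i^{-1}(\pi_i(W))$), and an argument that never needs sequences to \emph{detect} closedness --- the paper's (ii)$\Rightarrow$(i) tacitly uses that sequential closedness suffices in $X^A$, which is legitimate since $A$ is countable and $X^A$ is therefore metrizable, but your route sidesteps the point entirely. The paper's version is marginally shorter.
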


\proof
To show $(i)\Rightarrow (ii)$, suppose $W$ is closed in $X^A$. 
Let $(\pi_i(w_i))\in \varprojlim W_i$, where $w_i\in W$ for each $i\in \bN$, and define $w\in X^A$ by setting $w(a) = w_j(a)$ for all $a\in A_j$, and all $j\in\bN$. Note that the sequence $(w_j)$ converges pointwise to $w$. Thus $w\in W$ and $\Phi_\alpha(w)=(\pi_i(w_i))$.

To show $(ii)\Rightarrow (i)$, suppose $\Phi_\alpha(W) = \varprojlim W_i$ and let $(w_n)$ be a sequence in $W$ which converges pointwise to $h\in X^A$. For each $i\in\bN$, the sequence $(\pi_i(w_n))$ in $W_i$ converges pointwise to $\pi_i(h)\in X^{A_i}$. Since $X^{A_i}$ is finite dimensional, $W_i$ is closed and so $\pi_i(h)\in W_i$. Thus the sequence $(\pi_i(h))$ lies in the inverse limit $\varprojlim W_i$ and so there exists $w\in W$ such that $\Phi_\alpha(w) = (\pi_i(h))$. Note that $\pi_i(h) = \pi_i(w)$ for all $i\in\bN$ and so $h=w\in W$.
\endproof

\begin{thm}
\label{t:existence_closed} 
If $A$ is a countably infinite set then every infinite dimensional closed vector subspace $W$ of $X^A$ has a free basis $S=\{f_n:n\in\bN\}$ with $W=\M(S)$.
\end{thm}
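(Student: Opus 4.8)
The plan is to start from the free basis produced by Theorem \ref{t:existence} and use closedness of $W$ to upgrade the inclusion $W\subseteq\M(S)$ to the equality $W=\M(S)$; that is, the only thing left to prove is $\M(S)\subseteq W$.

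First I would fix an increasing covering sequence $\alpha=(A_i)$ of finite subsets of $A$ and let $S=\{f_n:n\in\bN\}=\cup_{j\in\bN}B_j$ be the free basis of $W$ constructed in the proof of Theorem \ref{t:existence}, with the $B_j$ as in Lemma \ref{l:inverse}. Theorem \ref{t:existence} already guarantees that $(f_n)$ tends to zero strictly and that $W\subseteq\M(S)$, so for any scalar sequence $(\alpha_n)$ the formal sum $h=\sum_{n=1}^\infty\alpha_nf_n$ is a well-defined element of $X^A$, and the remaining task is to check $h\in W$.

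The key observation is that the partial sums $s_N=\sum_{n=1}^N\alpha_nf_n$ lie in $W$, being finite linear combinations of elements of $S\subset W$, and that $s_N\to h$ in $X^A$: by the remark following Lemma \ref{l:elem}, $(h-s_N)$ tends to zero strictly, so for each $a\in A$ one has $s_N(a)=h(a)$ for all sufficiently large $N$, whence $s_N(a)\to h(a)$. Since $W$ is closed in the topology of pointwise convergence, $h=\lim_N s_N\in W$. Thus $\M(S)\subseteq W$, and combining with Theorem \ref{t:existence} gives $W=\M(S)$; uniqueness of representations, hence the free basis property, is inherited unchanged from Theorem \ref{t:existence}. (Alternatively one can route the argument through Lemma \ref{lem:closed}: property (i) of Lemma \ref{l:inverse} makes each $\pi_i(h)$ a \emph{finite} linear combination of elements of $W_i$, the tower $(\pi_i(h))$ is consistent, so it lies in $\varprojlim W_i=\Phi_\alpha(W)$, and choosing $w\in W$ with $\Phi_\alpha(w)=(\pi_i(h))$ forces $w=h$ since the $A_i$ cover $A$.)

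There is no substantive obstacle here: the entire content is the recognition that closedness of $W$ in the pointwise topology is precisely the hypothesis needed to pass from finite partial sums to infinite strict sums. The only point requiring a small amount of care is to invoke a free basis to which Lemma \ref{l:inverse} and Theorem \ref{t:existence} genuinely apply, so that the representation of elements of $W$ is available for free and one is reduced to the single inclusion $\M(S)\subseteq W$.
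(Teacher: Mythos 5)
Your proposal is correct. Your parenthetical alternative is in fact exactly the paper's argument: the paper takes the free basis from Theorem \ref{t:existence}, observes via properties $(i)$ and $(ii)$ of Lemma \ref{l:inverse} that each $\pi_i(h)$ for $h\in\M(S)$ is a finite linear combination landing in $W_i$, so that $(\pi_i(h))\in\varprojlim W_i$, and then invokes Lemma \ref{lem:closed} to produce $w\in W$ with $\Phi_\alpha(w)=(\pi_i(h))$, forcing $h=w\in W$. Your primary route is a mild but genuine simplification: you bypass the inverse-limit machinery altogether by noting that the partial sums $s_N=\sum_{n=1}^N\alpha_nf_n$ lie in $W$ and converge to $h$ pointwise (indeed, for each $a\in A$ one has $s_N(a)=h(a)$ eventually, by the strict convergence remark after Lemma \ref{l:elem}), so closedness of $W$ in the product topology gives $h\in W$ directly. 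What the paper's route buys is consistency with the $\varprojlim W_i$ formalism it has already set up in Lemma \ref{lem:closed} (which it needs anyway to characterise closedness); what your route buys is that it works verbatim and makes transparent that the only role of closedness is to pass from finite partial sums to infinite strict sums. Both correctly reduce the theorem to the single inclusion $\M(S)\subseteq W$, with the reverse inclusion and uniqueness inherited from Theorem \ref{t:existence}.
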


\begin{proof} 
Let $W$ be a closed subspace of $X^A$ and let  $\alpha=(A_i)$ be an increasing sequence of finite subsets which cover $A$. Let $S=\{f_n:n\in\bN\}$ be the free basis for $W$ obtained from a sequence $(B_j)$ as in the proof of Theorem \ref{t:existence}.
Let $h\in \M(S)$. 
By properties $(i)$ and $(ii)$ of the sequence $(B_j)$, $\pi_i(h)\in W_i$ for each $i\in\bN$ and so $(\pi_i(h))\in \varprojlim W_i$.
Since $W$ is closed, by Lemma \ref{lem:closed} there exists $w\in W$ such that $\Phi_\alpha(w)=(\pi_i(h))$. Now $h=w\in W$ and so $\M(S)\subseteq W$. 
\end{proof}

Given $f\in X^A$ we write $\|f\|_\infty = \sup_{a\in A} \|f(a)\|$. A mapping $f\in X^A$ is {\em bounded} if $\|f\|_\infty<\infty$. Otherwise, $f$ is said to be {\em unbounded}. 

\begin{cor}\label{c:unboundedelements}
Let $A$ be a countably infinite set. If $W$ is  an infinite dimensional closed subspace of $X^A$ then $W$ contains a countable linearly independent set of unbounded elements.
\end{cor}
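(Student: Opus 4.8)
The plan is to extract from a free basis a subsequence with a ``triangular'' evaluation pattern and then synthesise countably many unbounded elements of $W$ by a recursive choice of coefficients. First, apply Theorem~\ref{t:existence_closed} to fix a free basis $S=\{f_n:n\in\bN\}$ for $W$ with $W=\M(S)$; since $W$ is infinite dimensional, $S$ is infinite, each $f_n$ is nonzero, and $(f_n)$ tends to zero strictly. The key preliminary step is to construct, by induction on $k$, indices $n_1<n_2<\cdots$ and points $a_1,a_2,\dots$ in $A$ such that, writing $g_k=f_{n_k}$, one has $g_k(a_k)\neq 0$ while $g_l(a_k)=0$ for every $l>k$. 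Given $g_1,\dots,g_k$ and $a_1,\dots,a_k$, strict nullity of $(f_n)$ ensures each $a_i$ lies in $\supp(f_n)$ for only finitely many $n$, so a sufficiently large choice of $n_{k+1}$ forces $f_{n_{k+1}}$ to vanish at each of $a_1,\dots,a_k$; since $f_{n_{k+1}}\neq 0$ we may then pick $a_{k+1}\in\supp(f_{n_{k+1}})$, which is automatically distinct from $a_1,\dots,a_k$ (in fact the triangular pattern by itself forces the $a_k$ to be pairwise distinct). Being a subsequence of $(f_n)$, the sequence $(g_k)$ also tends to zero strictly, so every formal sum $\sum_k\beta_kg_k$ represents a well-defined element of $\M(S)=W$.

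Next, fix a partition of $\bN$ into infinitely many infinite subsets $I_1,I_2,\dots$, and for each $p\in\bN$ define $w^{(p)}=\sum_{k\in I_p}\beta_k^{(p)}g_k\in W$ by choosing the scalars $\beta_k^{(p)}$ recursively, running through $I_p$ in increasing order. The triangular property is exactly what makes the recursion go through: if $k$ is the $m$-th element of $I_p$, then $g_l(a_k)=0$ for all $l>k$, so
\[
w^{(p)}(a_k)=\beta_k^{(p)}g_k(a_k)+\sum_{l\in I_p,\ l<k}\beta_l^{(p)}g_l(a_k),
\]
and the sum on the right has already been determined by earlier choices; since $g_k(a_k)\neq 0$ we may pick $\beta_k^{(p)}\neq 0$ of large enough modulus that $\|w^{(p)}(a_k)\|\geq m$ (possible because $\|\beta\, g_k(a_k)+c\|\to\infty$ as $|\beta|\to\infty$ for fixed $c$). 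Hence $\|w^{(p)}\|_\infty=\infty$ for every $p$, i.e.\ each $w^{(p)}$ is unbounded.

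Finally, the family $\{w^{(p)}:p\in\bN\}$ is linearly independent by uniqueness of free-basis representations: given a finite relation $\sum_{p\in P}c_pw^{(p)}=0$, expand the left-hand side in the basis $S$; for any $p_0\in P$ and any $k\in I_{p_0}$ the basis vector $g_k$ occurs in this expansion only through the term $w^{(p_0)}$ (the $I_p$ being pairwise disjoint), with coefficient $\beta_k^{(p_0)}\neq 0$, so comparing with the (unique) zero expansion gives $c_{p_0}\beta_k^{(p_0)}=0$ and thus $c_{p_0}=0$. This produces the required countable linearly independent set of unbounded elements of $W$. The genuinely delicate point is the inductive extraction in the first step — simultaneously keeping the evaluation points apart and forcing $g_l(a_k)=0$ for $l>k$ — together with the observation that it is precisely this triangular structure that later licenses the recursive coefficient choice; everything afterwards is routine recursion and bookkeeping with free-basis uniqueness. (Alternatively, a single unbounded element can be produced non-constructively: were every element of $W$ bounded, a Baire category / open mapping argument would force the restriction map of $W$ onto some finite subset of $A$ to be injective, contradicting $\dim W=\infty$; the argument above has the advantage of delivering the whole countable family at once.)
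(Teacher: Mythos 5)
Your proof is correct and follows essentially the same route as the paper: extract from a free basis (via Theorem \ref{t:existence_closed} and strict nullity) a triangular subsequence $g_k$, $a_k$ with $g_k(a_k)\neq 0$ and $g_l(a_k)=0$ for $l>k$, then choose coefficients recursively to force unboundedness. The only cosmetic difference is that you support the $w^{(p)}$ on disjoint index sets and argue independence via free-basis uniqueness, whereas the paper uses nested tails $\sum_{k\geq m}\alpha_{m,k}h_k$ and evaluates at $a_1,a_2,\dots$; both work equally well.
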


\begin{proof}
By Theorem \ref{t:existence_closed}, there exists a free basis $S$ for $W$ with the property that $W=\M(S)$. Since $S$ is countably infinite we may choose  a sequence $(h_k)$ in $S$ and a sequence $(a_k)$ in $A$ so that
\[
h_k(a_k) \neq 0\quad  \mbox{and}\quad  h_k(a_j)=0, \quad  \mbox{for}\quad  1 \leq j \leq k-1.
\]
Replacing  $h_k$ by a scalar multiple of $h_k$ we may assume for convenience that
$\|h_k(a_k)\|=1$ for all $k$.
Choose non-zero scalars $(\alpha_{1,n})$ successively such that 
$\alpha_{1,1}=1$ and $|\alpha_{1,n}|\geq n+\|\sum_{k=1}^{n-1} \alpha_{1,k} h_k(a_n)\|$ for each $n\geq 2$.
Let $g_1= \sum_{k=1}^\infty \alpha_{1,k}h_{k}\in W$ and note that for each $n\in\bN$,
\[
\|g_1(a_n)\| = \|\sum_{k=1}^n \alpha_{1,k} h_{k}(a_n)\|
\geq \|\alpha_{1,n} h_n(a_n)\| - \|\sum_{k=1}^{n-1} \alpha_{1,k} h_k(a_n)\|
\geq n. 
\]
Thus $g_1$ is unbounded. For $m\geq 2$, similarly define 
$
g_m = \sum_{k=m}^\infty \alpha_{m,k} h_{k}
$
with non-zero scalars $(\alpha_{m,k})$ chosen so that $g_m$ is unbounded.
To see that the set $\{g_m:m\in\bN\}$ is linearly independent note that if
$\sum_{m=1}^d\lambda_mg_{m}=0$ for some $\lambda_1,\ldots,\lambda_d\in\bK$ then $\lambda_1\alpha_{1,1}h_{1}(a_{1})=\sum_{m=1}^d\lambda_mg_{m}(a_{1}) = 0$ 
and so $\lambda_1=0$. By similar arguments it follows that $\lambda_j=0$ for each $j=2,\ldots,d$.
\end{proof}

\subsection{Application to bar-joint frameworks}
The results of the previous section apply to the infinitesimal flex spaces of countable bar-joint frameworks (and in particular, to crystallographic bar-joint frameworks). 
Let $G=(V,E)$ be a simple graph with  vertex set $V$ and edge set $E$. 
A {\em  bar-joint framework} for $G$ in $\bR^d$ is a pair $\G=(G,p)$ where $p: V\to \bR^d$, $v \mapsto p_v$, is an injective map. 
The points $p_v$ are referred to as {\em joints} of $\G$ and the line segments $(p_v,p_w)$, where $vw\in E$, are referred to as {\em bars}.
Let $\V(\G;\bK)=X^A$ where $A=p(V)$ and $X=\bK^d$ (for $\bK=\bR$ or $\bC$). 
The elements of $\V(\G;\bK)$ are referred to as \textit{velocity fields} for $\G$ over $\bK$. 
If $u\in \V(\G;\bK)$ then $u(p_v)$ is referred to as the \emph{velocity vector} for $u$ at $p_v$. 
An \textit{infinitesimal flex} (or \emph{first order flex}) for $\G$ over $\bK$ is a velocity field $u\in\V(\G;\bK)$ with the property that 
$\langle p_v-p_w, u(p_v)\rangle = \langle p_v-p_w, u(p_w)\rangle$ for each edge $vw\in E$. 
The set of infinitesimal flexes for $\G$ is a closed vector subspace of $\V(\G;\bK)$,  denoted $\F(\G;\bK)$.
The vector subspace of bounded infinitesimal flexes for $\G$ is denoted $\F_\infty(\G;\bK)$.

\begin{center}
\begin{figure}[ht]
\centering
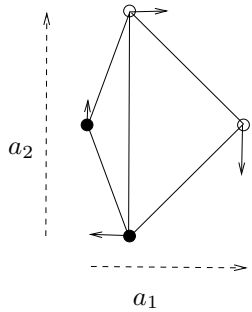
 \caption{A motif for the bar-joint framework  $\C_{\rm kite}$ together with four velocity vectors from the unbounded infinitesimal flex $a_{\rm kite}$.}
 \label{f:kitemotif}
\end{figure}
\end{center} 

\begin{eg}\label{eg:kite}
Consider the crystallographic bar-joint framework $\C_{\rm kite}$ in the Euclidean plane which is defined by the motif indicated in Figure \ref{f:kitemotif} consisting of two joints and five bars, together with the indicated pair $\{a_1, a_2\}$ of period vectors, where $\|a_1\| < \|a_2\|$.
Let $\vec{x}, \vec{y}$ be a choice of nonzero infinitesimal translation flexes of $\C_{\rm kite}$ for the $x$ and $y$ directions respectively and let $\vec{r}$ be a nonzero infinitesimal rotation flex.  Additionally, note that there is an unbounded infinitesimal flex $a_{\rm kite}$ in $\F(\C_{\rm kite};\bR)$, indicated in Figure \ref{f:kitemotif}, which restricts to alternating rotations of the $5$-bar motif and its translates. The magnitude of these rotations tend to infinity exponentially in the positive $x$-direction and are constant in magnitude in the $y$-direction. It is straightforward to show that the finite set $\{\vec{x}, \vec{y}, \vec{r}, a_{\rm kite}\}$ is a vector space basis for $\F(\C_{\rm kite};\bR)$. 
\end{eg}

Let $\G=(G,p)$ be a finite or countable bar-joint framework whose joints do not lie in a hyperplane. The space $\F_{\rm rig}(\G;\bR)$ of trivial infinitesimal flexes, or \emph{rigid motion infinitesimal flexes}, is the vector subspace of $\F(\G;\bR)$ consisting of  real infinitesimal flexes of the complete graph bar-joint framework $(K_{V},p)$. Here $K_{V}$ is the complete graph on the vertex set $V$. 
The bar-joint framework $\G$ is \emph{infinitesimally rigid} if $\F(\G;\bR) = \F_{\rm rig}(\G;\bR)$ and \emph{boundedly infinitesimally rigid} if $\F_\infty(\G;\bR) \subset \F_{\rm rig}(\G;\bR)$.

\begin{eg}\label{eg:surface}
The three velocity fields $\vec{x}, \vec{y}, \vec{r}$ in Example \ref{eg:kite} span $\F_{\rm rig}(\C_{\rm kite};\bR)$. Thus $\C_{\rm kite}$ is boundedly infinitesimally rigid, but not infinitesimally rigid. 
In contrast, the semi-crystallographic framework  $\C_{\rm kite}^{x\leq 0}$, formed by removing from $\C_{\rm kite}$ all the 5-bar kite frameworks not lying in the half-plane $x\leq 0$ is not boundedly infinitesimally rigid. Indeed, note that $\C_{\rm kite}^{x\leq 0}$ has a nontrivial bounded infinitesimal flex given by the restriction of $a_{\rm kite}$. We view this infinitesimal flex (and its associated zero energy mechanical mode) as a \emph{bounded surface flex} (or surface mode),  of the bulk crystal $\C_{\rm kite}$, which is associated with the domain wall $x=0$.
\end{eg}

\begin{thm}
Let $\G=(G,p)$ be a bar-joint framework in $\bR^d$ with a countable set of joints.
If the space of infinitesimal flexes $\F(\G;\bK)$ is infinite dimensional then,
\begin{enumerate}[(i)]
\item $\F(\G;\bK)$ has a free basis, and,
\item $\F(\G;\bK)$ contains a countable linearly independent set of unbounded infinitesimal flexes.
\end{enumerate}
\end{thm}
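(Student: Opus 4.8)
The plan is to derive this theorem directly from the abstract results established earlier in the section, once we observe that the infinitesimal flex space fits the required framework. First I would note that by the setup in Subsection~\ref{s:freebases} (Application to bar-joint frameworks), we have $\F(\G;\bK)\subseteq \V(\G;\bK) = X^A$ where $A = p(V)$ is a countable set (since $V$ is countable and $p$ is injective) and $X = \bK^d$ is a finite dimensional normed space. It was already recorded in the excerpt that $\F(\G;\bK)$ is a \emph{closed} vector subspace of $\V(\G;\bK)$: indeed the defining conditions $\langle p_v - p_w, u(p_v)\rangle = \langle p_v - p_w, u(p_w)\rangle$ for $vw\in E$ are each preserved under pointwise limits, so the intersection over all edges is closed in the topology of pointwise convergence. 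Thus $W := \F(\G;\bK)$ is an infinite dimensional closed subspace of $X^A$ with $A$ countably infinite.

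With this identification in place, part~(i) is immediate from Theorem~\ref{t:existence} (or, in its sharper form, Theorem~\ref{t:existence_closed}), which asserts that every infinite dimensional (closed) vector subspace of $X^A$ has a free basis. Likewise part~(ii) is immediate from Corollary~\ref{c:unboundedelements}, which says that an infinite dimensional closed subspace of $X^A$ contains a countable linearly independent set of unbounded elements; since the elements of $\F(\G;\bK)$ are precisely the infinitesimal flexes, these unbounded elements are unbounded infinitesimal flexes. So the proof is essentially a verification that the hypotheses of the earlier abstract theorems are met, followed by a direct invocation.

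The only point requiring a word of care — and the main (minor) obstacle — is confirming that $A = p(V)$ is countably \emph{infinite}, not merely countable, since the abstract results are stated for countably infinite index sets. If $V$ were finite then $X^A$ would be finite dimensional and $\F(\G;\bK)$ could not be infinite dimensional, contradicting the hypothesis; hence $V$, and therefore $A$, is countably infinite. I would also remark that closedness of $\F(\G;\bK)$ in the pointwise topology is exactly what licenses the use of Theorem~\ref{t:existence_closed} and Corollary~\ref{c:unboundedelements} rather than only Theorem~\ref{t:existence}; this is already asserted in the text preceding the theorem, so no new argument is needed. Writing these observations out, the proof is three or four lines long.

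\begin{proof}
Since $\F(\G;\bK)$ is infinite dimensional, the joint set $A = p(V)$ must be infinite; as $V$ is countable and $p$ is injective, $A$ is countably infinite. By definition $\F(\G;\bK)$ is a closed vector subspace of $\V(\G;\bK) = X^A$ with $X = \bK^d$. Part~(i) now follows from Theorem~\ref{t:existence} (indeed $\F(\G;\bK)$ admits a free basis $S$ with $\F(\G;\bK) = \M(S)$ by Theorem~\ref{t:existence_closed}), and part~(ii) follows from Corollary~\ref{c:unboundedelements}.
\end{proof}
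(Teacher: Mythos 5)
Your proposal is correct and follows essentially the same route as the paper, which likewise derives both parts directly from Theorem~\ref{t:existence_closed} and Corollary~\ref{c:unboundedelements} using the closedness of $\F(\G;\bK)$ in $\V(\G;\bK)$. The extra remark that $A=p(V)$ must be countably infinite (not merely countable) under the infinite-dimensionality hypothesis is a sensible, harmless addition.
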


\proof
The  statements follow from
Theorem \ref{t:existence_closed} and Corollary \ref{c:unboundedelements} since $\F(\G,\bK)$ is  closed in $\V(\G;\bK)$.
\endproof

Several examples of free bases for the infinitesimal flex spaces of crystallographic bar-joint frameworks are presented in Section \ref{s:examples}. 

\section{Group actions on free spanning sets}\label{s:groupactions}
Once again let $A$ be a countably infinite set and $X$ a finite dimensional normed space over $\bK$. Throughout this section, $\Gamma$ denotes a multiplicative abelian group with identity element $1$, $\theta:\Gamma\times A\to A$ is a free group action on $A$
and $\pi:\Gamma\times X^A\to X^A$ is the induced faithful group action on $X^A$ given by
$\pi( \gamma, f )(a) = f(\gamma^{-1}a)$ for all $a\in A$. 
To simplify notation, $\theta(\gamma,a)$  will be written as $\gamma a$  for all $\gamma\in \Gamma$ and  $a\in A$. Similarly, $\pi(\gamma,f)$  will be written as $\gamma f$  for all $\gamma\in \Gamma$ and  $f\in X^A$. The {\em orbit} of an element $a\in A$ under the group action $\theta$ is the set $\Gamma a = \{\gamma a:\gamma\in \Gamma\}$ and the
{\em quotient set} $A/\Gamma=\{\Gamma a:a\in A\}$ is the set of all such orbits.

\begin{defn}
A {\em geometric direction} for $f\in X^A\backslash \{0\}$ is an element $\gamma\in \Gamma$ such that $\gamma f$ is a non-zero scalar multiple of $f$.
\end{defn}
The set of all geometric directions for $f$ is denoted $\Gamma_f$.

\begin{lem}
\label{l:GeomDirSub1}
Let $f\in X^A\backslash \{0\}$.
\begin{enumerate}[(i)]
\item $\Gamma_f$ is a subgroup of $\Gamma$.
\item $\Gamma_f= \Gamma_{\lambda(\gamma f)}$ for all non-zero scalars $\lambda\in \bK$ and all $\gamma\in \Gamma$.
\item If $f$ has finite support then $\Gamma_f=\{1\}$.
\end{enumerate}
\end{lem}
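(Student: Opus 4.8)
The plan is to establish the three parts of Lemma \ref{l:GeomDirSub1} in order, each being an elementary but careful verification.

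For part (i), I would show that $\Gamma_f$ is closed under the group operations. First, $1 \in \Gamma_f$ since $1 f = f = 1 \cdot f$, a nonzero scalar multiple of $f$. Next, suppose $\gamma, \delta \in \Gamma_f$, so $\gamma f = \lambda f$ and $\delta f = \mu f$ for nonzero scalars $\lambda, \mu \in \bK$. Since $\pi$ is a group action, $(\gamma\delta)f = \gamma(\delta f) = \gamma(\mu f) = \mu(\gamma f) = \mu\lambda f$ (here I use that $\pi(\gamma,\cdot)$ is linear, which is immediate from the defining formula $\pi(\gamma,f)(a) = f(\gamma^{-1}a)$), and $\mu\lambda \neq 0$, so $\gamma\delta \in \Gamma_f$. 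Finally, from $\gamma f = \lambda f$ with $\lambda \neq 0$ we apply $\gamma^{-1}$ to both sides: $f = \gamma^{-1}(\gamma f) = \gamma^{-1}(\lambda f) = \lambda (\gamma^{-1} f)$, hence $\gamma^{-1} f = \lambda^{-1} f$, so $\gamma^{-1} \in \Gamma_f$. Thus $\Gamma_f$ is a subgroup.

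For part (ii), write $g = \lambda(\gamma f)$ with $\lambda \neq 0$ and $\gamma \in \Gamma$; note $g$ is a nonzero scalar multiple of $\gamma f$, and since $\gamma f \neq 0$ (the action is by bijections, so it sends nonzero maps to nonzero maps), $g \neq 0$. I claim $\Gamma_g = \Gamma_f$. Suppose $\delta \in \Gamma_f$, say $\delta f = \mu f$. Using that $\Gamma$ is abelian (so $\delta$ and $\gamma$ commute) and that the action is linear, $\delta g = \delta(\lambda \gamma f) = \lambda \gamma(\delta f) = \lambda \gamma(\mu f) = \mu(\lambda \gamma f) = \mu g$, so $\delta \in \Gamma_g$. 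The reverse inclusion follows symmetrically since $f = \lambda^{-1}\gamma^{-1} g$ is itself a nonzero scalar multiple of a $\Gamma$-translate of $g$. Hence $\Gamma_g = \Gamma_f$.

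For part (iii), suppose $f$ has finite nonempty support $\supp(f)$, and let $\gamma \in \Gamma_f$, so $\gamma f = \lambda f$ for some $\lambda \neq 0$. Since $(\gamma f)(a) = f(\gamma^{-1}a)$, we have $\supp(\gamma f) = \gamma \supp(f)$, and $\supp(\lambda f) = \supp(f)$, so $\gamma$ permutes the finite set $\supp(f)$. If $\gamma \neq 1$, then since the action $\theta$ is free, $\gamma$ acts on $A$ with no fixed points, so the orbit of any $a \in \supp(f)$ under the cyclic subgroup $\langle\gamma\rangle$ would be infinite unless $\gamma$ has finite order; but even then, iterating $\gamma f = \lambda f$ gives $\gamma^n f = \lambda^n f$, and evaluating at $a \in \supp(f)$ gives $f(\gamma^{-n}a) = \lambda^n f(a)$, so all points $\gamma^{-n}a$ lie in $\supp(f)$ — a finite set — forcing $\gamma^{-n}a = a$ for some $n \geq 1$, contradicting freeness of $\theta$ (with $\gamma^n \neq 1$). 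Hence $\gamma = 1$ and $\Gamma_f = \{1\}$. I do not expect any genuine obstacle here; the only point requiring slight care is keeping track of the interplay between the action being free, being by bijections, and the linearity and commutativity used in (i) and (ii) — all of which are immediate from the setup, so the whole proof is routine bookkeeping.
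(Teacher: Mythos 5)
Your proof is correct and takes essentially the same route as the paper, which leaves parts (i) and (ii) to the reader and proves (iii) by exactly your argument: the $\langle\gamma\rangle$-orbit of a point of $\supp(f)$ must lie in the finite set $\supp(f)$, while freeness of $\theta$ forces that orbit to be infinite when $\gamma\neq 1$. The only wrinkle is your aside about $\gamma$ possibly having finite order --- there the pigeonhole step only yields $\gamma^{n}=1$ rather than a contradiction --- but the paper's own proof makes the same tacit assumption that nontrivial $\gamma$ have infinite orbits (automatic for the torsion-free lattice translation groups to which the lemma is applied), so this is not a divergence from the paper's argument.
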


\proof
The proofs of parts $(i)$ and $(ii)$ are elementary and left to the reader.
To show $(iii)$, suppose $\gamma\in \Gamma$ is a geometric direction for $f$ and choose $a\in\supp(f)$. Then $\gamma^{k}a\in\supp(f)$ for each $k\in \bZ$. Since $\theta$ is a free group action, the elements $\gamma^{k}a$, $k\in \bZ$, are distinct unless $\gamma=1$.
\endproof

\begin{defn}
Let $S\subset X^A\backslash \{0\}$.
A group action $\pi:\Gamma\times X^A \to X^A$  {\em acts} on  $S$ (up to scalar multiples) if for each $f\in S$ and each $\gamma\in\Gamma$ there exists a scalar $\lambda\in\bK$ and $g\in S$ such that $\gamma f=\lambda g$. 
\end{defn}

\begin{lem}
\label{l:GeomDirSupp}
Let $S$ be a free spanning set for an infinite dimensional vector subspace of $X^A$ and let $f\in S$.
If the quotient set $A/\Gamma$ under the action $\theta$ is finite and the induced action $\pi$ acts on $S$ up to scalar multiples then the following statements are equivalent.
\begin{enumerate}[(i)]
\item $f$ has finite support.
\item $\Gamma_f=\{1\}$.
\end{enumerate}
\end{lem}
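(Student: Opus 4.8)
The direction $(i)\Rightarrow(ii)$ is immediate from Lemma \ref{l:GeomDirSub1}(iii), so the content is in $(ii)\Rightarrow(i)$. I would argue the contrapositive: assuming $f\in S$ has infinite support, I want to produce a non-trivial geometric direction $\gamma\in\Gamma_f$. The key structural facts to exploit are that $A/\Gamma$ is finite, that $\pi$ acts on the free spanning set $S$ up to scalars, and that $(f_n)$ enumerating $S$ tends to zero strictly (so each $a\in A$ meets only finitely many supports among elements of $S$).

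First I would set up the orbit bookkeeping. Pick a finite transversal $\{a_1,\dots,a_m\}$ for $A/\Gamma$, so every point of $A$ is $\gamma a_i$ for a unique $i$ and a unique $\gamma$ (freeness of $\theta$). Since $\supp(f)$ is infinite but meets only finitely many orbits, some orbit, say $\Gamma a_i$, meets $\supp(f)$ in an infinite set; write this set as $\{\gamma_k a_i : k\in\bN\}$ with the $\gamma_k$ distinct. Next, because $\pi$ acts on $S$ up to scalars, for each $\gamma\in\Gamma$ there is a unique $g(\gamma)\in S$ and a scalar $\lambda(\gamma)\neq 0$ with $\gamma f=\lambda(\gamma)\,g(\gamma)$; note $g(\gamma)$ has infinite support too (translation and scaling preserve support size), and $\supp(g(\gamma))=\gamma\,\supp(f)$. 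The map $\gamma\mapsto g(\gamma)$ is a function from $\Gamma$ into $S$.

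Now comes the pigeonhole step, which I expect to be the crux. Consider the orbit elements $\gamma_k a_i\in\supp(f)$. Apply $\gamma_k^{-1}$: then $\gamma_k^{-1}f = \lambda(\gamma_k^{-1})\,g(\gamma_k^{-1})$, and $a_i\in \gamma_k^{-1}\,\supp(f)=\supp(g(\gamma_k^{-1}))$. So all the elements $g(\gamma_1^{-1}), g(\gamma_2^{-1}),\dots$ of $S$ contain $a_i$ in their support. But $S=\{f_n\}$ tends to zero strictly, so only finitely many members of $S$ are non-zero at $a_i$; hence the map $k\mapsto g(\gamma_k^{-1})$ takes only finitely many values, and therefore $g(\gamma_k^{-1})=g(\gamma_\ell^{-1})$ for some $k\neq\ell$. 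From $\gamma_k^{-1}f=\lambda(\gamma_k^{-1})g(\gamma_k^{-1})$ and $\gamma_\ell^{-1}f=\lambda(\gamma_\ell^{-1})g(\gamma_\ell^{-1})$ with equal $g$-values, I get $\gamma_k^{-1}f$ a non-zero scalar multiple of $\gamma_\ell^{-1}f$, hence (applying $\gamma_\ell$ and using that $\pi$ is linear) $\gamma_\ell\gamma_k^{-1}f$ is a non-zero scalar multiple of $f$. Since $\gamma_k\neq\gamma_\ell$, the element $\gamma:=\gamma_\ell\gamma_k^{-1}$ is a non-trivial geometric direction for $f$, so $\Gamma_f\neq\{1\}$, completing the contrapositive.

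The main obstacle is making the pigeonhole argument airtight: one must be careful that the $\gamma_k$ can genuinely be chosen distinct (this uses freeness of $\theta$, since distinct points $\gamma_k a_i$ force distinct $\gamma_k$), and that "$S$ tends to zero strictly" is applied at the single point $a_i$ to the countable family $\{g(\gamma_k^{-1})\}_{k}\subseteq S$ rather than to the abstract sequence enumerating $S$ — these coincide because $S$ is, by Definition \ref{d:freespanning}(a), exactly the set of terms of a strictly-null sequence, so any of its members that is non-zero at $a_i$ lies in a fixed finite subset. A secondary point to check is that $g(\cdot)$ and $\lambda(\cdot)$ are well defined as functions, i.e. that the representation $\gamma f=\lambda g$ with $g\in S$ is unique; this follows since distinct elements of $S\subseteq X^A\setminus\{0\}$ are not scalar multiples of one another — which itself should be noted as a consequence of $(f_n)$ being a free spanning set (if $f_n=c f_{n'}$ they would have identical support, and one can rule this out, or simply pass to the weaker statement that the value $g(\gamma)$ is determined up to this ambiguity, which does not affect the support-based pigeonhole). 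Once these bookkeeping points are settled the argument is short.
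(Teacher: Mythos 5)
Your proposal is correct and is essentially the paper's own argument: both hinge on picking a finite orbit transversal $a_1,\dots,a_m$, using the strict nullity of $(f_n)$ to see that only finitely many elements of $S$ are non-zero at a given $a_i$, and then extracting from two group elements $\gamma_k\neq\gamma_\ell$ with $\gamma_k^{-1}f$ proportional to $\gamma_\ell^{-1}f$ a non-trivial geometric direction $\gamma_\ell\gamma_k^{-1}\in\Gamma_f$ (the paper phrases this as showing each set $C_i=\{\gamma f:(\gamma f)(a_i)\neq 0\}$ is finite and realised by only finitely many $\gamma$, assuming $\Gamma_f=\{1\}$, whereas you run the contrapositive). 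Your side worry about the well-definedness of $g(\gamma)$ is harmless, since an arbitrary choice of representative in $S$ suffices for the pigeonhole.
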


\proof
The implication $(i)\Rightarrow (ii)$ is Lemma \ref{l:GeomDirSub1}$(iii)$. To show $(ii)\Rightarrow (i)$, suppose $\Gamma_f=\{1\}$.
Let $S=\{f_n:n\in\bN\}$ and let $a_1,\ldots,a_{r}$ be a set of representatives for the finitely many orbits in $A/\Gamma$.
For $i\in\{1,\ldots,r\}$, define
$C_i=\{\gamma f:\gamma\in\Gamma, \, (\gamma f)(a_i)\not=0\}$.
Note that, since  the sequence $(f_n)$  tends to zero strictly, the set $S_i=\{g\in S:g(a_i)\not=0\}$ is finite and, since $\pi$ acts on  $S$, each element of $C_i$ is a scalar multiple of an element of $S_i$. 
If $C_i$ is not finite, for some $i\in\{1,\ldots,r\}$, then there exist distinct elements $\gamma_1 f,\gamma_2 f\in C_i$ and a non-zero scalar $\lambda\in\bK$ such that $\gamma_1 f=\lambda (\gamma_2 f)$. 
Thus $(\gamma_1\gamma_2^{-1}) f = \lambda f$ and so $\gamma_1\gamma_2^{-1}$ is a   geometric direction for $f$, which contradicts $\Gamma_f=\{1\}$.
Thus $C_i$ must be finite for each $i\in\{1,\ldots,r\}$. 
Now for each $C_i$, either $\gamma_1f = \gamma_2f$ for some distinct $\gamma_1,\gamma_2\in \Gamma$ or there are only finitely many $\gamma\in \Gamma$ with $(\gamma f)(a_i)\not=0$.
If the former case holds for some $C_i$, then $f = (\gamma_1\gamma_2^{-1}) f$ and so $\gamma_1\gamma_2^{-1}$ is a geometric direction for $f$, which contradicts $\Gamma_f=\{1\}$. 
Thus the latter case holds for each $C_i$. Since $f(\gamma^{-1} a_i)= (\gamma f)(a_i)$ for all $\gamma\in \Gamma$ and $i\in\{1,\ldots,r\}$, and 
since $A=\{\gamma a_i:\gamma\in \Gamma,\,i\in\{1,\ldots,r\}\}$, it follows that $f$ has finite support. 
\endproof

\begin{defn}
A set $S\subset X^A\backslash \{0\}$ is {\em finitely generated} by a group action $\pi:\Gamma\times X^A \to X^A$ (up to scalar multiples) if $\pi$ acts on $S$, and, there is a finite subset $S_0\subset S$ such that for every $f\in S$ there exists $\gamma\in \Gamma$, $f_0\in S_0$, and a scalar $\lambda\in \bK$, with $f= \lambda(\gamma f_0)$.
\end{defn}

\begin{lem}
\label{GeomDirSub2}
Let $S$ be a free spanning set for an infinite dimensional vector subspace of $X^A$. If $S$ is finitely generated by a group action $\pi$ then there exists $f\in S$ such that the quotient group $\Gamma/\Gamma_{f}$ has infinite order. 
\end{lem}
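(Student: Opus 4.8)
The plan is to argue by contradiction: suppose that $\Gamma/\Gamma_f$ has finite order for \emph{every} $f\in S$, and derive that $S$ must be finite — which is impossible for a free spanning set of an infinite dimensional subspace $W$. First I would exploit the finite generation hypothesis. Fix a finite subset $S_0\subset S$ as in the definition, so that every $f\in S$ has the form $f=\lambda(\gamma f_0)$ with $f_0\in S_0$, $\gamma\in\Gamma$ and $\lambda\in\bK\setminus\{0\}$. By Lemma \ref{l:GeomDirSub1}(ii) we have $\Gamma_f=\Gamma_{f_0}$ for every such $f$, so the subgroups $\Gamma_f$, $f\in S$, are precisely the members of the finite family $\{\Gamma_{f_0}:f_0\in S_0\}$, each of which, by the contradiction hypothesis, has finite index in $\Gamma$.

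The next step is a coset count. For a fixed $f_0\in S_0$, the translates $\gamma_1 f_0$ and $\gamma_2 f_0$ are scalar multiples of one another exactly when $\gamma_2^{-1}\gamma_1\in\Gamma_{f_0}$ (apply $\gamma_2^{-1}$ and use linearity of the action together with the definition of $\Gamma_{f_0}$), so the orbit $\{\gamma f_0:\gamma\in\Gamma\}$ contributes at most $[\Gamma:\Gamma_{f_0}]<\infty$ elements up to scalar multiples. Ranging over the finite set $S_0$, the set $S$ therefore has only finitely many elements up to scalar multiples; pick representatives $g_1,\dots,g_r\in S$ of these scalar-equivalence classes. Now I would upgrade ``finitely many directions'' to ``finitely many elements'': each $g_i$ is a nonzero member of $X^A$, so choose $a_i\in\supp(g_i)$; every $f\in S$ in the class of $g_i$ satisfies $f(a_i)\neq 0$, and since the enumerating sequence of $S$ tends to zero strictly, only finitely many members of $S$ are nonzero at $a_i$. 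Hence each of the $r$ classes is finite, so $S$ is finite.

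A finite $S$ has $\M(S)=\spn(S)$ finite dimensional, contradicting $W\subseteq\M(S)$ with $W$ infinite dimensional, and this completes the argument; so some $f\in S$ has $\Gamma/\Gamma_f$ of infinite order. The step I expect to need the most care is the passage from ``only finitely many elements up to scalar multiples'' to ``$S$ finite'': this is exactly where the strict-convergence-to-zero condition built into a free spanning set is indispensable, since without it one could perfectly well have an infinite free-like spanning family consisting of scalar multiples of a single vector. The remaining ingredients — identifying $\Gamma_f=\Gamma_{f_0}$ via Lemma \ref{l:GeomDirSub1} and counting translates through cosets of $\Gamma_{f_0}$ — are routine group-action bookkeeping.
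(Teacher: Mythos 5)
Your proof is correct and follows essentially the same route as the paper's: both arguments bound the number of scalar-equivalence classes in $S$ by $\sum_{f_0\in S_0}[\Gamma:\Gamma_{f_0}]$ via a coset count, and both use the strict-convergence condition to rule out infinitely many elements of $S$ spanning the same line. The only cosmetic difference is the source of the final contradiction --- the paper observes directly that the set of directions $\{\bK f: f\in S\}$ must be infinite because $S$ is a countably infinite sequence tending to zero strictly, whereas you conclude $S$ itself is finite and contradict $\dim W=\infty$ via $\M(S)=\spn(S)$; both are sound.
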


\proof
Let $\I = \{\bK f: f\in S\}$ where $\bK f$ denotes the one-dimensional subspace of $X^A$ spanned by $f$. If $\I$ is a finite set, then there exists a sequence $(f_{n})$ in $S$ for which the one-dimensional spaces $\bK {f_{n}}$ are equal.
In particular, the support sets $\supp(f_{n})$ are equal for all $n\in\bN$.
This is a contradiction since $(f_n)$ tends to zero strictly and so $\I$ is countably infinite. 
Since $S$ is finitely generated by $\pi$, there exists a finite subset $S_0\subset S$ such that $\I = \{\bK(\gamma f): \gamma\in\Gamma, \, f\in S_0\}$.
Note that, for each $f_0\in S_0$, the cardinality of $\{\bK(\gamma f_0):\gamma\in\Gamma\}$ is bounded by the order of $\Gamma/\Gamma_{f_0}$. Thus $\Gamma/\Gamma_{f_0}$ has infinite order for some $f_0\in S_0$.
\endproof

\subsection{Lattice group actions}
Let $L=\{\sum_{i=1}^m n_i b_i:n_i\in\bZ\}$ be a rank $m$ lattice in $\bR^d$ determined by linearly independent vectors $b_1,\ldots,b_m\in\bR^d$.
For each $b\in L$, let $T_b:\bR^d\to\bR^d$ denote the translation $T_b(x)=x+b$. 
The translation group associated to $L$ is the multiplicative abelian group $\T=\{T_b:b\in L\}$. 
A \emph{multilattice} on $L$ is a set $A = A_1 \cup \ldots \cup A_s$ where $A_1,\ldots, A_s$ are pairwise disjoint translates of $L$.
Define a free group action $\theta_L:\T\times A\to A$ by setting $\theta_L(T,a_i) = T(a_i)$ for $a_i \in A_i$, $1 \leq i \leq s$.
The {\em lattice group action} on $X^A$ induced by $\theta_L$ is the group action $\pi_L:\T\times X^A\to X^A$ with $\pi_L(T,f) = f\circ T^{-1}$. Note that the quotient set $A/\T$ is finite and a translation $T_b\in \T$ is a geometric direction for $f\in X^A$ if and only if there exists a nonzero scalar $\lambda$ such that $f(a-b)=\lambda f(a)$ for all $a\in A$.  If $\Gamma_f\not=\{1\}$ then the sublattice $L_f=\{b\in L:T_b\in \Gamma_f\}$ is referred to as the {\em lattice of geometric directions} for $f$.

\begin{defn}
Let $L'$ be a rank $n$ sublattice of $L$ with basis $b_1,\ldots,b_n\in\bR^d$. A mapping $f\in X^A$ is {\em factor periodic} on $L'$ if there exists $\omega=(\omega_1,\ldots,\omega_n)\in\bK_*^n$ such that 
$T_bf = \omega_1^{k_1}\cdots\omega_n^{k_n} f$ for all $b=\sum_{j=1}^n k_jb_j\in L'$. In this case, $\omega$ is referred to as the {\em periodicity multifactor} for $f$ determined by the basis $b_1,\ldots,b_n\in\bR^d$.
\end{defn}

\begin{lem}
\label{l:factorperiodic}
Let $A$ be a multilattice on $L$ and let $f\in X^A\backslash\{0\}$.
\begin{enumerate}[(i)]
\item
If $\Gamma_f\not=\{1\}$ then $f$ is factor periodic on $L_f$.
\item
If $f$ is bounded and factor periodic, with periodicity multifactor $\omega$, then the components of $\omega$ are unimodular.
\end{enumerate}
\end{lem}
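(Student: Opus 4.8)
The plan is to treat the two parts separately; both reduce quickly to the definitions once we use that $\Gamma_f$ is a subgroup of $\T$ (Lemma \ref{l:GeomDirSub1}$(i)$).

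For part $(i)$, I would first note that, since $\T=\{T_b:b\in L\}$ is free abelian of rank $m$, the set $L_f=\{b\in L:T_b\in\Gamma_f\}$ is a subgroup of $L$ and hence a sublattice, of some rank $n\geq 1$; the rank is positive precisely because $\Gamma_f\not=\{1\}$, so the notion ``factor periodic on $L_f$'' is meaningful. Fix a basis $b_1,\ldots,b_n$ of $L_f$. For each $j$ the translation $T_{b_j}$ is a geometric direction for $f$, so there is a non-zero scalar $\omega_j\in\bK$ with $T_{b_j}f=\omega_j f$. The key observation is that the assignment $b\mapsto\lambda(b)$, defined for $T_b\in\Gamma_f$ by $T_b f=\lambda(b)f$, is a well-defined homomorphism into $\bK\backslash\{0\}$: from $T_{b+c}f=T_b(T_c f)$ and $f\not=0$ one gets $\lambda(b+c)=\lambda(b)\lambda(c)$. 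Applying this to an arbitrary $b=\sum_{j=1}^n k_j b_j\in L_f$ yields $T_b f=\prod_{j=1}^n\lambda(b_j)^{k_j}f=\omega_1^{k_1}\cdots\omega_n^{k_n}f$, which is exactly the assertion that $f$ is factor periodic on $L_f$ with periodicity multifactor $\omega=(\omega_1,\ldots,\omega_n)$.

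For part $(ii)$, suppose $f$ is bounded and factor periodic on a rank $n$ sublattice $L'$ with basis $b_1,\ldots,b_n$ and multifactor $\omega=(\omega_1,\ldots,\omega_n)\in\bK_*^n$, and suppose for contradiction that $|\omega_i|\not=1$ for some $i$. Since $f\not=0$ we may choose $a\in A$ with $f(a)\not=0$, and since $A$ is a multilattice on $L$ and $b_i\in L'\subseteq L$ we have $a-kb_i\in A$ for every $k\in\bZ$. Recalling that $(T_b f)(a)=f(a-b)$, the relation $T_{kb_i}f=\omega_i^k f$ gives $f(a-kb_i)=\omega_i^k f(a)$, hence $\|f(a-kb_i)\|=|\omega_i|^k\,\|f(a)\|$ for all $k\in\bZ$. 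Letting $k\to+\infty$ when $|\omega_i|>1$, or $k\to-\infty$ when $|\omega_i|<1$, forces the right-hand side to diverge, contradicting $\|f\|_\infty<\infty$. Therefore every component of $\omega$ is unimodular.

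I do not expect a genuine obstacle here. The only two points that need a little care are: in part $(i)$, checking that $b\mapsto\lambda(b)$ really is multiplicative on all of $L_f$, so that the multifactor determined by the generators is consistent globally; and in part $(ii)$, choosing the sign of $k$ correctly according to whether $|\omega_i|$ lies above or below $1$.
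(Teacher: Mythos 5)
Your proof is correct and follows essentially the same route as the paper's: part $(i)$ evaluates the geometric-direction scalars on a basis of $L_f$ and extends multiplicatively (you merely make explicit the homomorphism $b\mapsto\lambda(b)$ that the paper uses implicitly when writing $T_bf=T_{b_1}^{k_1}\cdots T_{b_n}^{k_n}f$), and part $(ii)$ is the same growth estimate $\|f(a-kb_i)\|=|\omega_i|^k\|f(a)\|$ along a lattice direction. No gaps.
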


\proof
$(i)$ If  $\Gamma_f\not=\{1\}$ then $L_f$ is a lattice of rank $n$ say.
Let $b_1,\ldots,b_n\in \bR^d$ be a set of generators for the lattice $L_f$.
For each $i=1,\ldots,n$, $T_{b_i}$ is a geometric direction for $f$ and so $T_{b_i}f=\omega_i f$ for some non-zero scalar $\omega_i\in\bK$. 
Thus $T_bf = T_{b_1}^{k_1}\cdots T_{b_n}^{k_n}f
=\omega_1^{k_1}\cdots\omega_n^{k_n} f$ for all $b=\sum_{j=1}^n k_jb_j\in L_f$.

$(ii)$ Suppose $f$ is factor periodic on the lattice $L'$ and let $\omega=(\omega_1,\ldots,\omega_n)\in\bK_*^n$ be the periodicity factor for $f$ determined by a basis $b_1,\ldots,b_n$ for $L'$. Let $a\in \supp(f)$. Note that, for each $j=1,\ldots,n$, $\|f(a-kb_j)\|
=\|T^k_{b_j}(a)\|=|\omega_j|^k\|f(a)\|$ for all $k\in\bZ$. Thus, if $|\omega_j|\not=1$ for some $j$ then $f$ is clearly unbounded.

\endproof

If $S\subset X^A$ then, for each $a\in A$, let $S_a=\{f\in S:f(a)\not=0\}$. 

\begin{lem}
\label{l:supportmult}
Let $A$ be a multilattice on $L$ and let $S$ be a free spanning set for an infinite dimensional subspace of $X^A$.
If the lattice group action $\pi_L$ acts on $S$ then each $f\in S$ satisfies one of the following conditions.
\begin{enumerate}[(i)]
\item $f$ is finitely supported.
\item $f$ is factor periodic on $L_f$ and the support of $f$ is a multilattice on $L_f$.
\end{enumerate}
\end{lem}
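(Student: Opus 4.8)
The plan is to fix $f \in S$ and split according to whether $\Gamma_f = \{1\}$ or $\Gamma_f \neq \{1\}$. In the first case I want to conclude that $f$ is finitely supported, which is exactly the content of Lemma \ref{l:GeomDirSupp}: since $A$ is a multilattice on $L$, the quotient set $A/\T$ is finite, and by hypothesis $\pi_L$ acts on $S$ up to scalar multiples, so the equivalence $(i)\Leftrightarrow(ii)$ of that lemma applies directly with $\Gamma = \T$. Hence $\Gamma_f = \{1\}$ forces condition $(i)$.

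In the second case $\Gamma_f \neq \{1\}$, so by Lemma \ref{l:factorperiodic}$(i)$ the mapping $f$ is factor periodic on $L_f$, which already gives half of condition $(ii)$. It remains to show that $\supp(f)$ is a multilattice on $L_f$. First I would record that $\supp(f)$ is invariant under translation by $L_f$: if $T_b$ is a geometric direction then $T_b f = \lambda f$ with $\lambda \neq 0$, so $f(a - b) = \lambda^{-1} f(a)$ (up to indexing conventions) shows $a \in \supp(f) \iff a - b \in \supp(f)$, and hence $a \in \supp(f) \iff a + b \in \supp(f)$ for every $b \in L_f$. Next, since $A = A_1 \cup \dots \cup A_s$ with each $A_i$ a translate of $L$, the support decomposes as $\supp(f) = \bigcup_{i=1}^s (\supp(f) \cap A_i)$, and I would argue that each nonempty piece $\supp(f) \cap A_i$ is a union of finitely many cosets of $L_f$ inside $A_i$. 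The key point for finiteness is that the sequence defining $S$ tends to zero strictly together with the fact that $\pi_L$ acts on $S$: for a representative $a_i$ of the orbit $\T a_i$, the set $\{T \in \T : (Tf)(a_i) \neq 0\}$ maps into the finite set $\{g \in S : g(a_i) \neq 0\}$ up to scalars (this is the finiteness argument already used in the proof of Lemma \ref{l:GeomDirSupp}), so modulo $\Gamma_f$ there are only finitely many translates of $f$ that are nonzero at $a_i$; unwinding this shows $\supp(f) \cap A_i$ is a finite union of $L_f$-cosets. Assembling the pieces, $\supp(f)$ is a finite union of translates of $L_f$, i.e.\ a multilattice on $L_f$ (here I should note $L_f$ has positive rank precisely because $\Gamma_f \neq \{1\}$, so this is genuinely a multilattice in the sense defined above).

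The main obstacle I anticipate is the finiteness of the number of $L_f$-cosets making up each $\supp(f) \cap A_i$ — i.e.\ controlling the support ``transverse'' to the lattice of geometric directions. The $L_f$-invariance is immediate, but one must rule out, say, an infinite staircase of cosets. This is where strict convergence of $(f_n)$ and the hypothesis that $\pi_L$ \emph{acts} on $S$ are both essential: without the group action one could not relate translates of $f$ back to the finite sets $S_{a_i}$, and without strict convergence those sets need not be finite. I would essentially re-run the counting argument from Lemma \ref{l:GeomDirSupp}, now keeping track of which orbit representative $a_i$ a given support point projects to, rather than discarding the quantitative information as in that lemma. Everything else — the case split, the appeal to Lemma \ref{l:factorperiodic}$(i)$, the translation-invariance of the support — is routine.
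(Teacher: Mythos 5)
Your proposal is correct and follows essentially the same route as the paper: the dichotomy via Lemma \ref{l:GeomDirSupp}, factor periodicity from Lemma \ref{l:factorperiodic}$(i)$, and then the counting argument showing that the translates of $f$ which are nonzero at an orbit representative $a_i$ land (up to scalars) in the finite set $S_{a_i}$, with two such translates hitting the same element of $S_{a_i}$ differing by a geometric direction, so that $\supp(f)\cap A_i$ is a finite union of $L_f$-cosets. The paper formalises exactly this by introducing the sets $L_{a_i,k}=\{b\in L: T_b^{-1}f=\lambda g_{a_i,k}\}$ and showing each is a single coset of $L_f$.
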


\proof
Suppose $f\in S$ is not finitely supported. 
By Lemma \ref{l:GeomDirSupp}, $\Gamma_f\not=\{1\}$. Thus, by Lemma \ref{l:factorperiodic}, $f$ is factor periodic on $L_f$.
Choose $a_1,\ldots,a_s\in A$ such that $A=\cup_{i=1}^s (a_i+L)$. 
Since $S$ is a free spanning set, $S_{a_i}$ is finite and so $S_{a_i}=\{g_{a_i,1},\ldots,g_{a_i,s_i}\}$ say.
By Lemma \ref{l:GeomDirSupp}, the subgroup $L_f$ is a non-zero sublattice of $L$.
Let $L_{a_i,k}=\{b\in L: T_b^{-1}f = \lambda g_{a_i,k}, \mbox{ for some }\lambda\not=0\}$ for $k=1,\ldots,s_i$.
Note that $(T^{-1}_b f)(a_i)=f(a)\not=0$ for each $a\in\supp(f)$ with $a=a_i+b$ and $b\in L$.
Thus, since $\pi_L$ acts on $S$, it follows that $\supp(f) = \cup_{i=1}^s \cup_{k=1}^{s_i} (a_i+L_{a_i,k})$.
If $b,b'\in L_{a_i,k}$ then $T_bf = \lambda (T_{b'} f)$ for some non-zero scalar $\lambda\in\bK$ and so $T_{b-b'}$ is a geometric direction for $f$. In particular, $b-b'\in L_f$. Thus if $b_{a_i,k}\in L_{a_i,k}$ then it follows that $L_{a_i,k}=b_{a_i,k}+L_f$, and so $\supp(f)$ is a finite union of translates of the lattice $L_f$.
\endproof

In the following, $\bR^d$ is endowed with a norm and $\dist(a,E)= \inf_{x\in E} \|a-x\|$ denotes the distance between a point $a\in \bR^d$ and a subset $E\subset \bR^d$.

\begin{defn} Let $A$ be a multilattice in $\bR^d$.
A mapping $f\in X^A$ is {\em band-limited} if there exists a subspace $\K$ in $\bR^d$ and $C>0$ such that the support of $f$ is contained in the ``band" $\{x\in \bR^d: \dist(x,\K)\leq C\}$. 
\end{defn}

\begin{thm}
\label{thm:multilattice}
Let $A$ be a multilattice and let $S$ be a free spanning set for an infinite dimensional vector subspace of $X^A$. 
If $S$ is  finitely generated by the lattice group action $\pi_L$ then $S$ contains a band-limited mapping.
\end{thm}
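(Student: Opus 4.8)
The plan is to combine Lemma \ref{GeomDirSub2} with Lemma \ref{l:supportmult} and Lemma \ref{l:factorperiodic}$(i)$. First I would apply Lemma \ref{GeomDirSub2}: since $S$ is finitely generated by $\pi_L$, there exists $f\in S$ for which the quotient group $\T/\Gamma_f$ has infinite order. I claim this $f$ is band-limited. By Lemma \ref{l:GeomDirSupp} (applicable since $A/\T$ is finite, $\T$ acts freely on the multilattice $A$, and $\pi_L$ acts on $S$), the condition $\Gamma_f = \{1\}$ is equivalent to $f$ having finite support; but $\Gamma_f = \{1\}$ would make $\T/\Gamma_f = \T$, which has infinite order only if the rank $m$ of $L$ is positive — so I need to be slightly careful here. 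If $f$ has finite support it is trivially band-limited (taking $\K = \{0\}$ and $C$ large enough to contain the finitely many support points), so without loss of generality I may assume $f$ is not finitely supported, hence $\Gamma_f \neq \{1\}$.

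Now apply Lemma \ref{l:supportmult}: the support of $f$ is a multilattice on the sublattice $L_f = \{b \in L : T_b \in \Gamma_f\}$. Let $n$ be the rank of $L_f$. Since $\T/\Gamma_f$ has infinite order, $L/L_f$ is infinite, which forces $n < m$ — a full-rank sublattice would give finite index. So $L_f$ spans a proper linear subspace $\K = \spn_{\bR}(L_f)$ of the $m$-dimensional real span of $L$ (and in particular a proper subspace of $\bR^d$ unless $m < d$, in which case it is still proper). The support of $f$ is a finite union of translates $a_i + L_f$, $i = 1,\dots,t$, of the lattice $L_f$. Each such translate lies within bounded distance of $\K$: indeed $a_i + L_f \subseteq \{x : \dist(x, a_i + \K) \le 0\} = a_i + \K$, and $a_i + \K$ lies within distance $\|a_i\|$ of $\K$ (more precisely, within distance $\dist(a_i,\K)$). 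Taking $C = \max_{1 \le i \le t} \dist(a_i, \K)$, the support of $f$ is contained in $\{x \in \bR^d : \dist(x,\K) \le C\}$, so $f$ is band-limited.

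The main obstacle — really the only subtle point — is verifying that $n < m$, i.e.\ that $L_f$ is a \emph{proper} sublattice of $L$, so that $\K$ is a proper subspace. This is exactly where the hypothesis "finitely generated" (as opposed to merely "$\pi_L$ acts on $S$") is used: it is what delivers, via Lemma \ref{GeomDirSub2}, an element $f$ with $\T/\Gamma_f$ of infinite order, and the infinite order translates directly into $L_f$ having infinite index in $L$ and therefore strictly smaller rank. A secondary bookkeeping point is the degenerate case where $L$ itself has rank $0$ or where $A$ is finite, but then $X^A$ is finite dimensional and there is no infinite dimensional subspace to span, so the hypotheses are vacuous; one could also simply note that the finitely-supported case already handled above covers any remaining edge behaviour.
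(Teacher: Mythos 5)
Your proof is correct and follows essentially the same route as the paper: dispose of the finitely supported case, invoke Lemma \ref{GeomDirSub2} to obtain $f$ with $\T/\Gamma_f$ of infinite order (hence $L_f$ of strictly smaller rank than $L$), and use Lemma \ref{l:supportmult} to place $\supp(f)$ in a band about the proper subspace $\spn_{\bR}(L_f)$. Your care in distinguishing the rank $m$ of $L$ from the ambient dimension $d$ is, if anything, slightly more precise than the paper's own wording of the rank comparison.
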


\proof
Note that if a map $f\in S$ has finite support then it is band-limited. So suppose no element of $S$ has finite support. 

By Lemma \ref{GeomDirSub2}, there exists $f\in S$ such that the  quotient group $\Gamma/\Gamma_{f}$ has infinite order.
Recall that a quotient of two free abelian groups with equal rank has finite order. Thus $\Gamma_{f}$ has rank $k$ strictly less than $d$. 
Let $\K$ be the proper linear subspace in $\bR^d$ spanned by $L_{f}$.
By Lemma \ref{l:supportmult}, the support of $f$ is a multilattice on $L_{f}$ and hence is contained in a band $\{x\in \bR^d:d(x,\K)\leq C\}$ for some $C>0$.
\endproof

A subset $S$ of $X^A$ is {\em bounded} if each element of $S$ is bounded and $\sup_{f\in S}\|f\|_\infty<\infty$.

 \begin{lem}\label{l:boundedflex}
Let $A$ be a multilattice on $L$ and let $S=\{f_n:n\in\bN\}$ be a bounded free spanning set for an infinite dimensional subspace of $X^A$.
If the lattice group action $\pi_L$ acts on $S$ and $(\alpha_n)\in \ell^\infty(\bN)$ is a bounded sequence of scalars then $\sum_{n=1}^\infty \alpha_nf_n$ is bounded.
\end{lem}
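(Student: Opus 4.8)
The plan is to show directly that the velocity field $F:=\sum_{n=1}^\infty\alpha_nf_n$, which is a well-defined element of $X^A$ by Lemma \ref{l:elem} since $S$ is a free spanning set, has finite supremum norm. Write $M=\sup_n\|f_n\|_\infty<\infty$ and $\beta=\sup_n|\alpha_n|<\infty$, and for $a\in A$ put $S_a=\{f\in S:f(a)\neq 0\}$, which is a finite set because $(f_n)$ tends to zero strictly. For each $a\in A$ only the finitely many terms indexed by $S_a$ contribute to $F(a)$, so
\[
\|F(a)\| \;=\; \Bigl\|\sum_{f_n\in S_a}\alpha_nf_n(a)\Bigr\| \;\le\; \beta M\,|S_a| .
\]
Hence it suffices to prove that $\sup_{a\in A}|S_a|<\infty$.

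For this I would use that the quotient set $A/\T$ is finite, since $A$ is a multilattice on $L$, together with the hypothesis that $\pi_L$ acts on $S$. Fix a finite set of orbit representatives $a_1,\dots,a_s\in A$, and given $a\in A$ write $a=T_ba_i$ with $b\in L$. The lattice translation $R_b:=\pi_L(T_b,\,\cdot\,)$ is a linear bijection of $X^A$ that preserves $\|\cdot\|_\infty$, and the identity $(R_bf)(T_ba_i)=f(a_i)$, combined with the fact that $\pi_L$ carries $S$ into the set of nonzero scalar multiples of its members, shows that $R_b$ maps $S_{a_i}$ onto $S_a$ up to nonzero scalars, and $R_{-b}$ does the reverse. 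Consequently the finite families of one-dimensional subspaces $\{\bK f:f\in S_a\}$ are permuted by the lattice action and so have cardinality constant along each $\T$-orbit, whence $|S_a|\le\max_{1\le i\le s}|S_{a_i}|<\infty$ for every $a\in A$. Combining this with the displayed estimate gives $\|F(a)\|\le\beta M\max_i|S_{a_i}|$ for all $a\in A$, so $F$ is bounded.

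The step I expect to be the main obstacle is exactly this uniform bound $\sup_a|S_a|<\infty$, i.e.\ controlling the number of elements of $S$ that are simultaneously nonzero at an arbitrary joint. The underlying reason it holds is that, since $\pi_L$ acts on $S$, the local pattern of $S$ near any joint is a translate, up to scalars, of the pattern near one of the finitely many representatives $a_i$; this ``periodicity'' of $S$ is what forces the relevant cardinalities to be orbit-invariant. If one prefers to avoid counting one-dimensional subspaces, Lemma \ref{l:supportmult} gives an alternative route: every $f\in S$ is either finitely supported or factor periodic with support a multilattice on $L_f$, and these explicit support shapes can be tracked through a fixed joint to the same effect. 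The remaining points — well-definedness of $F$, the elementary estimate in the display, and the fact that $R_b$ is a $\|\cdot\|_\infty$-isometry — are routine.
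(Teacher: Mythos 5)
Your proof is correct and follows essentially the same route as the paper's: bound $\|F(a)\|$ by $\beta M|S_a|$ using strict convergence, then use the finiteness of $A/\T$ together with the fact that $\pi_L$ acts on $S$ (up to scalars) to show $|S_a|$ is constant along translation orbits, hence uniformly bounded by $\max_i|S_{a_i}|$. The only cosmetic difference is that you phrase the orbit-invariance in terms of the families of one-dimensional subspaces $\{\bK f:f\in S_a\}$, whereas the paper asserts $|S_{a+b}|=|S_a|$ directly from $f(a+b)\neq 0\iff(T_b^{-1}f)(a)\neq 0$; both rest on the same observation.
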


 \begin{proof}
Suppose $f = \sum_{n=1}^\infty \alpha_nf_n$ for some $\alpha=(\alpha_n)\in \ell^\infty(\bN)$.
Since $(f_n)$ tends to zero strictly, $S_{a}$ is a finite set for each $a\in A$.
Note that for each $b\in L$, $f(a+b)\not=0$ if and only if $(T_b^{-1}f)(a)\not=0$.
Thus, since $\pi_L$ acts on $S$, the set $S_{a+b}$ has the same cardinality as $S_{a}$.
Choose $a_1,\ldots,a_s\in A$ such that $A=\cup_{i=1}^s (a_i+L)$. 
Let $N=\sup_{a\in A} |S_{a}| = \max_{i=1,\ldots,s} |S_{a_i}|$
and let $M=\sup_{n}\|f_n\|_\infty$.
Then $\|f\|_\infty = \sup_{a\in A} \|f(a)\|
\leq NM\|\alpha\|_\infty$ and so $f$ is bounded. 
\end{proof}

In the following, a set $S\subset X^A\backslash \{0\}$ has the 
\emph{local basis property} if, for each $a\in A$, the set $\{f(a):f\in S_a\}$ is a basis for $X$. Also, $W_\infty$ denotes the set of bounded elements in a subspace $W$ of $X_A$.

 \begin{thm}\label{t:boundedflex}
Let $A$ be a multilattice on $L$ and let $S=\{f_n:n\in\bN\}$ be a free spanning set for an infinite dimensional subspace $W$ of $X^A$ with the following properties.
\begin{enumerate}[(i)]
\item The lattice group action $\pi_L$ acts on $S$.
\item $S$ is bounded and has the local basis property.
\item No element of $S$ is finitely supported.
\end{enumerate}
Then, 
 \[
 W_\infty =\{f\in W: f = \sum_{n=1}^\infty \alpha_nf_n \mbox{ and } (\alpha_n)\in \ell^\infty(\bN)\}.
 \]
\end{thm}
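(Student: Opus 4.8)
The plan is to prove the two inclusions separately. One direction is essentially Lemma~\ref{l:boundedflex}: if $f\in W$ can be written as $\sum_{n=1}^\infty \alpha_n f_n$ with $(\alpha_n)\in\ell^\infty(\bN)$, then since $S$ is bounded and $\pi_L$ acts on $S$, Lemma~\ref{l:boundedflex} immediately gives that $f$ is bounded, so $f\in W_\infty$. This shows the right-hand set is contained in $W_\infty$.

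For the reverse inclusion, let $f\in W_\infty$, so $f$ is bounded and $f\in W$. Since $S$ is a free spanning set for $W$, there is \emph{some} scalar sequence $(\alpha_n)$ with $f=\sum_{n=1}^\infty \alpha_n f_n$; the task is to show this sequence (or, if not unique, a suitable choice of it) is bounded. The key structural input is hypothesis~(ii): the local basis property says that for each $a\in A$, the vectors $\{g(a):g\in S_a\}$ form a basis of $X$. The plan is to pass to orbit representatives $a_1,\dots,a_s$ with $A=\cup_{i=1}^s(a_i+L)$, and to argue coefficient-by-coefficient. For a fixed $a=a_i+b$ with $b\in L$, the equation $f(a)=\sum_{g\in S_a}\alpha_g\, g(a)$ expresses the (bounded) vector $f(a)$ in the basis $\{g(a):g\in S_a\}$ of $X$; by hypothesis~(iii) together with Lemma~\ref{l:supportmult}, each such $g$ is factor periodic on its own lattice of geometric directions, and because $\pi_L$ acts on $S$, the basis $\{g(a):g\in S_a\}$ is, up to the known scalar multiples coming from the factor-periodicity, a $\pi_L$-translate of one of finitely many fixed bases $\{g(a_i):g\in S_{a_i}\}$ of $X$. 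Inverting these finitely many fixed change-of-basis matrices, whose inverses have norms bounded by a single constant $C_0=C_0(S,a_1,\dots,a_s)$, recovers each $\alpha_g$ — after correcting for the factor-periodicity scalar — with $|\alpha_g|\le C_0\|f\|_\infty$. Hence $(\alpha_n)$ is bounded, with $\|\alpha\|_\infty\le C_0\|f\|_\infty$, and $f$ lies in the right-hand set.

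The step I expect to be the main obstacle is bookkeeping the factor-periodicity scalars so that the coefficient bound is genuinely uniform over the orbit. Concretely: if $f_n$ is factor periodic on $L_{f_n}$ with some multifactor $\omega^{(n)}$, then translating $f_n$ around $a_i+L$ multiplies $f_n(a)$ by a product of powers of the components of $\omega^{(n)}$; by Lemma~\ref{l:factorperiodic}(ii), since $f_n$ is \emph{bounded} (as $S$ is a bounded set), these components are unimodular, so those scalars have modulus exactly $1$. This is exactly what makes the change-of-basis matrices at a general point $a$ differ from those at the representatives $a_i$ only by a diagonal unitary factor, so their inverses have the same norm. I would state this observation explicitly as the crux, since without the boundedness of $S$ the factor-periodicity scalars could blow up and the argument would fail — which is consistent with the fact that Lemma~\ref{l:boundedflex} also needs $S$ bounded. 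I would also remark that uniqueness of the expansion is not needed: any choice of $(\alpha_n)$ representing $f$ satisfies the pointwise equations above and therefore the same bound, so the set on the right is well defined regardless.

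After assembling these pieces the proof is short. I would organize it as: (1) cite Lemma~\ref{l:boundedflex} for $\supseteq$; (2) for $\subseteq$, fix $f\in W_\infty$, pick $(\alpha_n)$ with $f=\sum\alpha_nf_n$ using that $S$ is a free spanning set; (3) fix orbit representatives $a_1,\dots,a_s$ and, using (i)--(iii) together with Lemmas~\ref{l:supportmult} and~\ref{l:factorperiodic}(ii), identify the finitely many fixed change-of-basis matrices and note the general-point matrices are unitary perturbations of them; (4) invert to get $|\alpha_n|\le C_0\|f\|_\infty$ uniformly; (5) conclude $(\alpha_n)\in\ell^\infty(\bN)$, so $f$ lies in the right-hand set. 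The only genuinely delicate point is (3)--(4), and that hinges on unimodularity of the periodicity multifactors for bounded factor-periodic maps.
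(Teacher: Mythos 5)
Your proposal is correct and follows essentially the same route as the paper: the inclusion $\supseteq$ is Lemma~\ref{l:boundedflex}, and for $\subseteq$ you use the local basis property at orbit representatives $a_1,\dots,a_s$ together with the unimodularity of the periodicity multifactors (Lemma~\ref{l:factorperiodic}(ii), via boundedness of $S$) to get a uniform bound $|\alpha_n|\le C_0\|f\|_\infty$. The paper phrases your ``change-of-basis matrices differing by a diagonal unitary factor'' as the equality of the norms $\|\cdot\|_a=\|\cdot\|_{a_j}$ with a uniform comparison constant $c=\min_i c_{a_i}$, but the argument is the same.
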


 \begin{proof}
Let $f \in W_\infty$. Then $f = \sum_{n=1}^\infty \alpha_nf_n$ for some sequence of scalars $(\alpha_n)$ and $\|f\|_\infty<\infty$.
Since $S$ has the local basis property, given any $a\in A$, the set $\{g(a):g\in S_a\}$ is a basis for $X$ and so we may consider the norm on $X$ given by $\|x\|_a = \max_{g\in S_a} |\lambda_g|$ where $x = \sum_{g\in S_a} \lambda_g g(a)$. 
Moreover, since $X$ is finite dimensional, there exists $c_a>0$ such that 
$c_a\|x\|_a\leq \|x\|$ for all $x\in X$.
For a given $m$ with $\alpha_m\neq 0$ there exists $a\in A$ with $f_m(a)\neq 0$.
Since the lattice group action $\pi_L$ acts on $S$ and no element of $S$ is finitely supported, it follows from Lemma \ref{l:GeomDirSupp} and Lemma \ref{l:factorperiodic}$(i)$ that each $g\in S_a$ is factor periodic on $L_g$.
Moreover, since each element of $S$ is bounded, it follows from Lemma \ref{l:factorperiodic}$(ii)$ that the components of the periodicity factors for each $g\in S_a$ are unimodular.
Choose $a_1,\ldots,a_s\in A$ such that $A=\cup_{i=1}^s (a_i+L)$. Then $a=a_j+b$ for some $j$ and some $b\in L$. 
Thus, since the lattice group action $\pi_L$ acts on $S$, there exists a unimodular scalar $\mu_j$ such that $S_a=\{\mu_j g:g\in S_{a_j}\}$. 
It follows that $\|x\|_a = \|x\|_{a_i}$ and so $c_{a_j}\|x\|_a=c_{a_j}\|x\|_{a_j}\leq \|x\|$ for each $x\in X$.
Thus $|\alpha_m|\leq \|f(a)\|_a \leq \frac{1}{c}\|f(a)\|\leq \frac{1}{c}\|f\|_\infty$
where $c=\min_{i=1,\ldots,s} c_{a_i}$. Thus $(\alpha_n)\in \ell^\infty(\bN)$.
 
For the reverse inclusion apply Lemma \ref{l:boundedflex}.
\end{proof}

\section{Crystal flex bases and the RUM spectrum}\label{s:crystalbases}
An {\em automorphism} of a graph $G=(V,E)$ is a bijection $\beta:V\to V$ with the property that $vw\in E$ if and only if $\beta(v)\beta(w)\in E$. 
The \emph{space group} for a bar-joint framework $\G=(G,p)$ in $\bR^d$ is the group $\fS(\G)$ of Euclidean isometries $T:\bR^d\to\bR^d$  with the property that $T(p(V))=p(V)$ and the induced map $\beta:V\to V$, 
$v\mapsto p^{-1}(T(p_v))$ is an automorphism of $G$. 
The subgroup of $\fS(\G)$ consisting of translations is denoted $\T(\G)$.

\begin{defn}
A  bar-joint framework $\C=(G,p)$ in $\bR^d$ is referred to as a \emph{crystal framework} if there exists a rank $d$ lattice $L$ in $\bR^d$ with the following properties.
\begin{enumerate}[(a)]
\item The translation group $\T=\{T_b:b\in L\}$ is a subgroup of $\T(\C)$.
\item $\C$ has only finitely many distinct vertex orbits and edge orbits under $\T$. \end{enumerate}
In this case, the lattice $L$ (or equivalently, the translation group $\T$) is referred to as a {\em periodic structure} for $\C$.
\end{defn}

Let $\C$ be a crystal framework in $\bR^d$. The {\em space group action} on $\V(\C;\bR)$ is the faithful group action $\pi_{\fS(\C)}: \fS(\C) \times \V(\C;\bR) \to \V(\C;\bR)
$  with $\pi_{\fS(\C)}(T, f)  = T_2\circ f\circ T^{-1}$  where $T=T_1T_2$ is the unique factorisation of the Euclidean isometry $T$ with $T_1$ a translation and $T_2$ an orthogonal linear transformation. 
Also we define the {\em space group action} on $\V(\C;\bC)$ as the natural induced action. Alternatively, for $f \in \V(\C;\bC)$ we may define $\pi_{\fS(\C)}(T, f)  = \tilde{T}_2\circ f\circ T^{-1}$ where $\tilde{T}_2$ has the diagonal action $T_2 \oplus T_2$ on the direct sum $\bC^d = \bR^d\oplus i\bR^d$.
The {\em lattice group action} on $\V(\C;\bK)$ determined by a periodic structure $L$ with translation group $\T=\{T_b:b\in L\}$ is the group action $\pi_{\T}: \T \times \V(\C;\bK) \to \V(\C;\bK)$, $\pi_{\T}(T, f)=f\circ T^{-1}$.

\begin{lem}\label{lem:spacegroup}
Let $\C=(G,p)$ be a crystal framework in $\bR^d$ and let $S\subset\V(\C;\bK)$. If $S$ is finitely generated by the space group action $\pi_{\fS(\C)}$ then 
  $S$ is finitely generated by the lattice group action $\pi_{\T}$ for any choice of periodic structure $\T$.
\end{lem}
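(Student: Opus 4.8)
The plan is to reduce everything to the single structural fact that the translation group $\T$ has finite index in the full space group $\fS(\C)$; once that is in hand, a routine coset-representative argument transfers finite generation from $\pi_{\fS(\C)}$ to $\pi_\T$.

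First I would record that the restriction of $\pi_{\fS(\C)}$ to the translation subgroup $\T$ coincides with $\pi_\T$: a translation $T_b\in\T$ has trivial orthogonal part, so $\pi_{\fS(\C)}(T_b,f)=f\circ T_b^{-1}=\pi_\T(T_b,f)$. Since $\T\subseteq\T(\C)\subseteq\fS(\C)$ and, by hypothesis, $\pi_{\fS(\C)}$ acts on $S$ up to scalar multiples, it follows at once that $\pi_\T$ also acts on $S$ up to scalar multiples.

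Next, the key point: $[\fS(\C):\T]<\infty$. Here I would use that $p(V)$ is a uniformly discrete subset of $\bR^d$, being a finite union of translates of the rank $d$ lattice $L$ by property (b) of a crystal framework. Hence the set $L_{\max}$ of translation vectors of the full translation subgroup $\T(\C)$ is uniformly discrete, since it lies in $p(V)-a_0$ for any fixed $a_0\in p(V)$; as it also contains the rank $d$ lattice $L$, it is itself a rank $d$ lattice with $[\T(\C):\T]=[L_{\max}:L]<\infty$. Moreover $\T(\C)$ is normal in $\fS(\C)$ (a conjugate of a translation by an isometry is a translation), and the point group $\fS(\C)/\T(\C)$ embeds, via $T\mapsto T_2$, into the group of orthogonal linear maps preserving $L_{\max}$, which is finite, being a discrete subgroup of the compact group $\mathrm{O}(d)$. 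Therefore $[\fS(\C):\T]=[\fS(\C):\T(\C)]\,[\T(\C):\T]<\infty$. I expect this assembly of the standard facts about space groups out of the bare definition to be the main technical point, although each ingredient is routine.

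Finally the transfer argument. Let $S_0=\{f_1,\dots,f_r\}\subset S$ be a finite generating set for $S$ under $\pi_{\fS(\C)}$, and fix right coset representatives $g_1,\dots,g_m$ so that $\fS(\C)=\bigsqcup_{i=1}^m \T g_i$. Since $\pi_{\fS(\C)}$ acts on $S$, for each $i,j$ choose $h_{ij}\in S$ and a nonzero scalar $\lambda_{ij}$ with $\pi_{\fS(\C)}(g_i,f_j)=\lambda_{ij}h_{ij}$, and put $S_0'=\{h_{ij}: 1\le i\le m,\ 1\le j\le r\}$, a finite subset of $S$. Given any $f\in S$, write $f=\mu\,\pi_{\fS(\C)}(\gamma,f_j)$ for some $\gamma\in\fS(\C)$, some $j$ and some scalar $\mu$; decomposing $\gamma=\tau g_i$ with $\tau\in\T$ and using that $\pi_{\fS(\C)}$ is a left action gives
\[
f=\mu\,\pi_{\fS(\C)}\bigl(\tau,\pi_{\fS(\C)}(g_i,f_j)\bigr)=\mu\lambda_{ij}\,\pi_{\fS(\C)}(\tau,h_{ij})=\mu\lambda_{ij}\,\pi_\T(\tau,h_{ij}),
\]
which exhibits $f$ as a scalar multiple of a $\pi_\T$-image of an element of $S_0'$. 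Combined with the first paragraph, this shows $S$ is finitely generated by $\pi_\T$ for the arbitrary periodic structure $\T$, completing the proof.
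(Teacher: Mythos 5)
Your proof is correct and follows the same strategy as the paper's: restrict the space group action to the translation subgroup to see that $\pi_\T$ acts on $S$, establish that $\T$ has finite index in $\fS(\C)$, and push a finite generating set through a set of coset representatives. The only divergence is in justifying the finite index: the paper deduces it in one line from the action of $\fS(\C)/\T$ on the finite set of translation orbits of the joints, whereas you assemble it from the standard facts that the maximal translation group is a finite-index overlattice of $L$ and that the point group is finite --- a more detailed route to the same intermediate fact, and if anything a more careful one.
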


\begin{proof}
Suppose $S$ is finitely generated by $\pi_{\fS(\C)}$ and let $\T$ be a periodic structure for $\C$. Then given any $f\in S$ and any $T\in \fS(\C)$ there exists $\lambda\in \bK$ and $g\in S$ such that $Tf=\lambda g$. In particular, this holds for all $T\in \T$ and so $\pi_{\T}$ acts on $S$.
Also, there exists a finite subset $S_0\subset S$ such that every member of $S$ can be expressed as a scalar multiple of $Tf$ for some $T\in\fS(\C)$ and some $f\in S_0$.
Note that the set of joints of $\C$ is a multilattice and the quotient group $\fS(\C)/\T$ acts faithfully on the finite set of translation orbits. It follows that the quotient group $\fS(\C)/\T$ is finite. 
Let $R_1,\ldots,R_m$ be a set of representatives for the finitely many elements in $\fS(\C)/\T$  and let $S_0'=\{R_jf: f\in S_0, j=1,\ldots,m\}$.
Then every member of $S$ is a scalar multiple of $Tf$ for some $T\in\T$ and some $f\in S_0'$ and so $S$ is finitely generated over $\T$.
\end{proof}

A \emph{geometric flex} for a crystal framework $\C$  is an infinitesimal flex which is factor periodic with respect to sublattice of a periodic structure for $\C$. 
A \emph{local geometric flex} for $\C$ is an infinitesimal flex which is factor periodic with respect to a lower rank sublattice of a periodic structure for $\C$. 
A \emph{band-limited flex} for $\C$ is an infinitesimal flex which is also a band-limited vector field in $\V(\C;\bK)$.
For crystal frameworks it is of interest to determine free bases for $\F(\C,\bK)$ which incorporate localised or band-limited flexes, should such flexes exist, and which, moreover, incorporate the crystallographic symmetry group. Accordingly we make the following definitions.

\begin{defn}\label{d:crystalflexbasis}
Let $\C$ be a crystal framework and let $W$ be an infinite dimensional vector subspace of $\V(\C;\bK)$.

\begin{enumerate}[(a)]
\item A \emph{crystal spanning set} (respectively, \emph{crystal basis}) for $W$ is a free spanning set (respectively, free basis) which is finitely generated by the space group action $\pi_{\fS(\C)}$.

\item   A crystal spanning set (resp. crystal basis) for $W= \F(\C;\bK)$ is also referred to as a  \emph{crystal flex spanning set} (resp. \emph{crystal flex basis}) for $\C$.
\end{enumerate}
\end{defn}

In the next section we identify a number of crystal frameworks, such as the grid frameworks and the kagome frameworks which possess a crystal flex basis in the sense above. In several cases these bases consist entirely of local geometric flexes. Also we see that the flex space of the octahedron framework
has a free basis which is an \emph{essential crystal flex basis} in the sense that there is a subset which is a crystal flex basis for a vector subspace of $\F(\C;\bR)$ which is complementary to the $3$-dimensional space of infinitesimal rotation flexes.

\begin{thm}
\label{thm:crystal}
Let $\C=(G,p)$ be a crystal framework in $\bR^d$ and suppose $\F(\C;\bK)$ is infinite dimensional. Let $S$ be a crystal flex spanning set for $\C$.
\begin{enumerate}[(i)]
\item If $f\in S$ then $f$ is either finitely supported or a geometric flex for $\C$. 
\item $S$ must contain a band-limited flex. Moreover, $f\in S$ is a band-limited flex if and only if it is either finitely supported or a local geometric flex.

\item If $S$ is bounded, has the local basis property, and the elements of $S$ are not finitely supported then,
 \[
 \F_\infty(\C;\bK) = \{f\in \F(\C;\bK): f = \sum_{n=1}^\infty \alpha_nf_n \mbox{ and } (\alpha_n)\in \ell^\infty(\bN)\}.
 \]
\end{enumerate}
\end{thm}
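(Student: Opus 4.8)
The plan is to deduce all three parts by reducing to the general results established in Sections \ref{s:groupactions} via Lemma \ref{lem:spacegroup}. The key preliminary observation is that the joint set $A = p(V)$ of a crystal framework $\C$, together with a chosen periodic structure $\T = \{T_b : b\in L\}$, is a multilattice on the rank-$d$ lattice $L$ (finitely many translation orbits of joints), and that $\F(\C;\bK)$ is a closed infinite dimensional subspace of $X^A$ with $X = \bK^d$. Since $S$ is finitely generated by the space group action $\pi_{\fS(\C)}$, Lemma \ref{lem:spacegroup} tells us $S$ is finitely generated by $\pi_L$ for any periodic structure; in particular $\pi_L$ acts on $S$, and the hypotheses of Lemmas \ref{l:GeomDirSupp}, \ref{l:factorperiodic}, \ref{l:supportmult}, Theorem \ref{thm:multilattice}, and Theorem \ref{t:boundedflex} are all available.

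For part (i): take $f\in S$ not finitely supported. By Lemma \ref{l:GeomDirSupp} (applicable since $A/\T$ is finite and $\pi_L$ acts on $S$), $\Gamma_f \neq \{1\}$, so by Lemma \ref{l:factorperiodic}(i) $f$ is factor periodic on $L_f$, i.e. a geometric flex for $\C$ in the sense of the definition preceding Definition \ref{d:crystalflexbasis}. For part (ii): Theorem \ref{thm:multilattice} directly gives that $S$ contains a band-limited mapping, which is then a band-limited flex. For the characterisation, a finitely supported $f$ is trivially band-limited; a local geometric flex is factor periodic on a sublattice $L'$ of rank $< d$, so by Lemma \ref{l:supportmult}(ii) its support is a multilattice on $L_f \supseteq L'$—I will need to check that $L_f$ still has rank $< d$, which holds because a geometric direction lattice of full rank $d$ would force factor periodicity on a finite-index (hence rank-$d$) sublattice, contradicting the "lower rank" hypothesis—hence the support lies in the band around $\spn(L_f)$, a proper subspace. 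Conversely, if $f\in S$ is band-limited and not finitely supported, then by part (i) it is a geometric flex; by Lemma \ref{l:supportmult}(ii) its support is a multilattice on $L_f$, and this support being contained in a band around a proper subspace $\K$ forces $\spn(L_f) \subseteq \K$ (a full-rank lattice cannot sit in a band around a proper subspace), so $L_f$ has rank $< d$ and $f$ is a local geometric flex. Finally, part (iii) is immediate from Theorem \ref{t:boundedflex} applied with $W = \F(\C;\bK)$, since hypotheses (i)–(iii) of that theorem are exactly the standing hypotheses here (lattice group action acts on $S$ by Lemma \ref{lem:spacegroup}, $S$ bounded with local basis property, no element finitely supported), and $W_\infty = \F_\infty(\C;\bK)$ by definition.

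The main obstacle I anticipate is the fine point in part (ii) linking "band-limited" to "rank of $L_f$ strictly less than $d$", in both directions. The forward direction needs: $S$ contains a band-limited flex $\Rightarrow$ it is finitely supported or local geometric. If such $f$ is not finitely supported, Theorem \ref{thm:multilattice}'s proof already shows (via Lemma \ref{GeomDirSub2}) that some element of $S$ has $\Gamma/\Gamma_f$ infinite, hence $L_f$ of rank $< d$; but I must argue this for the given band-limited $f$, not merely for some element, so the cleanest route is the direct argument: $\supp(f)$ is a multilattice on $L_f$ (Lemma \ref{l:supportmult}) and lies in a band around a subspace $\K$ of codimension $\geq 1$; were $\rank L_f = d$, then $\spn(L_f) = \bR^d$ and the multilattice $\supp(f)$ would contain points arbitrarily far from every proper subspace, contradicting band-limitedness. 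So $\rank L_f < d$, i.e. $f$ is a local geometric flex. This geometric "a full-rank lattice is not band-limited" fact is elementary but should be stated explicitly. Everything else is bookkeeping across the cited lemmas.
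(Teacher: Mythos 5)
Your proposal is correct and follows essentially the same route as the paper, which likewise reduces to the lattice group action via Lemma \ref{lem:spacegroup} and then cites Lemmas \ref{l:GeomDirSupp}, \ref{l:factorperiodic}, \ref{l:supportmult}, Theorem \ref{thm:multilattice} and Theorem \ref{t:boundedflex} for parts (i)--(iii) respectively. You in fact supply more detail than the paper does on the ``moreover'' clause of (ii) --- in particular the observation that a full-rank multilattice cannot lie in a band about a proper subspace, which the paper leaves implicit.
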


\proof
By Lemma \ref{lem:spacegroup}, $S$ is finitely generated by the lattice group action $\pi_\T$ for any choice of periodic structure $\T$. 
The set of joints of $\C$ is a multilattice in $\bR^d$ and so the results of Section \ref{s:groupactions} may be applied with $A=p(V)$ and $X=\bK^d$. Statement $(i)$ follows from Lemma  \ref{l:GeomDirSupp}, Lemma \ref{l:factorperiodic} and Lemma \ref{l:supportmult}.
Statement $(ii)$ follows from $(i)$, Lemma \ref{l:supportmult} and Theorem \ref{thm:multilattice}. Statement $(iii)$ follows from Theorem \ref{t:boundedflex}.
\endproof

We now define a  \emph{transfer function} $\Psi_\C(z)$ associated with a crystal framework $\C$, a choice of periodic structure $L$ with translation group $\T$ and a choice of basis $b_1,\ldots,b_d$ for the lattice $L$. 
Let $F_v =\{p_\kappa: 1 \leq \kappa \leq |F_v|\}$ be a finite set
of joints representing the $\T$-translation classes of the joints of $\C$. Then we may conveniently
label the joints of $\C$ as,
\[
\{p_{\kappa, k}: 1 \leq \kappa \leq |F_v|, k\in \bZ^d\},
\]
where $p_{\kappa, k}= T_kp_\kappa$. Also let $F_e$ be a finite set  of representative bars for the translation classes of the  bars. The pair $(F_v, F_e)$, referred to as a \emph{motif} for $\C$, together with the periodic structure $\T$ carry  the essential geometric information which defines $\C$ \cite{owe-pow-crystal, pow-poly}.

\begin{defn}\label{d:transferfunction}
Let $\C$ be a crystal framework in $\bR^d$ with  motif 
$(F_v, F_e)$ and for $e=vw\in F_e$ let $p(e)=p(v)-p(w)$. The {\em transfer function} $\Psi_\C(z)$ is a matrix-valued function on $\bC_*^d$ whose rows are labelled by the edges of $F_e$ and whose columns are labelled by the vertex-coordinate pairs in $F_v\times \{1,\dots ,d\}$.
The row for an edge $e= (v,k)(w,l)$ with $v\neq w$ takes the form,
\[\kbordermatrix{& & & & v & & & & w & & & \\
e & 0 & \cdots &0 & p(e){z}^{-k} &0& \cdots&0 &- p(e){z}^{-l} &0& \cdots &0 },\]
while if $v=w$ it takes the form,
\[\kbordermatrix{& & & & v & & & \\
e & 0 & \cdots &0 & p(e)({z}^{-k} - {z}^{-l}) &0& \cdots&0 }.
\]
\end{defn}
\medskip

The restriction of $\Psi_\C(z)$ to the $d$-torus $\bT^d$ gives the {\em symbol function} $\Phi_\C(z)$ considered in \cite{bad-kit-pow,owe-pow-crystal,pow-poly}.
With the labelling of the joints of $\C$ given above, note that  a complex velocity field $u\in\V(\C,\bC)$ is a map $u:F_v\times \bZ^d \to \bC^d$ where $u(p_\kappa,k)$ is the velocity vector assigned to the joint $p_{\kappa,k}$. Note that $u$ is factor periodic, 
 with periodicity multifactor $\omega \in \bC_*^d$, if $u(p_\kappa,k)=\omega^k u(p_\kappa,0)$ for all  $p_\kappa \in F_v$, $k\in \bZ^d$. Here  $\omega^k$ is the product $\omega_1^{k_1} \dots \omega_d^{k_d}$.
We a write $u = b \otimes e_\omega$ for this vector field, where $b$ is the vector $(u(p_\kappa,0))_{\kappa\in F_v}$ in $\bC^{d|F_v|}$ and $e_\omega$ is the multi-sequence $(\omega^k)_{k\in\bZ^d}$. 

\begin{thm}
Let $\C$ be a crystal framework in $\bR^d$ with motif $(F_v,F_e)$ and let  $u=b\otimes e_\omega$ where $\omega\in\bC_*^d$ and $b\in \bC^{d|F_v|}$. The  following conditions are equivalent.
\begin{enumerate}[(i)]
\item $u\in\F(\C,\bC)$.
\item $\Psi({\omega}^{-1})b=0.$
\end{enumerate}
\end{thm}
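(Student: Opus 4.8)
The plan is to unwind both conditions in terms of the defining flex equations for $\C$ and observe that they coincide row-by-row with the components of $\Psi_\C$. First I would recall that, by definition, a velocity field $u\in\V(\C,\bC)$ is an infinitesimal flex precisely when $\langle p_a - p_b, u(p_a)\rangle = \langle p_a - p_b, u(p_b)\rangle$ for every bar of $\C$. Since every bar of $\C$ is a $\T$-translate $T_m e$ of some representative bar $e=(v,k)(w,l)\in F_e$, and since $p_a - p_b = T_m p(e) - \text{(translation part cancels)} = p(e)$ is translation invariant (the edge vector of a bar is unchanged by translation), the flex condition for the translate $T_m e$ reads
\[
\langle p(e),\, u(p_v, k+m) - u(p_w, l+m)\rangle = 0.
\]
So $u\in\F(\C,\bC)$ if and only if this holds for all $e\in F_e$ and all $m\in\bZ^d$.

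Next I would substitute the factor-periodic form $u(p_\kappa, j) = \omega^j\, b_\kappa$, where $b=(b_\kappa)_{\kappa\in F_v}$. For a bar $e=(v,k)(w,l)$ with $v\neq w$ the condition becomes
\[
\langle p(e),\, \omega^{k+m} b_v - \omega^{l+m} b_w\rangle = \omega^m \langle p(e),\, \omega^{k} b_v - \omega^{l} b_w\rangle = 0,
\]
and since $\omega^m\neq 0$ this is equivalent to $\langle p(e),\, \omega^{k} b_v - \omega^{l} b_w\rangle = 0$, independently of $m$. Reading $p(e)$ as a row vector, $\langle p(e), x\rangle = p(e)\,x$, so this is exactly the $e$-row of $\Psi_\C(z)$ evaluated at $z=\omega^{-1}$ (recall the row has entries $p(e)z^{-k}$ and $-p(e)z^{-l}$, and $z^{-k}=\omega^{k}$ when $z=\omega^{-1}$) applied to the vector $b$, giving the $e$-component of $\Psi_\C(\omega^{-1})b$. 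The case $v=w$ (a bar joining two translates of the same vertex class) is handled identically, yielding the $e$-component $p(e)(\omega^{k}-\omega^{l})b_v$, which again matches the $v=w$ row of $\Psi_\C(\omega^{-1})$.

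Assembling these over all $e\in F_e$, the vector $\Psi_\C(\omega^{-1})b$ is zero if and only if the flex condition holds for every representative bar, which by the first step is equivalent to $u\in\F(\C,\bC)$. This establishes the equivalence of (i) and (ii). I do not anticipate a serious obstacle here; the only point requiring a little care is the bookkeeping in the index shift $z = \omega^{-1}$ versus $z=\omega$ and the translation-invariance of the edge vectors $p(e)$, so that the quantifier over $m\in\bZ^d$ can be dropped. I would state that bookkeeping explicitly and otherwise keep the argument brief.
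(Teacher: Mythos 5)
Your proposal is correct and is exactly the standard direct verification: the paper itself gives no details, simply noting that "the proof is similar to the unimodular case given in Power \cite{pow-poly}", and that unimodular argument is precisely the row-by-row matching of the translated flex equations with the rows of $\Psi_\C(\omega^{-1})$ that you carry out (with the only new point being that $\omega^m\neq 0$ for $\omega\in\bC_*^d$, which you note). The bookkeeping $z^{-k}=\omega^k$ at $z=\omega^{-1}$ and the translation invariance of $p(e)$ are handled correctly, so nothing is missing.
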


\begin{proof}
The proof is similar to the unimodular case given in Power \cite{pow-poly}.
\end{proof}

\begin{defn}
Let $\C$ be a crystal framework in $\bR^d$ with a transfer function $\Psi_\C(z)$.
\begin{enumerate}[(a)]
\item 
The \emph{geometric flex  spectrum} $\Gamma(\C)$ of $\C$ is defined to be the set, 
 \[
 \Gamma(\C):=\{\omega \in \bC^d_* : \ker \Psi_\C({\omega}^{-1}) \neq \{0\}\}.
 \]
\item
The \emph{rigid unit mode spectrum}, or RUM spectrum, of $\C$ is  the set,
 \[
 \Omega(\C):=\{\omega \in \bT^d : \ker \Phi_\C(\overline{\omega}) \neq \{0\}\},
 \]
 where $\Phi_\C(z)$ is the restriction of $\Psi_\C(z)$ to the $d$-torus. 
\end{enumerate}
\end{defn}

Note that $\Gamma(\C)$ is  the set of points $\omega\in\bC_*^d$ for which there exists a nonzero factor periodic infinitesimal flex with periodicity multifactor $\omega$. This set depends both on the choice of periodic structure $L$ for $\C$ and the choice of basis $b_1,\ldots,b_d$ for $L$.  
The RUM spectrum is the subset  $\Omega(\C)=\Gamma(\C)\cap \bT^d$  of  multifactors $\omega$ with unimodular coordinates. Such multifactors are also referred to as \emph{multi-phases}, or simply \emph{phases}. 
Note also that for a critically coordinated crystal framework, in the elementary sense that $|F_e|=d|F_v|$, the transfer function is a square matrix-valued analytic function on its domain. It follows in this case that the determinant provides a multivariable analytic function and that the geometric spectrum is given by its set of zeros.

A small gallery of crystal frameworks and their transfer functions is given in Badri, Kitson and Power \cite{bad-kit-pow} and their RUM spectra are determined. See also in Power \cite{pow-poly} where the connection with rigid unit mode vibrations in materials science is discussed.
In Section \ref{s:examples} we examine the geometric 
flex spectrum and the RUM spectrum of a novel bipyramid framework. In close analogy with the semi-crystallographic kite framework of Example \ref{eg:surface} we see that there are geometrically decaying surface flexes associated with directions with no half-turn symmetry.

 \begin{defn}
Let $\C$ be a crystal framework in $\bR^d$ with a transfer function $\Psi_\C(z)$. The RUM spectrum $\Omega(\C)$ is said to \emph{contain linear structure} if the logarithmic representation of $\Omega(\C)$ in $[0,2\pi)^d$ contains a $t$-dimensional set of the form $[0,2\pi)^d \cap H$ where $H$ is an affine subspace of $\bR^d$ and $1\leq t \leq d$.
  
 \end{defn}

If the RUM spectrum for a particular periodic structure contains linear structure then the same is true for the RUM spectrum for any periodic structure. This follows from the fact that the RUM spectrum for a periodic structure arises as the image of the primitive RUM spectrum under a natural surjective map  \cite{pow-poly}. 

 \begin{thm}
 \label{t:omegalinear}
 Let $\C$ be a crystal framework in $\bR^d$ and let $S$ be a crystal flex spanning set for $\C$. If $S$ contains a bounded band-limited flex then $\Omega(\C)$ contains linear structure.
 \end{thm}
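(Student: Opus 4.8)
The plan is to extract from a bounded band-limited flex in $S$ a nonzero bounded factor periodic flex whose periodicity multifactor, read off against the lattice basis $b_1,\dots,b_d$, has unimodular coordinates and in fact gives a point of $\Omega(\C)$ lying on a positive-dimensional affine slice of the torus. First I would invoke Theorem \ref{thm:crystal}(ii): since $f\in S$ is band-limited, it is either finitely supported or a local geometric flex. The finitely supported case cannot occur here, because a finitely supported flex is bounded but its support is not band-limited along a \emph{proper} subspace in the sense required to produce linear structure --- more precisely, I should rule it out by noting that if every band-limited element of $S$ were finitely supported then $S$ would contain no element witnessing the infinite dimension of $\F(\C;\bK)$ with the geometric periodicity needed; the cleaner route is to use that the hypothesis gives a band-limited flex that is bounded and to observe that a bounded finitely supported velocity field spans a finite dimensional space, so by Lemma \ref{GeomDirSub2} (applied via Lemma \ref{lem:spacegroup}) some element $f\in S$ has $\Gamma/\Gamma_f$ of infinite order and is therefore a genuine local geometric flex; intersecting with the band-limited ones from the hypothesis, I would arrange that this $f$ is simultaneously bounded and band-limited.

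Granting a bounded band-limited local geometric flex $f\in S$, Lemma \ref{l:supportmult} says $f$ is factor periodic on the lattice $L_f$ of geometric directions, which by Theorem \ref{thm:multilattice}'s argument has rank $k$ with $1\le k<d$, and $\supp(f)$ is a multilattice on $L_f$, hence contained in the band around $\K=\spn(L_f)$. By Lemma \ref{l:factorperiodic}(ii), since $f$ is bounded, the components of the periodicity multifactor of $f$ relative to a basis of $L_f$ are unimodular; extending such a basis of $L_f$ to a basis $b_1,\dots,b_d$ of $L$, the multifactor of $f$ relative to $b_1,\dots,b_d$ is $\omega=(\omega_1,\dots,\omega_d)$ with $\omega_j$ unimodular for the $k$ indices spanning $L_f$ and with $T_{b_j}f=\omega_j f$ holding exactly when $b_j\in L_f$. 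The key point is then that for each of the $d-k$ remaining basis vectors $b_j\notin L_f$ we can translate $f$ by $T_{b_j}$ to produce new flexes, and the subspace of $\F(\C;\bK)$ spanned by $\{T_b f: b\in L\}$ is invariant under all lattice translations; decomposing this space into joint eigenspaces for the commuting unitary-up-to-scalar translation operators (using boundedness of $f$ to keep everything unimodular, via Lemma \ref{l:factorperiodic}(ii) applied to the translates, which are also bounded band-limited) yields a whole $(d-k)$-parameter family of bounded factor periodic flexes with multifactors $(\omega_1,\dots,\omega_k,\zeta_{k+1},\dots,\zeta_d)$, $\zeta_j\in\bT$ arbitrary. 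Each such multifactor lies in $\Omega(\C)$ by the equivalence $(i)\Leftrightarrow(ii)$ of the theorem preceding the definition of $\Gamma(\C)$, so $\Omega(\C)$ contains the slice $\{(\omega_1,\dots,\omega_k)\}\times\bT^{d-k}$, which in logarithmic coordinates is $[0,2\pi)^d\cap H$ for the affine subspace $H=\{\theta_1=\arg\omega_1,\dots,\theta_k=\arg\omega_k\}$ of dimension $t=d-k\ge 1$.

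I would carry the steps out in this order: (1) reduce to a bounded band-limited $f\in S$ that is a local geometric flex, not finitely supported, using Theorem \ref{thm:crystal}(ii), Lemma \ref{lem:spacegroup} and Lemma \ref{GeomDirSub2}; (2) apply Lemma \ref{l:supportmult} and the rank argument of Theorem \ref{thm:multilattice} to get $L_f$ of rank $k<d$, then Lemma \ref{l:factorperiodic}(ii) to get unimodularity; (3) choose a basis of $L$ adapted to $L_f$ and set up the translation-eigenspace decomposition of $\spn\{T_bf\}$; (4) conclude that a $(d-k)$-torus of multifactors lies in $\Omega(\C)$ and translate this into the logarithmic linear-structure statement. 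The main obstacle is Step (3): one must check that the translates $T_{b_j}f$ for $b_j\notin L_f$ genuinely generate extra directions (so that the eigenvalue parameters are free on a full $(d-k)$-torus rather than constrained) and that the resulting eigenvectors are nonzero bounded flexes --- the cleanest fix is to note that the finite dimensional space $V=\spn\{T_bf:b\in L\}$ carries commuting operators $U_j=\pi_\T(T_{b_j})$ which, because $f$ and all its translates are bounded and band-limited hence factor periodic with unimodular multifactors, are simultaneously diagonalizable with unimodular spectrum; any joint eigenvector $g$ is a nonzero factor periodic flex with a unimodular multifactor agreeing with $\omega$ in the first $k$ coordinates, and letting $b_j$ range over a transversal shows the last $d-k$ coordinates of these multifactors fill out all of $\bT^{d-k}$ because $V$ is $\T$-invariant and $\bigcup_b T_b f$ has support a multilattice on $L$ (not just on $L_f$), forcing the relevant restriction-of-spectrum map onto the full subtorus. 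Once that is pinned down, the remaining steps are bookkeeping with the definitions of $\Omega(\C)$ and of linear structure.
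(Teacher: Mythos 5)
Your outline correctly identifies most of the ingredients (Theorem \ref{thm:crystal}(ii), factor periodicity on a rank $k<d$ sublattice, unimodularity of the multifactor from Lemma \ref{l:factorperiodic}(ii), and the final translation into logarithmic coordinates), but the central step --- producing a full $(d-k)$-torus of RUM multifactors --- is broken as you set it up. You propose to diagonalize the commuting translation operators on ``the finite dimensional space $V=\spn\{T_bf:b\in L\}$''. This space is not finite dimensional: since $f$ is band-limited about the proper subspace $\K=\spn(L_f)$, the translates $T_bf$ for $b$ ranging over distinct cosets of $L_f$ in $L$ are supported in parallel bands that eventually separate, so they are linearly independent and there are infinitely many of them. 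Worse, even granting some finite dimensional invariant subspace, simultaneous diagonalization can only ever produce finitely many joint eigenvalues, so it cannot yield the uncountable family of multifactors $(\omega_1,\dots,\omega_k,\zeta_{k+1},\dots,\zeta_d)$ with $\zeta\in\bT^{d-k}$ arbitrary that you need; your own ``main obstacle'' paragraph senses this but the proposed fix (``forcing the relevant restriction-of-spectrum map onto the full subtorus'') is not an argument. The missing idea is the direct construction the paper uses: for \emph{each} $\omega_*\in\bT^{d-t}$ separately, form the infinite weighted sum $w=\sum_{k'\in\bZ^{d-t}}\omega_*^{k'}X_{k'}u$ over a transversal of translations. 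Band-limitedness is exactly what makes this pointwise well defined (at each joint only finitely many translates are nonzero), translation invariance of $\F(\C;\bK)$ makes $w$ a flex, and $w$ is by construction factor periodic with multifactor $(\omega_0,\omega_*)$. That one infinite-sum construction is the whole theorem, and it does not appear in your proposal.

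There is a second, smaller gap at the start: you try to \emph{exclude} the case that the given bounded band-limited flex is finitely supported, but your route via Lemma \ref{GeomDirSub2} only produces \emph{some} element of $S$ with $\Gamma/\Gamma_f$ infinite; nothing lets you ``arrange'' that this element coincides with, or inherits boundedness and band-limitedness from, the flex in the hypothesis. The case does not need excluding: if the bounded band-limited flex in $S$ is finitely supported, then (as the paper does, citing Theorem 5.6 of the reference on polynomials for crystal frameworks) $\Omega(\C)=\bT^d$ and linear structure is immediate. Handle that case directly and reserve the torus construction for the non-finitely-supported case.
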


 \begin{proof}
Let $u\in S$ be a bounded band-limited flex for $\C$. If $u$ is finitely supported then, by \cite[Theorem 5.6]{pow-poly}, $\Omega(\C)=\bT^d$ and so $\Omega(\C)$ contains linear structure. Suppose $u$ is not finitely supported. By Theorem \ref{thm:crystal}$(ii)$, either $u$ is finitely supported or there is a periodic structure $\T$ for $\C$ such that $u$ is factor periodic with respect to a rank $t$ sublattice, where $1\leq t\leq d-1$. Let $\T_u$ be the subgroup of the translation group $\T$ determined by this sublattice. Also, let $\omega_0= (\omega_1, \dots ,\omega_t)$ be the periodicity multifactor. By Lemma \ref{l:factorperiodic}, since $u$ is bounded, $\omega_0\in\bT^t$. Assume that $\T_u$ has generators $T_{g(1)}, \ldots , T_{g(t)}$ and choose translations $T_{g(t+1)}\dots ,T_{g(d)}$ such that $T_{g(1)}, \ldots , T_{g(d)}$ is a set of generators for a full rank subgroup $\T'$ of $\T$. 
Let $\omega_* = (\omega_{t+1},\ldots , \omega_d)$ be an arbitrary point in $\bT^{d-t}$ and define
\[
w = \sum_{k'\in \bZ^{d-t}} \omega_*^{k'}X_{k'}u,
\] 
where $X_{k'}= k'_1T_{g(t+1)}+\ldots + k'_{d-t}T_{g(d)}$.
This velocity field is well-defined, since $u$ is band-limited relative to $\T_u$, and is an infinitesimal flex since $\F(\C;\bK)$ is invariant under the lattice group action $\pi_\T$. Also, $w$ is  factor periodic for $\omega = (\omega_0,\omega_*)$ and the periodic structure $\T'$. It follows that $\Omega(\C)$ contains $(\omega_0,\omega_*)$ for every point $\omega_*$ and so the RUM spectrum contains linear structure of dimension $d-t$.
 \end{proof}

\begin{cor}
Let $\C$ be a crystal framework in $\bR^2$ whose RUM spectrum in $[0,2\pi)^2$ is a proper infinite subset which contains no line segments. Then $\C$ does not possess a crystal flex spanning set which includes a bounded band-limited flex.
\end{cor}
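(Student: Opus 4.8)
The plan is to argue by contraposition, deriving the existence of line segments in the RUM spectrum from the hypothesis that $\C$ admits a crystal flex spanning set $S$ containing a bounded band-limited flex. First I would observe that, since the RUM spectrum is assumed to be a \emph{proper} infinite subset of $[0,2\pi)^2$, the flex space $\F(\C;\bR)$ must be infinite dimensional: a finite dimensional flex space would force $\Omega(\C)$ to be finite (the RUM spectrum is cut out by a nonzero minor of the finite matrix-valued symbol function $\Phi_\C$, so it is either all of $\bT^d$ or a proper algebraic subset, and a proper algebraic subset of $\bT^2$ that is infinite must contain a curve; but an infinite dimensional family of factor periodic flexes over such a curve would already contradict finite dimensionality). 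So the hypotheses of Theorem~\ref{t:omegalinear} are in force.

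Next I would simply apply Theorem~\ref{t:omegalinear} directly: since $S$ is a crystal flex spanning set for $\C$ and contains a bounded band-limited flex, that theorem yields that $\Omega(\C)$ contains linear structure, i.e.\ its logarithmic representation in $[0,2\pi)^2$ contains a set of the form $[0,2\pi)^2\cap H$ with $H$ an affine subspace of $\bR^2$ of dimension $t$, $1\leq t\leq 2$. If $t=2$ then $\Omega(\C)=\bT^2$, contradicting that the RUM spectrum is a \emph{proper} subset. Hence $t=1$, so $H$ is an affine line and $[0,2\pi)^2\cap H$ is (a union of) line segments in $[0,2\pi)^2$. This contradicts the hypothesis that the RUM spectrum contains no line segments.

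The only genuinely delicate point is the first step, establishing that $\F(\C;\bR)$ is infinite dimensional from the assumption that $\Omega(\C)$ is a proper infinite subset of the $2$-torus; one must be slightly careful because the theorem statement we are proving is phrased so as to let us invoke Theorem~\ref{t:omegalinear}, whose conclusion already presupposes (via the ambient setup of Theorem~\ref{thm:crystal}) that $\F(\C;\bK)$ is infinite dimensional. In fact this can be sidestepped: if $\F(\C;\bR)$ were finite dimensional there would be no infinite free spanning set at all, so a \emph{fortiori} no crystal flex spanning set containing a bounded band-limited flex, and the conclusion holds vacuously. Thus in all cases the corollary follows, and the main work is simply the bookkeeping of the case $t=2$ versus $t=1$ against the word ``proper'' and ``contains no line segments'' in the hypothesis. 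I expect no substantive obstacle beyond this dichotomy.
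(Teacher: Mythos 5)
Your proposal is correct and follows exactly the route the paper intends: the corollary is an immediate consequence of Theorem \ref{t:omegalinear}, since in dimension $2$ ``linear structure'' means either all of $[0,2\pi)^2$ (excluded by ``proper'') or a line segment (excluded by hypothesis). Your observation that the finite-dimensional case is vacuous (a crystal flex spanning set is by definition a free spanning set of an infinite dimensional space) is the right way to dispose of the only delicate point, and renders your first paragraph's algebraic-subset digression unnecessary.
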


This corollary applies in particular to the 2D zeolite framework $\C_{\rm oct}$ whose motif has bars belonging to a regular octagonal ring of equilateral triangles. Indeed the RUM spectrum of $\C_{\rm oct}$ has been shown to be an
curve containing no line segments \cite{bad-kit-pow,pow-poly}.

The same argument in the proof of Theorem \ref{t:omegalinear} gives a parallel corollary for the geometric flex spectrum. Indeed, let us say that the geometric flex spectrum $\Gamma(\C)$ \emph{contains linear structure} if there exists a point $ \omega = (r_1e^{i\eta_1},\dots ,r_de^{i\eta_d})
$ in $\Gamma(\C)$ such that the intersection of
$\Gamma(\C)$ with the torus
\[
\bT^d_\omega := \{(r_1e^{i\theta_1},\dots ,r_de^{i\theta_d}):\theta_1,\ldots,\theta_d\in[0,2\pi)\}
\]
contains $t$-dimensional linear structure in the sense above for $\bT^d$. Then if there is a crystal spanning set for $\C$ which contains a band-limited flex it follows, as in the proof of Theorem \ref{t:omegalinear}, that $\Gamma(\C)$ contains linear structure.

We can use this observation together with Theorem \ref{thm:crystal} to obtain the following necessary condition for the existence of a crystal flex spanning set or basis. 

\begin{thm}\label{t:gammalinear}
Let $\C$ be a crystal framework in $\bR^d$ whose geometric flex spectrum is a proper infinite subset which contains no linear structure. Then $\C$ does not possess a crystal flex spanning set.
\end{thm}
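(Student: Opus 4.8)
The plan is to argue by contraposition: assume $\C$ possesses a crystal flex spanning set $S$ and show that $\Gamma(\C)$ must then contain linear structure, contradicting the hypothesis. Since $\Gamma(\C)$ is assumed infinite, it is in particular nonempty beyond the trivial point, and the existence of a crystal flex spanning set forces $\F(\C;\bK)$ to be infinite dimensional (if it were finite dimensional then $\Gamma(\C)$ would be finite, as the geometric spectrum of a critically-or-overcoordinated framework with finite-dimensional flex space is contained in a proper analytic subvariety that is in fact finite in the relevant sense — more simply, a finite-dimensional flex space cannot contain infinitely many geometrically independent factor-periodic flexes). So we may assume $\F(\C;\bK)$ is infinite dimensional and apply Theorem \ref{thm:crystal}.

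First I would invoke Theorem \ref{thm:crystal}(ii): since $S$ is a crystal flex spanning set for the infinite dimensional space $\F(\C;\bK)$, it must contain a band-limited flex $u\in S$. Next I would split into the two cases of Theorem \ref{thm:crystal}(ii). If $u$ is finitely supported, then by \cite[Theorem 5.6]{pow-poly} (as used in the proof of Theorem \ref{t:omegalinear}) we have $\Omega(\C)=\bT^d$, hence $\bT^d\subseteq\Gamma(\C)$, which is $d$-dimensional linear structure in the sense defined for the geometric flex spectrum (take $\omega$ with all $r_i=1$); this already contradicts the hypothesis that $\Gamma(\C)$ contains no linear structure. If instead $u$ is not finitely supported, then by Theorem \ref{thm:crystal}(ii) it is a local geometric flex, that is, factor periodic with respect to a rank $t$ sublattice of some periodic structure $\T$, with $1\le t\le d-1$, and periodicity multifactor $\omega_0\in\bC_*^t$ (not necessarily unimodular now, since $u$ need not be bounded).

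Then I would run the averaging construction from the proof of Theorem \ref{t:omegalinear}, but over the complex torus rather than the real torus. Choosing generators $T_{g(1)},\dots,T_{g(t)}$ for the sublattice $\T_u$ of geometric directions of $u$ and completing to generators $T_{g(1)},\dots,T_{g(d)}$ of a full rank subgroup $\T'$ of $\T$, and fixing any $\omega_*=(\omega_{t+1},\dots,\omega_d)\in(\bC_*)^{d-t}$ with $|\omega_j|$ equal to the appropriate fixed modulus, I would set
\[
w = \sum_{k'\in\bZ^{d-t}}\omega_*^{k'}X_{k'}u, \qquad X_{k'}=k_1'T_{g(t+1)}+\cdots+k_{d-t}'T_{g(d)}.
\]
This sum is well-defined because $u$ is band-limited transverse to the span of $\T_u$ (so for each joint only finitely many translates $X_{k'}u$ are nonzero there), it lies in $\F(\C;\bK)$ because the flex space is invariant under $\pi_\T$, and it is factor periodic for the multifactor $\omega=(\omega_0,\omega_*)$ relative to $\T'$. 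Hence every such $\omega$ lies in $\Gamma(\C)$, and as $\omega_*$ ranges over the $(d-t)$-torus $\{(r_{t+1}e^{i\theta_{t+1}},\dots,r_de^{i\theta_d})\}$ with the moduli $r_j$ held fixed, we obtain a $(d-t)$-dimensional piece of linear structure inside $\Gamma(\C)\cap\bT^d_\omega$. Since $d-t\ge 1$, this contradicts the assumption that $\Gamma(\C)$ contains no linear structure, completing the proof.

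The main obstacle I anticipate is the bookkeeping around the non-unimodular case: one must be careful that the construction of $w$ genuinely produces a \emph{factor periodic} flex with a well-defined multifactor relative to a \emph{full rank} periodic structure (and that such a structure exists compatibly with the sublattice $\T_u$ and the original choice defining $\Gamma(\C)$), and that the resulting linear structure is measured inside the correct translated torus $\bT^d_\omega$ as in the definition given just before the statement. The convergence of the defining sum for $w$ is the other point requiring care, but it follows cleanly from band-limitedness exactly as in Theorem \ref{t:omegalinear}; the only genuinely new ingredient over that theorem is dropping the unimodularity of $\omega_0$, which does not affect the argument since the definition of linear structure in $\Gamma(\C)$ is phrased precisely to accommodate an arbitrary base point $\omega$.
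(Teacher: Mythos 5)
Your proposal is correct and follows essentially the same route as the paper: the infinitude of the geometric spectrum forces $\F(\C;\bK)$ to be infinite dimensional (via linear independence of geometric flexes with distinct multifactors), Theorem \ref{thm:crystal} then yields a band-limited flex in any crystal flex spanning set, and the averaging construction from the proof of Theorem \ref{t:omegalinear}, run over the torus $\bT^d_\omega$ with the moduli held fixed, produces linear structure in $\Gamma(\C)$, contradicting the hypothesis. The paper gives this argument only in outline, deferring to the discussion immediately preceding the theorem; your write-up simply makes those steps explicit.
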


\begin{proof}
A finite set of geometric flexes $u^{1},\ldots,u^{r}$ with distinct periodicity multifactors $\omega^{1},\ldots,\omega^{r}$ is linearly independent and so, by the hypotheses, the infinitesimal flex space of $\C$ is infinite dimensional. By Theorem \ref{thm:crystal}, any crystal flex spanning set for $\C$ contains a band-limited flex.  By the discussion above it follows that the geometric flex spectrum contains linear structure, a contradiction. 
\end{proof}

\begin{rem}
We recall that early experimental and computational studies of rigid unit modes in silicates typically revealed linear structure in the RUM spectrum for their high temperature phases. See, for example, Dove et al \cite{dov-hei-ham} where studies of curved RUM surfaces, occurring in tridymite for example, are also indicated. 

It is natural to pose the following questions regarding converse implications to the statements in the results above. If the geometric flex spectrum (resp.~RUM spectrum) has linear structure  does it follow that there exists a band-limited flex (resp.~bounded band-limited flex)? 
More generally, it would be of interest to obtain sufficient conditions, including linear structure conditions on the geometric flex spectrum, for the existence of a crystal flex basis or an essential crystal flex basis.  
Such sufficient conditions would explain more fully the nature of the experimental phenomenon in terms of the existence and nonexistence of flexes which are band-limited with respect to lines and planes in the crystal structure. 
\end{rem}

\section{Examples}\label{s:examples}
We now determine crystal flex spanning sets and bases for a number of elementary crystal frameworks. In these examples the spanning sets are finitely generated by a small number of band-limited flexes. To quantify this we introduce the following associated measure of flex complexity for any crystal framework $\C$.

\begin{defn} 
The \emph{flex complexity} $cpx(\C)$ is  the minimum
of the cardinalities of the generating sets for a crystal spanning set for $\F(\C;\bR)$. Moreover, $cpx(\C)=\infty$ if there is no such generating set. 
\end{defn}

\subsection{The frameworks   $\C_{\bZ^2}$ and $\C_{\rm hex}$.} 
Let $\C_{\bZ^2}$ be the grid framework in two dimensions, that is, the framework with joints located on the $\bZ^2$ lattice and bars between nearest neighbours. Let $\S_{\rm grid} = \{u_n, v_n: n\in \bZ\}$ be a set of velocity fields with $u_n$ (resp. $v_m$)  supported by the joints on the line $y=n$ (resp. $x=m$) and with unit velocities in the direction of the support line. Then it is elementary to check, by an exhaustion argument in the style of the proofs below, that $\S_{\rm grid}$ is a  crystal basis for  the space $\F(\C_{\bZ^2};\bR)$.

Let $\C_{\rm hex}$ be the honeycomb framework associated with the regular hexagonal tiling of the plane. Note that any hexagon ring subframework, $\H$ say,   is the support of a  (normalised) local infinitesimal flex, $u_\H$ say, which acts as infinitesimal rotation of $\H$. Let $\S_{\rm hex}$ be a set of identical non-zero flexes of this type for all the honeycomb cells. We claim this is a crystal flex spanning set.  We give the argument for this since it is typical of the simple exhaustion argument needed to show that a given set is a free spanning set or a free basis. Note also that $\sum_\H u_\H$ is the zero infinitesimal flex of 
$\C_{\rm hex}$ so $\S_{\rm hex}$ is not a free basis.
On the other hand, as the proof below shows, it has a curious minimal redundancy property, in the sense that the removal of any flex from $\S_{\rm hex}$
gives a free basis, although not a crystal basis.

From these observations and Proposition \ref{p:hexspanningset} it follows that  $cpx(\C_{\bZ^2})=cpx(\C_{\rm hex}) 
=1$.
\begin{center}
\begin{figure}[ht]
\centering
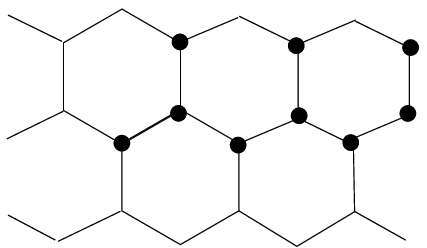
\caption{Part of $\C_{\rm hex}$.}
\label{f:hexflex}
\end{figure}
\end{center}

\begin{prop}\label{p:hexspanningset}
$\S_{\rm hex}$ is a crystal spanning set for the space $\F(\C_{\rm hex};\bR)$ of real infinitesimal flexes of $\C_{\rm hex}$.
\end{prop}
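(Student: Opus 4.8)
The plan is to verify that $\S_{\rm hex}$ satisfies the two requirements of Definition~\ref{d:crystalflexbasis}: that it is a free spanning set for $W=\F(\C_{\rm hex};\bR)$ in the sense of Definition~\ref{d:freespanning}, and that it is finitely generated by the space group action $\pi_{\fS(\C_{\rm hex})}$. Concretely this breaks into four checks: (1) $u_\H\in W\setminus\{0\}$ for every hexagonal face $\H$; (2) the family $(u_\H)$ tends to zero strictly; (3) $W\subseteq\M(\S_{\rm hex})$; (4) $\S_{\rm hex}$ has a finite generating set under $\pi_{\fS(\C_{\rm hex})}$. Points (1), (2) and (4) are routine; the substance is in (3).

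For (1) I would first record the elementary local geometry of $\C_{\rm hex}$: the three bars at any joint make mutual angles $2\pi/3$, so at a vertex $v$ of a hexagonal face $\H$ the interior bisector of the two ring bars at $v$ points at the centre $O$ of $\H$ and the remaining (``pendant'') bar at $v$ points radially away from $O$. Hence the velocity field rotating $\H$ rigidly about $O$ and vanishing off the six-joint set $V(\H)$ assigns to each $v\in V(\H)$ a vector orthogonal to $p_v-O$, hence orthogonal to the pendant bar at $v$, and it preserves the ring bars to first order; so $u_\H\in\F(\C_{\rm hex};\bR)\setminus\{0\}$ with $\supp(u_\H)=V(\H)$. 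For (2): each joint of $\C_{\rm hex}$ lies on exactly three hexagonal faces, so at each joint $p$ at most three of the maps $u_\H$ are nonzero; thus $(u_\H)$ tends to zero strictly and $\M(\S_{\rm hex})$ is a well-defined subspace of $\V(\C_{\rm hex};\bR)$. For (4): the translation subgroup $\T(\C_{\rm hex})$ already acts transitively on the hexagonal faces, and any isometry $T\in\fS(\C_{\rm hex})$ sends $u_\H$ to $\pm u_{T\H}$ (the sign recording whether $T$ is orientation preserving), so $\S_{\rm hex}$ is generated, up to scalars, by the single flex $u_{\H_0}$; in particular $cpx(\C_{\rm hex})\le 1$.

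The main step (3) I would prove by an exhaustion argument carried out on the coefficients. Two facts drive it. First, the hexagonal faces of $\C_{\rm hex}$, with adjacency meaning ``sharing a bar'', form a connected graph (a triangular lattice). Second, at each joint $p$ the three vectors $\{u_\H(p):\H\ni p\}$ are pairwise linearly independent and sum to $0$: they are equal-length vectors obtained by rotating $p-O_1$, $p-O_2$, $p-O_3$ through a right angle, where $O_1,O_2,O_3$ are the centres of the three faces at $p$, and the vectors $p-O_i$ point in directions at mutual angle $2\pi/3$. Consequently, for each joint $p$ the linear system $w(p)=\sum_{\H\ni p}\alpha_\H u_\H(p)$ in the three unknowns $\alpha_\H$ $(\H\ni p)$ has rank $2$ and becomes uniquely solvable once the value of any one of these unknowns is prescribed. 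Now, given $w\in\F(\C_{\rm hex};\bR)$, fix a face $\H_0$, put $\alpha_{\H_0}=0$, and propagate outward: at each joint, determine the still-undetermined coefficients of the faces there from the already-determined ones via the equation at that joint. Connectedness of the adjacency graph assigns a scalar $\alpha_\H$ to every face, and by construction the velocity field $\sum_\H\alpha_\H u_\H$ — well defined in $\V(\C_{\rm hex};\bR)$ because only finitely many terms are nonzero at any joint — agrees with $w$ at every joint. Hence $w=\sum_\H\alpha_\H u_\H\in\M(\S_{\rm hex})$, so $W\subseteq\M(\S_{\rm hex})$.

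The main obstacle is the well-definedness of the propagation in (3): one must check that when a coefficient $\alpha_\H$ is forced along two different chains of adjacent faces the two values coincide. Since the cycle space of the triangular lattice is generated by its triangles, and each triangle of mutually adjacent faces corresponds to a joint of $\C_{\rm hex}$ (the three faces meeting there), this reduces to a local check around a single joint $p$, where it follows from the bar-equations of $w$ at the bars incident to $p$ together with the flex identities for the (at most three) faces involved; I would carry that computation out in coordinates adapted to a regular hexagon. The same computation shows $\sum_\H u_\H=0$, explaining why $\S_{\rm hex}$ is a spanning set but not a free basis, while deleting any one $u_{\H_0}$ removes this unique relation and leaves a free basis, as stated in the surrounding text. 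This exhaustion is the prototype of those needed for the other frameworks treated in this section.
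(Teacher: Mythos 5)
Your proposal is correct in outline but takes a genuinely different route from the paper. The paper argues at the level of joints: it subtracts local flexes one cell at a time to annihilate the given flex $z$ on an expanding family of joint sets (first a horizontal path through the origin, then successive horizontal strips), using the first-order flex condition to show that once a row of joints is fixed the next row of ``apex'' joints is fixed for free; the coefficients are then read off as the subtracted multiples. You instead work dually, on the coefficients: at each joint $p$ the three vectors $u_{\H}(p)$, $\H\ni p$, are pairwise independent and sum to zero, so the local system $w(p)=\sum_{\H\ni p}\alpha_{\H}u_{\H}(p)$ is always solvable with a one-parameter family, and you propagate a solution over the face-adjacency (triangular) lattice, reducing global consistency to cycles and then to triangles. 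Your version has the advantage of making the kernel of the representation map completely transparent -- the unique relation $\sum_{\H}u_{\H}=0$ -- which immediately yields the paper's remark that deleting any single $u_{\H_0}$ produces a free basis; the paper's version is closer in spirit to the exhaustion arguments reused for the kagome and octahedron examples. One bookkeeping point you should tighten: the propagation rule along an adjacency edge $\H_1\H_2$ is a priori ambiguous, since the shared bar has two joints $p,q$ and either equation could be used to pass from $\alpha_{\H_1}$ to $\alpha_{\H_2}$; this ``digon'' consistency is not a cycle of the triangular lattice and so is not covered by ``triangles generate the cycle space.'' It does follow from the flex condition on the shared bar (pair with $p-q$; the third face at each endpoint contributes nothing because its rotational flex is orthogonal to its pendant bar, and $\langle p-q,u_{\H_2}(p)\rangle=\langle p-q,u_{\H_2}(q)\rangle\neq 0$), which is essentially the computation you defer, so this is a repairable looseness rather than a gap.
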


\begin{proof}
Let $z$ be an infinitesimal flex of  $\C_{\rm hex}$. Subtracting a linear combination of the local flexes for the cells $A$ and $B$ indicated in Figure \ref{f:hexflex}, we may assume that the velocity field for $z$ at the origin is $0$. Consider the infinite path of framework points to the right of the origin, lying on cells $1, 2, 3, \dots$, as indicated in Figure  \ref{f:hexflex}.
We may subtract a multiple of the local flex for cell $1$ to ``fix" the first joint, that is, to create a flex with zero velocities at $O$ and at the first joint of the path. We may continue similarly to see that there is an infinite linear combination $w$ of the local flexes for cells $1, 2, 3,\ldots $ such that $z-w$ has zero velocity at each joint of the infinite path.

In the same manner we can subtract an infinite linear combination of the local flexes to the left of cells $A$ and $B$ to obtain a new flex which fixes all the framework points on the two-way infinite horizontal path, $\pi$ say,  through $O$. This may be achieved without making use of
the local flex $u_C$ for the other hexagon incident to the origin. 
We may assume then that $z$
has zero velocities on the joints on $\pi$. The line of joints on $y=1$ may be now be ``fixed" by subtraction of a unique infinite linear combination of the local flexes for the next horizontal line of cells, $a, b, c, \ldots $ etc. At this point the next horizontal line of joints is necessarily fixed by $z$, in view of the flex condition. Continuing this process with the horizontal hexagonal strips above and below $\pi$ we see that there is an infinite linear combination of the local flexes which is equal to the original flex. Also the coefficients of this infinite linear combination are determined uniquely, with the proviso that the local flex $u_C$ is not used in the representation.
It follows that  $\S_{\rm hex}$ is a crystal flex spanning set.
\end{proof}

\subsection{The 2D kagome framework}
We next consider the kagome framework in two dimensions, part of which is indicated in Figure \ref{f:kagbasisu}.
Let $a,b,c$ be the vertices of a triangular subframework of $\C_{\rm kag}$, with horizontal base edge $[p_a,p_b]$,
and let $\L_u^0$ be the (infinite) linear subframework  which contains this edge.
Note that there is an evident one-dimensional subspace of infinitesimal flexes of $\C_{\rm kag}$
that are supported on this linear subframework. Consider the nonzero element $u_0$  in this space which acts on alternate joints of $\L_u^0$ with unit norm velocity fields,   with
\[
u_0(p_a)=(\cos \pi/6, -\sin\pi/6),\,\,\,\,\, u_0(p_b)= (\cos \pi/6, \sin\pi/6).
\]
Let $u_n$, where $n\in \bZ$, be the parallel translates of $u$, naturally labelled, with $u_1$ supported by the first linear subframework $\L_u^1$ above $\L_u^0$. Also, let $\{v_n:n\in \bZ\}$ (resp. $\{w_n:n\in \bZ\}$) be obtained from
 $\{u_n:n\in \bZ\}$ by rotation about the centroid of the triangle $abc$ by $2\pi/3$ (resp. by $4\pi/3$).
 
\begin{center}
\begin{figure}[ht]
\centering
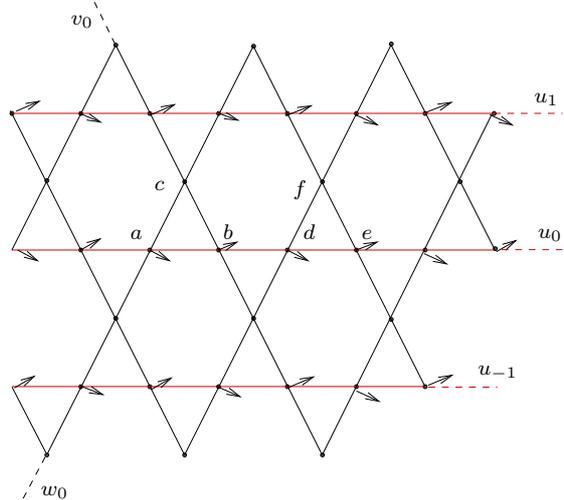
\caption{The horizontal band limited flexes $u_{n}$ for $n \in \bZ$.}
\label{f:kagbasisu}
\end{figure}
\end{center}

The next theorem is due to A. Sait \cite{sai} from which it follows that $cpx(\C_{\rm kag})=1$.

\begin{thm}\label{t:kagomefree}
The set $\B_{\rm kag} = \{u_n, v_n, w_n:n\in \bZ\}$ is a crystal basis for $\F(\Ckag;\bR)$.
\end{thm}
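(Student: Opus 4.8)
The plan is to check, in turn, the three properties that make $\B_{\rm kag}=\{u_n,v_n,w_n:n\in\bZ\}$ a crystal basis for $\F(\C_{\rm kag};\bR)$: (1) the enumeration of $\B_{\rm kag}$ tends to zero strictly, so that (by Lemma \ref{l:elem}) $\M(\B_{\rm kag})$ is a well-defined subspace of $\V(\C_{\rm kag};\bR)$; (2) $\F(\C_{\rm kag};\bR)\subseteq\M(\B_{\rm kag})$ (the spanning property); (3) every flex has a \emph{unique} such expansion; and finally that $\B_{\rm kag}$ is finitely generated by the space group action $\pi_{\fS(\C_{\rm kag})}$. Two of these are immediate. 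For (1), each joint of $\C_{\rm kag}$ lies on exactly two of the three families of lattice lines, and $u_0$ (hence each $u_n$) is supported on a single line; so at most two members of $\B_{\rm kag}$ are nonzero at any joint, and the enumerated sequence tends to zero strictly. For the crystallographic condition, note that every $u_n$ is a $\T$-translate of $u_0$, while $v_0$ and $w_0$ are the images of $u_0$ under the rotations by $2\pi/3$ and $4\pi/3$ about the centroid of $abc$ (which lie in $\fS(\C_{\rm kag})$); thus the single element $u_0$ generates all of $\B_{\rm kag}$ up to scalar multiples under $\pi_{\fS(\C_{\rm kag})}$. In particular $cpx(\C_{\rm kag})\le 1$, and since $\F(\C_{\rm kag};\bR)$ contains the independent family $\{u_n:n\in\bZ\}$ it is infinite dimensional and $cpx(\C_{\rm kag})=1$ once (2) is known.

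For the spanning property I would argue by an exhaustion of the plane in the spirit of the proof of Proposition \ref{p:hexspanningset}, driven by the following rigidity observation: the restriction of any flex to a single triangle of $\C_{\rm kag}$ is an infinitesimal rigid motion, so if a flex $z$ vanishes on two of the three triangles that share a vertex with a triangle $T$, then $z$ vanishes at two vertices of $T$ and hence on all of $T$. Given a flex $z$, one first subtracts a finite combination $\alpha u_0+\beta v_0+\gamma w_0$ to make $z$ vanish on the seed triangle $abc$; this is possible because $u_0,v_0,w_0$ restrict to $abc$ as rigid motions spanning the full three-dimensional space of infinitesimal rigid motions of $abc$ — indeed that span is invariant under the $2\pi/3$ rotation and is not the translation plane (since $u_0(p_a)\ne u_0(p_b)$), so it is everything. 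One then works outward through the successive horizontal bands of $\C_{\rm kag}$, at each newly reached triangle subtracting the uniquely determined scalar multiple of an as-yet-unused $u_n$, $v_m$ or $w_\ell$ needed to annihilate $z$ there, the flex conditions forcing the remaining triangles of each band to be annihilated automatically once enough of their neighbours are. Since the procedure exhausts every joint and uses each element of $\B_{\rm kag}$ at most once, the result is an infinite combination equal to $z$. I expect this to be the main obstacle: the delicate point is to order the exhaustion so that reaching each new triangle contributes exactly one new scalar degree of freedom, matched by exactly one fresh strip-flex whose further support lies among the not-yet-completed triangles, and to verify that this does cover the whole framework.

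Uniqueness is then handled directly. Suppose a formal sum $\sum_n\alpha_n u_n+\sum_m\beta_m v_m+\sum_\ell\gamma_\ell w_\ell$ represents the zero flex. Fix an up-triangle $T$ of $\C_{\rm kag}$ (a $\T$-translate of $abc$) with edges in the three lattice directions, and let $a_T,b_T$ be the two endpoints of its $u$-direction edge, so that $a_T$ lies on the $u$-line and the $w$-line of $T$, and $b_T$ on the $u$-line and the $v$-line of $T$; by construction $b_T$ is a live joint of the relevant $u$- and $v$-strip-flexes, with its $u$-velocity making angle $+\pi/6$ with the $u$-line. Evaluating the sum at $b_T$ gives $\alpha_{n}u_{n}(b_T)+\beta_{m}v_{m}(b_T)=0$, where $\L_u^{n},\L_v^{m}$ are the two lines through $b_T$; a short check on the four possible velocity directions (the choice of the $+\pi/6$ orientation at $b_T$ rules out the one coincidence in which a $u$-velocity and a $v$-velocity are parallel) shows $u_{n}(b_T)$ and $v_{m}(b_T)$ are a basis of $\bR^2$, whence $\alpha_{n}=\beta_{m}=0$. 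Evaluating at $a_T$ likewise forces $\alpha_{n}=\gamma_{\ell'}=0$ for the $w$-line $\L_w^{\ell'}$ through $a_T$. As $T$ ranges over all up-triangles, every $u$-, $v$- and $w$-line index occurs, so all the coefficients vanish. Combining (1)--(3) with the finite generation established above, $\B_{\rm kag}$ is a free basis for $\F(\C_{\rm kag};\bR)$ finitely generated by $\pi_{\fS(\C_{\rm kag})}$, i.e.\ a crystal basis, and $cpx(\C_{\rm kag})=1$.
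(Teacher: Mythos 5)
Your overall strategy is the same as the paper's: normalise on a seed triangle using $u_0,v_0,w_0$, exhaust the plane outwards, and then argue uniqueness; and several of your supporting steps are actually carried out in more detail than in the paper. Your verification that the enumeration tends to zero strictly (each joint lies on exactly two of the three line families) and your finite-generation argument are fine. Your uniqueness argument by evaluating a formal null sum at the two base vertices of an up-triangle and checking that the two surviving velocity vectors there are linearly independent is complete and is a genuine improvement on the paper, which merely asserts uniqueness. One small patch to your seed step: invariance under the $2\pi/3$ rotation plus ``not contained in the translation plane'' does not by itself force the span of the restrictions of $u_0,v_0,w_0$ to $abc$ to be all of the $3$-dimensional rigid-motion space, since the span could a priori be the line of rotations about the centroid; you should also note that $u_0(p_c)=0\neq u_0(p_a)$, so $u_0|_{abc}$ is not a rotation about the centroid, and then the decomposition into the (irreducible) translation plane and the rotation line does give the conclusion.

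The one genuine gap is exactly the point you flag yourself: you do not actually specify an order of exhaustion in which each newly reached joint is matched by exactly one fresh basis element. The paper closes this by a two-stage scheme rather than a band-by-band sweep over triangles. First, having killed $z$ on the seed triangle $abc$, it proceeds \emph{along the single line} $\L_u^0$, one joint at a time: the next joint $d$ is killed by a unique multiple of $w_1$, the next joint $e$ by a unique multiple of $v_{-1}$, and so on outward in both directions, so that only flexes $v_n,w_n$ with $n\neq 0$ are consumed and the result is a flex $z''$ vanishing on all of $\L_u^0$. Second, the flex condition together with the rigidity of the triangles based on $\L_u^0$ forces $z''$ to vanish at their apex joints, and then the restriction of $z''$ to the next horizontal line $\L_u^1$ is constrained to be a scalar multiple of the one-dimensional line flex, so a \emph{single} coefficient of $u_1$ suffices for that entire line; iterating upwards and downwards consumes each $u_n$ exactly once and exhausts every joint. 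If you adopt this ordering, the bookkeeping worry you raise (one new degree of freedom per step, matched by one unused generator) resolves itself, and your argument becomes a complete proof.
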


\begin{proof} Since the space group acts on the set it will be sufficient to show that $\B_{\rm kag}$ is a free basis. Write  $\L_u^n$ (resp. $\L_v^n, \L_w^n$), for $n \in \bZ$, for the supporting linear subframeworks for $u_n$ (resp. $v_n, w_n).$
Let $z$ be an infinitesimal flex of $\Ckag$. By subtracting an appropriate linear combination of the three infinitesimal flexes $u_0, v_0, w_0$ we may assume that the three velocity vectors $z_a, z_b, z_c$ for the joints $p_a, p_b, p_c$ of a central triangle subframework are zero. Subtracting an appropriate multiple of $w_1$ we may then arrange $z_d=0$, where $d$ is the next vertex in the direction from $a$ to $b$. Following this, subtracting an appropriate multiple of $v_{-1}$, we may arrange $z_e=0$ for the next vertex.
Continuing in this way we obtain  an infinite linear combination
\[
z' = \sum_{n\in \bZ}\beta_nv_n + \gamma_nw_n
\]
such that the infinitesimal flex $z''=z-z'$  imparts only zero velocities to the joints of $\L_u^0$.
From the flex condition and the rigidity of triangles we deduce that the flex velocities are also zero on the apex vertices
for the triangle subframeworks, such as   $def$, which are horixontal translates of $abc$. Now subtract an appropriate
multiple of $u_1$ so that the resulting flex is zero on $\L_u^1$. Continuing upwards in this manner, and similarly downwards, we obtain an infinite sum representation for $z''$ in terms of the
infinitesimal flexes $u_n$, $n\in \bZ$. 
Thus the original flex $z$ is an infinite sum of the basis vectors. Also the representation is unique and so the proof is complete.
\end{proof}

Combining this result with Theorem \ref{thm:crystal}(iii) we obtain the following description of the space of bounded infinitesimal flexes.

\begin{cor}
With $\B_{\rm kag} = \{u_n, v_n, w_n: n\in \bZ\}$ as above,
\[
\F_\infty(\C_{\rm kag};\bR) = \{u\in \F(\C_{\rm kag};\bR): u = \sum_{n\in \bZ}(\alpha_nu_n+\beta_nv_n + \gamma_nw_n) : (\alpha_n), (\beta_n), (\gamma_n)\in \ell^\infty(\bZ)\}.
\]
\end{cor}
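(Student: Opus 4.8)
The plan is to obtain the corollary as a direct instance of Theorem~\ref{thm:crystal}(iii), applied with $\C=\C_{\rm kag}$, $A=p(V)$, $X=\bR^2$ and $S=\B_{\rm kag}$. By Theorem~\ref{t:kagomefree} the set $\B_{\rm kag}$ is a crystal flex basis for $\C_{\rm kag}$, hence in particular a crystal flex spanning set, so the whole content of the argument is the verification of the three hypotheses $(i)$--$(iii)$ of Theorem~\ref{thm:crystal}(iii): that $\B_{\rm kag}$ is bounded, has the local basis property, and contains no finitely supported element. Once these are checked, Theorem~\ref{thm:crystal}(iii) gives $\F_\infty(\C_{\rm kag};\bR)=\{f\in\F(\C_{\rm kag};\bR): f=\sum_{m}\alpha_m f_m,\ (\alpha_m)\in\ell^\infty(\bN)\}$ for any enumeration $\B_{\rm kag}=\{f_m:m\in\bN\}$, and the displayed identity is merely this statement rewritten under a bijection $\bN\cong\bZ\times\{u,v,w\}$, using that such a scalar sequence lies in $\ell^\infty(\bN)$ precisely when the three $\bZ$-indexed subsequences $(\alpha_n),(\beta_n),(\gamma_n)$ each lie in $\ell^\infty(\bZ)$.

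Boundedness is immediate: each velocity vector of $u_n$ (and likewise of $v_n$, $w_n$) is $0$ or a unit vector, so $\|u_n\|_\infty=\|v_n\|_\infty=\|w_n\|_\infty=1$ and $\sup_{f\in\B_{\rm kag}}\|f\|_\infty=1$. For $(ii)$ and $(iii)$ I would use the structure of $\C_{\rm kag}$: its joints lie on three families of parallel ``supporting lines'' $\L_u^n,\L_v^n,\L_w^n$, in directions at $60^\circ$ to one another, with each joint lying on exactly two of these lines, one from two of the three families, and with $u_n$ (resp.\ $v_n,w_n$) supported on the joints of $\L_u^n$ (resp.\ $\L_v^n,\L_w^n$). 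Hence for a joint $p$ the only members of $\B_{\rm kag}$ that can be nonzero at $p$ are the two line-flexes associated with the two lines through $p$, so $|(\B_{\rm kag})_p|\le 2$; the local basis property then amounts to showing that these two velocity vectors at $p$ are nonzero and linearly independent. Since the space group $\fS(\C_{\rm kag})$ acts transitively on the three translation classes of joints, it suffices to check this at a single representative, say the vertex $a$ of the central triangle $abc$: there $a$ lies on $\L_u^0$ and on $\L_w^0$, and $u_0(p_a),w_0(p_a)$ are nonzero unit vectors at angles $-\pi/6$ and $-\pi/2$ respectively (the second obtained from $u_0(p_b)$ by the $4\pi/3$ rotation about the centroid of $abc$), hence independent. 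Condition $(iii)$ then comes for free: each supporting line is infinite and the associated flex is nonzero at every one of its joints, so no element of $\B_{\rm kag}$ has finite support.

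The only genuinely geometric part, and the step I would expect to require the most care, is the local basis property: confirming that at \emph{every} joint the two incident line-flexes carry non-parallel velocity vectors. This reduces, by the transitivity of $\fS(\C_{\rm kag})$ on joint classes, to the short explicit check indicated above carried out at one joint of each of the three types; everything else is immediate from the definitions of $u_0,v_0,w_0$ and from Theorem~\ref{t:kagomefree}. With these verifications in place, the corollary is precisely the conclusion of Theorem~\ref{thm:crystal}(iii).
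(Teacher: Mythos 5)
Your proposal is correct and follows exactly the paper's route: the corollary is obtained by combining Theorem~\ref{t:kagomefree} with Theorem~\ref{thm:crystal}(iii), the paper itself giving only this one-line citation. Your explicit verification of the three hypotheses (boundedness, the local basis property via the two non-parallel unit velocity vectors at each joint coming from the two supporting lines through it, and the absence of finitely supported elements) is the content the paper leaves implicit, and it is carried out correctly.
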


\subsection{The octahedron crystal framework $\C_{\rm Oct}$.}\label{ss:octahedron} We now consider some   crystal frameworks in three dimensions. 

Write $\C_{\rm Oct}$ for the crystal framework of corner-connected regular octahedra, in the symmetric placement for which there is one translation class for them.
The construction of a crystal flex basis for  $\C_{\rm Oct}$ may be determined by viewing 
$\C_{\rm Oct}$ as the union of countably many copies of the 2D grid framework whose orientations and joint connections are consistent with
Figure \ref{f:octagonmotifcolouredD}.

\begin{center}
 \begin{figure}[ht]
 \centering
 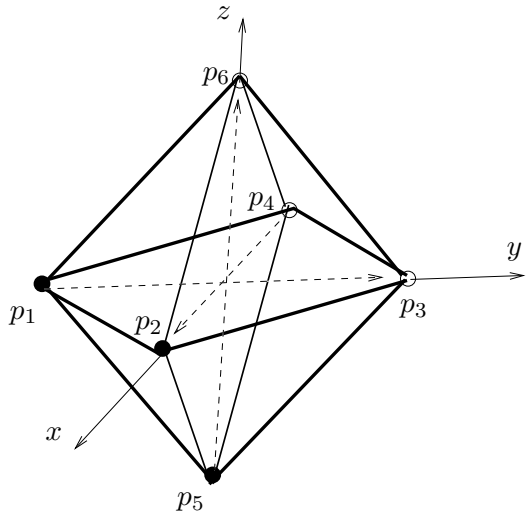
 \caption{A motif of $3$ joints and $12$ bars  for the framework  $\C_{\rm Oct}$.}
 \label{f:octagonmotifcolouredD}
 \end{figure}
 \end{center}

To be precise, assume that the period vectors for  $\C_{\rm Oct}$
are $(2,0,0)$, $(0,2,0)$, $(0,0,2)$ and let $\C_z$ denote the grid framework in the $xy$-plane
which contains the adjacent joints $p_1=(0,-1,0)$, $p_2=(1,0,0)$, $p_3=(0,1,0)$ and $p_4=(-1,0,0)$. These joints lie on a $4$-cycle of bars in the $xy$-plane. Similarly, let  $\C_x$ denote the 2D grid framework in the $yz$-plane
which contains the adjacent joints $p_1$, $p_5=(0,0,-1)$, $p_3$ and $p_6=(0,0,1)$, and let $\C_y$ denote the 2D grid framework parallel to the  $zx$-plane which contains the adjacent joints $p_2$, $p_5$, $p_4$ and $p_6$. 

Let $\C^n_x$ be the translated bar-joint frameworks $\C_x + (2n,0,0),$ for $n\in \bZ$, and similarly define $\C^n_y$ and $\C^n_z$. Then $\C_{\rm Oct}$ is the union of all of these frameworks, that is, it is the bar-joint framework whose joint set is the union of the joints (without multiplicity) and whose set of bars is the union of all the bars.

Let us also define $\C_{\rm Oct}^+$ as the augmented framework in which each regular octahedron is augmented by $3$ bars parallel to the coordinate axes. In view of the infinitesimal rigidity of a convex octahedron it follows that the vector spaces
$\F(\C_{\rm Oct};\bR)$ and $\F(\C_{\rm Oct}^+;\bR)$ are equal.

Let us write $\C_{\rm sq}$ for a  2D bar-joint framework composed of corner connected rigid squares. This is obtained from the 2D grid framework by adding an edge to each alternate square. Let  $\C_{\rm sq}^+$ be the related framework which has both cross diagonals added to the rigid squares.
We may thus view $\C_{\rm Oct}^+$  as the union of copies of  $\C_{\rm sq}^+$, where these copies are augmentation frameworks, $\tilde{\C}^n_x$, $\tilde{\C}^n_y$ and $\tilde{\C}^n_z$ say, of the frameworks ${\C}^n_x$, ${\C}^n_y$ and ${\C}^n_z$.
It follows immediately that the alternation flex $a$ of $\tilde{\C}^n_x$ extends to
a flex $a^{x}_n$ of $\C_{\rm Oct}^+$ which has zero velocities on all the other vertices. Let us similarly define the ``local alternation flexes" $a^{y}_n$ and $a^{z}_n$ for $n\in \bZ$.

Let $r^x, r^y, r^z$ be infinitesimal flexes  for axial rotations   of $\C_{\rm Oct}$ about the principal axes of the central octahedron. These infinitesimal flexes act on the entire framework and are unbounded flexes.  Also we assume the normalisation such that for $\sigma = x,y$ or $z$ the restrictions of  $a^{\sigma}_n$ to the octahedron meeting the $\sigma$-axis agrees with the restriction of $r^\sigma$.

Finally, let $\vec{x}$ be the velocity field in $\F(\C_{\rm Oct};\bR)$ with 
joint velocities $(1,0,0)$ and let $\vec{y}$ and $\vec{z}$ be analogous velocity fields for the $y$ and $z$ directions.

In the next proof we use the following elementary flex projection principle. If the bar $[p_a, p_b]$ lies in a plane $\P$ of  $\bR^3$ and if the joint velocities
$v_a, v_b$ in $\bR^3$ give an infinitesimal flex of the bar $[p_a, p_b]$ then the $\P$ components $v_a', v_b'$ of $v_a, v_b$ also give an infinitesimal flex of the bar. We say that such a flex is an \emph{in-plane} flex when the plane in question is understood.

\begin{thm}\label{t:octahedron} The set $\S$ of velocity fields 
\[
\{r^x, r^y, r^z\} \cup \{\vec{x}, \vec{y}, \vec{z}\}\cup \{a^{x}_n, a^{y}_n, a^{z}_n: n\in \bZ\}
\] is an essential crystal flex  basis for $\F(\C_{\rm Oct};\bR)$. 
\end{thm}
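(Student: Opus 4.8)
The plan is to prove the two halves of the assertion separately: that $\S$ is a free basis for $\F(\C_{\rm Oct};\bR)$, and that the subset $\S_0=\S\setminus\{r^x,r^y,r^z\}$ is a crystal flex basis for a subspace $W$ complementary to the three-dimensional rotation space $\spn\{r^x,r^y,r^z\}$. Note first that $\S$ tends to zero strictly, since at any joint at most two of the $a^\sigma_n$ are nonzero (the joint lies in at most two coordinate planes of the lattice) together with the six globally supported fields $\vec x,\vec y,\vec z,r^x,r^y,r^z$. Throughout I would work with $\C_{\rm Oct}^+$, which has the same flex space as $\C_{\rm Oct}$; as each octahedron of $\C_{\rm Oct}^+$ is infinitesimally rigid, every flex $z$ restricts to a rigid-body field on each octahedron — a translation together with an angular velocity — subject only to the requirement that octahedra sharing a corner assign that corner the same velocity.

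For the free-basis half I would argue by exhaustion, in the manner of Proposition~\ref{p:hexspanningset} and Theorem~\ref{t:kagomefree}. Given $z$, first subtract the unique combination of $\vec x,\vec y,\vec z,r^x,r^y,r^z$ agreeing with $z$ on the central octahedron $O$; this is possible because the joints of $O$ span $\bR^3$ affinely, so those six fields restrict there to a basis of the rigid-motion space of $O$. Thus assume $z$ vanishes on $O$. I would then propagate outward through the three families of lattice planes. The structural fact to exploit is that the octahedra in a fixed coordinate plane are the rigid squares of a copy of $\C_{\rm sq}^+$, namely one of the $\tilde\C^n_x,\tilde\C^n_y,\tilde\C^n_z$; so by the flex-projection principle and the corner compatibilities, the in-plane rotation amplitudes of those squares — equivalently the component of the angular velocity field of $z$ normal to that plane — are determined up to one ``alternation'' parameter (carried by $a^\sigma_n$), and, for the plane through $O$, one rotation parameter (carried by $r^\sigma$). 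Since $a^\sigma_n$ is supported on a single coordinate plane and imparts to the octahedra there only a rotation about the $\sigma$-axis, the three families $\{a^x_n\},\{a^y_n\},\{a^z_n\}$ do not interfere: subtracting a uniquely determined multiple of each $a^\sigma_n$ kills the entire angular velocity field of $z$ without re-disturbing rotations already treated. A flex with zero angular velocity on every octahedron assigns each octahedron a pure translation, and the corner compatibilities together with $z|_O=0$ then force all of these, hence $z$, to vanish. This produces a strictly convergent expansion of the original flex over $\S$, and since each scalar subtracted is read off from the values of $z$ at specified joints the expansion is unique, so $\S$ is a free basis.

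For the essential-basis half, put $W=\M(\S_0)$; it is a subspace of $\F(\C_{\rm Oct};\bR)$ because the flex space is closed and strict convergence implies pointwise convergence. Uniqueness of representations gives $\F(\C_{\rm Oct};\bR)=W\oplus\spn\{r^x,r^y,r^z\}$, so $W$ is complementary to the rotation space and $\S_0$ is a free basis for it. It remains to check that $\S_0$ is finitely generated by $\pi_{\fS(\C_{\rm Oct})}$: the lattice translations send each $a^\sigma_n$ to $\pm a^\sigma_{n'}$ and fix each of $\vec x,\vec y,\vec z$ up to sign, while the point-group symmetries permute the three families $\{a^x_n\},\{a^y_n\},\{a^z_n\}$ among themselves (up to sign and reindexing) and permute $\vec x,\vec y,\vec z$ up to sign; hence the action acts on $\S_0$ and $\S_0$ is generated by the finite set $\{\vec x,\vec y,\vec z,a^x_0,a^y_0,a^z_0\}$. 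Therefore $\S_0$ is a crystal flex basis for $W$, and $\S$ is an essential crystal flex basis for $\F(\C_{\rm Oct};\bR)$.

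The step I expect to be the main obstacle is the exhaustion bookkeeping in the free-basis half: proving rigorously that, plane by plane, the corner compatibilities and the flex-projection principle pin down the angular velocity field of $z$ up to the alternation amplitudes and the three rotation constants — in particular ruling out angular velocity fields that grow linearly across the lattice, for which the cubic symmetry should force the growth rate to vanish — and ordering the subtractions so that dealing with one plane leaves the others undisturbed. The remaining ingredients (the initial rigid-motion normalization, uniqueness of the expansion, and the space-group bookkeeping) are routine.
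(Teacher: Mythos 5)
Your setup coincides with the paper's: pass to $\C_{\rm Oct}^+$, normalise by a rigid-motion flex so that $z$ vanishes on the central octahedron, and use the flex projection principle to read off the in-plane flexes of the coordinate planes (each a copy of $\C_{\rm sq}^+$, whose in-plane flex space is spanned by two translations, a rotation and an alternation). After that the two arguments genuinely diverge. The paper makes no attempt at a global, plane-by-plane account of the angular velocity field. Instead it first pins $z$ down on the three axial lines of joints through the central octahedron (using that the in-plane flex of each coordinate plane through $O$ which vanishes on the central square is a scalar multiple of $a_0^\sigma-r^\sigma$), then observes that every other octahedron meeting a coordinate axis has its two axial joints fixed and is therefore forced to rotate about that axis, hence agrees there with a multiple of some $a_n^\sigma$; subtracting those multiples makes $z$ vanish on the whole triple tower $\T_{Oct}$ of axial octahedra. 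The proof is completed by an incremental rigidity argument: the framework is built up from $\T_{Oct}$ by attaching octahedra in groups of four so that at each stage a flex vanishing on the part already built must vanish on the newly attached octahedra. This sidesteps entirely the global consistency analysis of rotation amplitudes that your route requires.

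The gap in your version is exactly the step you flag and then leave as a placeholder: you assert that after subtracting the rotations $r^\sigma$ and the alternation amplitudes, the residual angular velocity field vanishes, i.e.\ that the rotation constant of the in-plane flex of each coordinate plane not through $O$ is forced to equal the (now zero) rotation constant of the parallel plane through $O$. This is the whole content of the exhaustion and it is not automatic: writing $c^{(z)}_n$ for the rotation amplitude of the plane in the $z$-family with index $n$ (and similarly for $x$, $y$), the corner compatibilities between consecutive planes only yield, plaquette by plaquette, that the increments $c^{(z)}_{n+1}-c^{(z)}_n$, $-(c^{(x)}_{i+1}-c^{(x)}_i)$ and $-(c^{(y)}_{j+1}-c^{(y)}_j)$ are all equal to a single constant (a uniform ``twist rate''), and one needs the third family of plaquettes to force that constant to equal its own negative, hence to vanish. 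That holonomy computation does close your argument, and your route is then viable and arguably more structural than the paper's, since it exhibits the flex space explicitly as rigid motions plus independent alternation amplitudes; but as written the decisive step is an acknowledged obstacle rather than a proof, whereas the paper's triple-tower exhaustion renders it unnecessary. The strict-convergence remark, the uniqueness of coefficients, and the essential-basis half (finite generation of $\S_0$ by the space group) are fine and match the paper's brief treatment.
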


\begin{proof}The subset $\S_0\subseteq \S$ of non-rotational flexes has the crystallographic group action property and so it will suffice to show that $\S$ is a free basis for $\C_{\rm Oct}$. Equivalently, we show that $\S$ is a free basis for  $\C_{\rm Oct}^+$.

Let $z$ be an infinitesimal flex in $\F(\C_{\rm Oct}^+;\bR)$. There is a linear combination
$z_{\rm rig}$ 
of $\vec{x}, \vec{y}, \vec{z}, r^x, r^y, r^z$ which agrees with $z$ on the joints $p_1, \dots , p_6$. Replacing $z$ by $z - z_{\rm rig}$, for some rigid motion infinitesimal flex $z_{\rm rig}$, we may assume that these velocities for $z$ are zero. 

We now make use of the flex projection principle. Note that the velocity field $z_{xy}$ given by the $xy$-plane projection of the joint velocities $z(p)$, for joints in $\tilde{\C}_{z}^0$, is an infinitesimal flex of  $\tilde{\C}_{z}^0$. 
 Since $z$ has zero velocity vectors on the central octahedron it follows that the $xy$-plane projection of $z$ also has zero velocities on the central square of $\tilde{\C}_{z}^0$. It follows that this  in-plane flex is \emph{equal} to the restriction of a scalar multiple of $a_0^z-r^z$. 
In this way we obtain scalar multiples
$\alpha_0 (a_0^z-r^z), \beta_0 (a_0^x-r^x)$ and $\gamma_0  (a_0^y-r^y)$
which provide the in-plane flexes of $z$ for the planes $z=0$, $x=0$ and
$y=0$. 

Consider now the tower subframework given by the tower of octahedra whose connecting joints lie on the $z$-axis. Since $z$ is zero on the central octahedron  supported by $p_1, \dots ,p_6$, denoted $O_{(0,0,0)}$, it follows that the $z$ component of the
joint velocity for a joint on this line is zero. It also follows that
there is an infinitesimal  flex $\beta_0 (a_0^x-r^x) + \gamma_0  (a_0^y-r^y)
$ with joint velocities agreeing with those of $z$ for the joints on the axial line. It follows similarly that there is a flex of the form
\[
w= \alpha (a_0^z-r^z) + \beta_0 (a_0^x-r^x) +\gamma_0  (a_0^y-r^y)
\] 
with this agreement property for the three axial lines through
$O_{(0,0,0)}$.

Replacing $z$ by $z-w$ we may assume that $z$ is zero on $O_{(0,0,0)}$ and on the joints of the three axial lines of $O_{(0,0,0)}$. Note that the restriction of such a flex $z$ to any other octahedron $O$ with an axis on the coordinate axes must be an infinitesimal rotation flex of the octahedron about this axis. Also each such flex of an individual octahedron $O$, on the $\sigma$-axis say, agrees with the restriction of a scalar multiple of  the local alternation flex
$a^{\sigma}_n$,  for some $n \neq 0$. Evidently these infinitesimal flexes act on distinct octahedra  on the axial lines.

It follows that there is an infinite linear combination of these flexes, $w_2$ say, whose restriction to any octahedron on a coordinate axis is equal to the restriction of $z$.
Replacing $z$ by $z-w_2$ we may assume that $z$ is zero on this triple tower, $\T_{Oct}$ say. We now observe that any infinitesimal flex which is zero on $\T_{Oct}$ is the zero flex. This follows from the fact that the entire framework may be built up from $\T_{Oct}$ by attaching octahedra in groups of four such that at each stage every flex which is zero on $\T_{Oct}$ is the zero flex. 
It follows that $z$ must be identically zero. Thus every velocity field $z$ in $\F(\C_{\rm Oct};\bR)$ is an infinite linear combination of the vectors in the set $\S$.

Note that $\S$ is a countable set of velocity fields which tend to zero strictly and is a free spanning set for a vector space of infinitesimal flexes. Also the  scalar coefficients in the identifications above are determined uniquely by
the joint velocities of the flex $z$. Thus, $\S$ is a free infinitesimal flex basis, as required.
\end{proof}

\begin{rem}
From a simple adaptation of the above proof it follows that the space
$\F_\infty(\C_{\rm Oct};\bR)$ of bounded infinitesimal flexes is the space of infinitesimal flexes of the form
\[
 u = \alpha\vec{x}+ \beta\vec{y}+ \gamma\vec{z} + \sum_{n\in \bZ} \alpha_na^{x}_n +\beta_na^{y}_n+\gamma_n a^{z}_n, \quad
 (\alpha_n), (\beta_n), (\gamma_n) \in \ell^\infty(\bZ)
\]

The octahedral framework serves to model the rigid unit atomic structure of the crystal perovskite. Its zero mode (or RUM mode) phonon spectrum  was among early examples  to be computed by  experiment and simulation. Other examples were  quartz and cristobalite. See Giddy et al \cite{gid-et-al}. The RUM spectrum for crystobalite has complete linear structure in $[0, 2\pi)^3$, being  the union of the three line segments $(t,0,0)$, $(0,t,0)$, $(0,0,t)$, for $0\leq t <2\pi$. The existence of this linear structure in the RUM spectrum also follows from the band-limited flexes in $\S$.
\end{rem}

\subsection{The 3D kagome framework}
The kagome framework in $3$ dimensions, $\C_{\rm Kag}$, has the structure of the kagome net, a network of regular tetrahedral units connected pairwise at their vertices. It may be constructed from the 2D kagome
net by first completing its triangles to tetrahedra with alternating up and down orientations to create a horizontal layer framework, and then joining translational copies of such layers. The period vectors  determine a parallelepiped unit which is occupied by a single tetrahedron  and one can determine a motif for $\C_{\rm Kag}$ with a corresponding 12-by-12 matrix-valued symbol function on the 3-torus \cite{owe-pow-crystal}. However the simple layer structure description, together with the symmetry of $\C_{\rm Kag}$, are sufficient to determine a crystal basis by an exhaustion argument, as before, as we now sketch.

For $m\in \bZ$, let $\{u_n^m, v_n^m, w_n^m:n\in \bZ\}$  be the crystal flex bases for the in-plane infinitesimal flex space of the $m^{th}$ layer and note that the infinitesimal flexes in these sets extend (by zero velocity specifications) to infinitesimal flexes of $\C_{\rm Kag}$. If $w$ is an arbitrary infinitesimal flex of $\C_{\rm Kag}$ we may subtract an appropriate infinite linear combination of these flexes to replace $w$ by a flex $w_1$ with the property that the velocity vectors at each joint in the horizontal layers is in the vertical direction. Let us make a distinction between the ``thick" horizontal layer frameworks and their ``thin" kagome subframeworks (which support the added tetrahedra). For layer $m=0$, fix a triangle subframework in the thin layer to be a base triangle. Its tetrahedron has 3 additional bars. Each of these bars
indicates the direction of a linearly localised flex, each of which is a space group element image of $u_0^0$. We may choose a linear combination of these flexes and subtract them from $w_1$ to obtain a flex $w_2$ so that the tetrahedron has zero velocities (under $w_2$) at all 4 of its joints.

Continuing in this way we see that, as in the 2D kagome, there is a crystal flex basis with a single generator and it consists of the space group element images of $u_0^0$. In particular, $cpx(\C_{\rm Kag}) = 1$.

\subsection{The bipyramid framework $\C_{\rm Bipyr}$}
We next define the bipyramid crystal framework $\C_{\rm Bipyr}$ and compute the transfer function associated with the natural primitive periodic structure. An appropriate basis of period vectors takes the form
\[
(1,0,0),\,\,\, (1/2,\sqrt{3}/2,0), \,\,\, (0,0,2h)
\]
and we consider a motif $(F_v, F_e)$ where  $F_v$ consists of the two joints $p_1=(0,0,0), p_2= (1/2,\alpha,-h)$ and $F_e$ consists of the $9$ bars $p_ip_j$ associated with the $9$ directed bars indicated in Figure \ref{f:bipyramidVectors}. Here $h=\sqrt{2/3}$ is the height of a regular tetrahedron of unit sidelength and $\alpha = \sqrt{3}/6$ is the distance from $B$ to $C$, that is, the distance of the centroid of a unit sidelength equilateral triangle to the triangle boundary.

\begin{center}
\begin{figure}[ht]
\centering
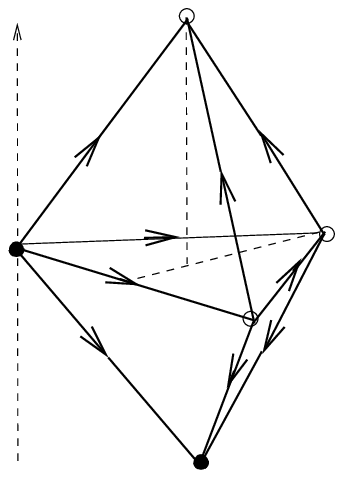
\caption{}
\label{f:bipyramidVectors}
\end{figure}
\end{center}

The following data for the motif edges  will suffice for the computation of the associated transfer function $\Psi_{\rm Bipyr}(z)$.
{\small
\begin{center}
\begin{tabular}{|c|c|c|c|}
\hline
$i$&$p(e_i)$&$k=(k_1,k_2,k_3)$ &$l= (l_1,l_2,l_3)$ \\
\hline
$1$&$(-1,0,0)$&$(0,0,0)$ &$(1,0,0)$ \\
\hline
$2$&$(1/2,-\sqrt{3}/2,0)$&$(1,0,0)$ &$(0,1,0)$ \\
\hline
$3$&$(-1/2,-\sqrt{3}/2,0)$&$(0,0,0)$ &$(0,1,0)$ \\
\hline
$4$&$(-1/2,-\alpha,h)$&$(0,0,0)$ &$(0,0,0)$ \\
\hline
$5$&$(1/2,-\alpha,h)$&$(1,0,0)$ &$(0,0,0)$ \\
\hline
$6$&$(0,2\alpha,h)$&$(0,1,0)$ &$(0,0,0)$ \\
\hline
$7$&$(-1/2,-\alpha,-h)$&$(0,0,0)$ &$(0,0,1)$ \\
\hline
$8$&$(1/2,-\alpha,-h)$&$(1,0,0)$ &$(0,0,1)$ \\
\hline
$9$&$(0,2\alpha,-h)$&$(0,1,0)$ &$(0,0,1)$ \\
\hline
\end{tabular}\\
\medskip
\end{center}
}
A factor-periodic infinitesimal flex, with factor  $\omega=(\omega_1,\omega_2, \omega_3)\in\bC_*^3$, 
has the form
\[
u = (u_{\kappa,k}) = (\omega^k{a}) = (\omega_1^{k_1}\omega_2^{k_2}\omega_3^{k_3}a), \quad k\in \bZ^3,
\]
where
$a=(u_{1,x}, u_{1,y}, u_{1,z}, u_{2,x}, u_{2,y}, u_{2,z}) \in \bR^6$
is the velocity field for the motif joints, being the list of the coordinate velocities
of $u$ at the joints $p_1 $ and $p_2$. 
We have,
\[
\Psi_{\rm Bipyr}({z}^{-1})=\kbordermatrix{
& v_{1,x} &  v_{1,y} & v_{1,z} & \vrule & v_{2,x} &  v_{2,y}& v_{2,z}\\
e_1 &-(1-z_1) &0 & 0 & \vrule & 0 &  0& 0\\
e_2 & (z_1-z_2)/2& -(z_1-z_2)\sqrt{3}/2& 0 & \vrule & 0 &  0 & 0\\ 
e_3 & (1-z_2)/2 & -(1-z_2)\sqrt{3}/2&0 & \vrule & 0 &  0 & 0\\ 
e_4 & -1/2&-\alpha & h & \vrule & 1/2 &  \alpha&-h  \\
e_5 &z_1/2 &-\alpha z_1 & z_1h & \vrule & -1/2 &  \alpha &-h \\ 
e_6 & 0 &2z_2\alpha &z_2h & \vrule & 0 &  -2\alpha & -h\\ 
e_7 &-1/2  &-\alpha &-h & \vrule & z_3/2&  z_3\alpha &z_3h \\
e_8 & z_1/2 &-z_1\alpha  &-z_1h & \vrule & -z_3/2 &  z_3\alpha &z_3h \\
e_9 & 0 &2z_2\alpha &-z_2h & \vrule & 0 &  -2z_3\alpha & z_3h\\ 
}
\]

The submatrix of $\Psi_{\rm Bipyr}(z^{-1})$ for columns $3,4,5$ is a $9 \times 3$ function matrix. We note that the first three rows are zero. Indeed, the three edges $e_1=(v_1,(0,0,0))(v_1,(1,0,0))$, 
$e_2=(v_1,(1,0,0))(v_1,(0,1,0))$ and $e_3=(v_1,(0,0,0))(v_1,(0,0,1))$ are of $v=w$ type and so the entries for columns
$4,5,6$ are zero. Also the entries for column $3$ are zero since the $z$-component is zero for the vectors $p(e_i)$, for $i=1,2,3$. 

We next determine the set of factors $\omega \in \bC_*^3$ for which $\Psi_{\rm Bipyr}(\omega^{-1})a=0$ for some nonzero vector of the form
$a=(0, 0, u_{1,z}, u_{2,x}, u_{2,y}, 0)$.
In doing so we shall determine the factor periodic infinitesimal flexes $u$ with the property that their velocity fields impart only vertical velocities to the joints that lie in the horizontal copies of the fully triangulated framework $\C_{\rm tri}$ and impart only horizontal velocities to the other joints, the polar joints of the constituent bipyramids. In view of the latter condition we refer to these infinitesimal flexes simply as \emph{sheering flexes}.
The required solutions for $\omega$ are the values of $z=(z_1,z_2,z_3)$ for which the submatrix $\Psi_{\rm sub}(z^{-1})$ has nonzero kernel, where
\[\Psi_{\rm sub}(z^{-1})=\kbordermatrix{
 & v_{1,z} & \vrule & v_{2,x} &  v_{2,y}\\
e_4 & h & \vrule & 1/2 &  \alpha\\
e_5 & z_1h & \vrule & -1/2 &  \alpha\\ 
e_6 & z_2h & \vrule & 0 &  -2\alpha\\ 
e_7 & -h & \vrule & z_3/2&  z_3\alpha\\
e_8 & -z_1h & \vrule & -z_3/2 &  z_3\alpha\\
e_9 & -z_2h & \vrule & 0 &  -2z_3\alpha\\ 
}
\]
Noting  the similarity between the rows for $e_6$ and $e_9$ it follows that if $z_3$ is not equal to $-1$ then the kernel is trivial.
On the other hand if $z_3=-1$ then the rows for $e_7, e_8, e_9$ are the negatives of the rows for $e_4, e_5, e_6$. We conclude that $(\omega_1,\omega_2, \omega_3)$ is a periodicity multifactor for a sheering flex $u$ 
if and only if $\omega_3=-1$ and the determinant of the $3 \times 3$ submatrix for the first $3$ rows is zero at $\omega_1, \omega_2$. This determinant is $\alpha h(1 +z_1 +z_2)$. 
We conclude that when $z_1$ and $z_2$ are unimodular then there are exactly two solutions and so the RUM spectrum
$\Omega(\C_{\rm Bipyr})$ contains the set
\[
\{(1,1,1),\, (e^{2\pi i/3},e^{4\pi i/3},e^{\pi i}),\,
(e^{4\pi i/3},e^{2\pi i/3},e^{\pi i})\}.
\]

We also note that 
there are unbounded geometric flexes associated with the solutions $(\omega_1, \omega_2)$ of the equation of $1 +z_1 +z_2=0$, where
$\omega_1 = r$, $\omega_2 = -(1+r)$, and $0<r<1$. These infinitesimal flexes bear some analogy with the unbounded flex of the kite framework considered in Example \ref{eg:kite}. Since there are
uncountably many linearly independent unbounded flexes of this type the infinitesimal flex space $\F(\C_{\rm Bipyr}; \bR)$ is infinite dimensional.

\subsection{Two crystal extensions of $\C_{\rm Bipyr}$}
Consider the crystal framework  $\C_{\rm Bipyr}^e$ which is obtained from  $\C_{\rm Bipyr}$ by adding bonds of length $1$ between polar joints whenever this is possible. 
It is straightforward to see that  $\C_{\rm Bipyr}^e$ is sequentially infinitesimally rigid (\cite{kit-pow}, \cite{owe-pow-crystal}) and so is infinitesimally rigid in the strongest possible sense.
Note that every 2D subframework
parallel to the $xy$-plane is a copy of the fully triangulated framework $\C_{\rm tri}$. 
Let  $\C_{\rm Bipyr}^+$ be obtained from  $\C_{\rm Bipyr}$ by   adding edges and vertices so that \emph{every} triangle in the horizontal copies of $\C_{\rm tri}$ in $\C_{\rm Bipyr}$ is the equator of a bipyramid. This framework may be described as having horizontal layers of maximally packed bipyramids.
The joints are once again of two types, with polar joints having degree $6$, as before, and equatorial joints having degree $18$. One can observe that in fact the two sheering RUMs of $\C_{\rm Bipyr}$
extend to this framework and so, despite the edge rich structure, $\C_{\rm Bipyr}^+$ is not infinitesimally rigid.

\bibliographystyle{abbrv}
\def\lfhook#1{\setbox0=\hbox{#1}{\ooalign{\hidewidth
  \lower1.5ex\hbox{'}\hidewidth\crcr\unhbox0}}}

\end{document}